\newif\ifprintVersion   % Defines a binary variable that signals whether the document is prepared for physical or digital print.
\newif\ifprofessionalPrint % Defines a binary variable that signals whether the print will be done by a professional printing service that requests extra margin for page cutting and is not bound to paper formats like A4.
\newif\iffancyTheorems  % Defines a binary variable that signals whether theorems are formatted in the classical style or in a new format that better suits the overall flavor of this thesis.
\newif\ifboldNumberSets % Defines a binary variable that signals whether the variables for number sets (like N or R) should be in bold. If not, they are in blackboard bold instead.
\newif\ifbachelorThesis % Defines a binary variable that signals whether this thesis is a bachelor thesis (true) or a master thesis (false).
\newcommand*{\printTitle}{}
\newcommand*{\printGermanTitle}{}
\newcommand*{\myTitle}[2]{\renewcommand*{\printTitle}{#1}\renewcommand*{\printGermanTitle}{#2}}
\newcommand*{\printTitleBold}{\textbf{\printTitle}}
\newcommand*{\printAuthor}{}
\newcommand*{\myName}[1]{\renewcommand*{\printAuthor}{#1}}
\newcommand*{\printProgram}{}
\newcommand*{\myProgram}[1]{\renewcommand*{\printProgram}{#1}}
\newcommand*{\printDateReceived}{}
\newcommand*{\dateOfHandingIn}[1]{\renewcommand*{\printDateReceived}{#1}}
\newcommand*{\printSubject}{}
\newcommand*{\mySubject}[1]{\renewcommand*{\printSubject}{#1}}
\newcommand*{\printKeywords}{}
\newcommand*{\myKeywords}[1]{\renewcommand*{\printKeywords}{#1}}
\newcommand*{\printNameOfSupervisor}{}
\newcommand*{\nameOfMySupervisor}[1]{\renewcommand*{\printNameOfSupervisor}{#1}}
\newcommand*{\printAdditionalExaminers}{}
\newcommand*{\additionalExaminers}[1]{\renewcommand*{\printAdditionalExaminers}{#1}}
\newlength{\extraborderlength}
\newcommand*{\extraBorder}[1]{\setlength{\extraborderlength}{#1}}
\newlength{\mybindingcorrection}
\newcommand*{\bindingCorrection}[1]{\setlength{\mybindingcorrection}{#1}} % Contains commands that are used for certain information that is printed.
\newlength{\myparindent}
\newlength{\myparskip}
\definecolor{stroke1}{HTML}{2574A9} % This color is used as the standard color to highlight things.
\colorlet{captionlabel}{black}
\colorlet{footerpagenr}{black}
\colorlet{footerchapter}{stroke1}
\colorlet{footerchaptername}{black}
\colorlet{footersection}{stroke1}
\colorlet{footersectionname}{black}
\colorlet{chapternumber}{stroke1}
\newlength{\mypaperwidth}
\newlength{\mypaperheight}
\newlength{\mybodywidth}
\newlength{\mybodyheight}
\newlength{\myoutermargin}
\newlength{\mytopmargin}
\newlength{\myinnermargin}
\newlength{\mybottommargin}
\newcommand{\goldenratio}{1.618}
\newlength{\myheadsep} % Distance from the header to the body.
\newlength{\myfootskip} % Distance from the body to the footer.
\newlength{\mymargininnersep} % Distance between the margin and the body.
\newlength{\mymarginoutersep} % Distance between the margin and the paper border.
\newlength{\mymarginwidth} % Width of the margin.
\newlength{\mymarginwidthwithinnersep} % Width of the margin.
\normalfont\textsf{\textbf{\color{footerchapter}\chaptername\ \thechapter}}
    \Ifstr{\rightmark}{\leftmark}%
    {%
        \begin{minipage}[b]{\mymarginwidth}%
            \small\raggedright\normalfont\textsf{\textbf{\color{footersection}Chapter\ \thechapter}}%
        \end{minipage}%
    }%
    {%
        \begin{minipage}[b]{\mymarginwidth}%
            \small\raggedright\normalfont\textsf{\textbf{\color{footersection}Section\ \thesection}}%
        \end{minipage}%
    }%
\normalfont\color{footerpagenr}\textbf{\thepage}%
\normalfont\color{footerpagenr}\textbf{\thepage}%
\newlength{\mytmpa}
\newlength{\mytmpb}
\renewcommand*{\partlineswithprefixformat}[3]%
{%
    #2
    \thispagestyle{empty}
    \setlength{\mytmpa}{0.618\mypaperwidth}%
    \setlength{\mytmpb}{0.382\mypaperheight}%
    \ifprintVersion
        \ifprofessionalPrint
            \setlength{\mytmpa}{0.618\mypaperwidth + \mybindingcorrection + \extraborderlength}%
            \setlength{\mytmpb}{0.382\mypaperheight + \extraborderlength}%
        \fi
    \fi
    \begin{tikzpicture}[overlay, remember picture]%
        \node [inner sep = 0, outer sep = 0, anchor = north] at (current page.north west)%
        {%
            \begin{tikzpicture}[overlay, remember picture]%
            \draw[color = stroke1, line width = 0.7 mm] (\mytmpa, 0) -- (\mytmpa, -\mytmpb);%
            \end{tikzpicture}%
        };%
        \node (align) [align = right, below = \mytmpb - 2 ex, inner sep = 0, outer sep = 0, anchor = north west] at (current page.north west)%
        {%
            \hspace{\mytmpa}\hspace{0.5 em}\partname\ \thepart\\[1 ex]
            \color{stroke1}#3%
        };%
    \end{tikzpicture}%
}
\renewcommand*{\chapterlinesformat}[3]%
{%
    % Check whether \chapter of \addchap has been used.
    \Ifnumbered{#1}{\setbool{chapterHasANumber}{true}}{\setbool{chapterHasANumber}{false}}%
    % Check whether \chapter* or \chapter has been used.
    \Ifstr{#2}{}{\setbool{chapterHasAStar}{true}}{\setbool{chapterHasAStar}{false}}%
    % Check whether a normal \chapter or something else is used.
    \ifboolexpr{bool{chapterHasANumber} and not bool{chapterHasAStar}}%
    {%
        \begin{tikzpicture}[overlay, remember picture]%
            \node [right = \myinnermargin, below = \mytopmargin, inner sep = 0, outer sep = 0, anchor = north west] (numbernode) at (current page.north west)%
            {%
                \hspace{\myinnermargin}%
                \sffamily\fontsize{40}{40}\selectfont%
                \color{chapternumber}%
                \thechapter%
            };%
            \node [inner sep = 0, outer sep = 0, anchor = north west] at (numbernode.south west)%
            {%
                \begin{tikzpicture}[overlay, remember picture]%
                    \draw[color = stroke1, line width = 0.7 mm] (\myinnermargin, -1 ex) -- (\paperwidth, -1 ex);%
                \end{tikzpicture}%
            };%
            \node (align) [text width = \textwidth - 2 cm, align = right, right = \myinnermargin + \mybodywidth, inner sep = 0, outer sep = 0, anchor = east] at (numbernode.west)%
            {%
                #3%
            };%
        \end{tikzpicture}%
    }%
    {%
        \begin{tikzpicture}[overlay, remember picture]%
            \node [right = \myinnermargin, below = \mytopmargin, inner sep = 0, outer sep = 0, anchor = north west] (numbernode) at (current page.north west)%
            {%
                \hspace{\myinnermargin}%
                \sffamily\fontsize{40}{40}\selectfont%
                \color{white}%
                \thechapter%
            };%
            \node [inner sep = 0, outer sep = 0, anchor = north west] at (numbernode.south west)%
            {%
                \begin{tikzpicture}[overlay, remember picture]%
                    \draw[color = stroke1, line width = 0.7 mm] (\myinnermargin, -1 ex) -- (\paperwidth, -1 ex);%
                \end{tikzpicture}%
            };%
            \node (align) [align = left, right = \myinnermargin, inner sep = 0, outer sep = 0, anchor = south west] at (numbernode.south west)%
            {%
                #3%
            };%
        \end{tikzpicture}%
    }%
}
\normalfont\fontsize{19}{19}\sffamily,
\DeclareCiteCommand{\conline}[\mkbibbrackets]
{\usebibmacro{prenote}}
{\usebibmacro{citeindex}%
  \usebibmacro{citenum}}% Note: this was originally "cite" but I changed it to "citenum" to avoid clashes with the author-year style.
{\multicitedelim}
{\usebibmacro{postnote}}
\setlist[itemize]{noitemsep}
\setlist[itemize,2]{noitemsep}
\setlist[itemize,3]{noitemsep}
\setlist[enumerate]{noitemsep,label={(\arabic*)},ref={(\arabic*)}}
\setlist[enumerate,2]{noitemsep,label={(\alph*)},ref={(\theenumi.\alph*)}}
\setlist[enumerate,3]{noitemsep,label={FIXME}}
\newcommand*{\colloquialDegreeName}{Master}
\newcommand*{\colloquialDegreeNameLowercase}{master}
\newcommand*{\degreeAbbreviation}{M.}
    \renewcommand*{\colloquialDegreeName}{Bachelor}
    \renewcommand*{\colloquialDegreeNameLowercase}{bachelor}
    \renewcommand*{\degreeAbbreviation}{B.}
    \def\IfEmptyTF#1%
\relax\detokenize{#1}\relax%
\NewDocumentCommand{\mathOrText}{m}
{%
    \ensuremath{#1}\xspace%
}
\let\originalleft\left
\let\originalright\right
\renewcommand{\left}{\mathopen{}\mathclose\bgroup\originalleft}
\renewcommand{\right}{\aftergroup\egroup\originalright}
    \DeclareRobustCommand{\bfseries}%
    {%
        \not@math@alphabet\bfseries\mathbf%
        \fontseries\bfdefault\selectfont%
        \boldmath%
    }
\crefname{ineq}{inequality}{inequalities}
\crefname{term}{term}{terms}
\let\oldfootnote\footnote
\newlength{\spaceBeforeFootnote} % Denotes the space before the footnote mark in em.
\newlength{\spaceAfterFootnote}  % Denotes the space after the footnote mark in em.
\RenewDocumentCommand{\footnote}{o o o m}%
{%
    \IfNoValueTF{#1}%
    {%
        \oldfootnote{#4}%
    }%
    {%
        \setlength{\spaceBeforeFootnote}{\IfEmptyTF{#1}{0}{#1} em}%
        \IfNoValueTF{#2}%
        {%
            \hspace*{\spaceBeforeFootnote}\oldfootnote{#4}%
        }%
        {%
            \setlength{\spaceAfterFootnote}{\IfEmptyTF{#2}{0}{#2} em}%
            \hspace*{\spaceBeforeFootnote}\IfNoValueTF{#3}{\oldfootnote{#4}}{\oldfootnote[#3]{#4}}\hspace*{\spaceAfterFootnote}%
        }%
    }%
}
    \declaretheoremstyle
    [
        spaceabove = \topsep,
        spacebelow = \topsep,
        headfont = \bfseries,
        headformat = \textcolor{stroke1}{$\blacktriangleright$} \NAME~\NUMBER \NOTE,
        notefont = \bfseries,
        notebraces = {(}{)},
        bodyfont = \normalfont,
        postheadspace = 0.5 em,
        qed = \textcolor{stroke1}{\bfseries$\blacktriangleleft$},
    ]
    {myTheoremStyle}
    \declaretheorem
    [
        style = myTheoremStyle,
        name = Conjecture,
        numberwithin = chapter,
    ]
    {conjecture}
    \declaretheorem
    [
        style = myTheoremStyle,
        name = Proposition,
        sharenumber = conjecture,
    ]
    {proposition}
    \declaretheorem
    [
        style = myTheoremStyle,
        name = Claim,
        sharenumber = conjecture,
    ]
    {claim}
    \declaretheorem
    [
        style = myTheoremStyle,
        name = Lemma,
        sharenumber = conjecture,
    ]
    {lemma}
    \declaretheorem
    [
        style = myTheoremStyle,
        name = Corollary,
        sharenumber = conjecture,
    ]
    {corollary}
    \declaretheorem
    [
        style = myTheoremStyle,
        name = Theorem,
        sharenumber = conjecture,
    ]
    {theorem}
    \declaretheorem
    [
        style = myTheoremStyle,
        name = Definition,
        sharenumber = conjecture,
    ]
    {definition}
    \declaretheorem
    [
        style = myTheoremStyle,
        name = Example,
        sharenumber = conjecture,
    ]
    {example}
    \declaretheorem
    [
        style = myTheoremStyle,
        name = Remark,
        sharenumber = conjecture,
    ]
    {remark}
    \declaretheorem
    [
        style = myTheoremStyle,
        name = Reduction Rule,
        sharenumber = conjecture,
    ]
    {reductionrule}
    \declaretheorem
    [
        style = myTheoremStyle,
        name = Hypothesis,
        numberwithin = chapter,
    ]
    {hypothesis}
    \theoremstyle{plain}
    \newtheorem{conjecture}{Conjecture}[chapter]
    \newtheorem{lemma}[conjecture]{Lemma}
    \newtheorem{corollary}[conjecture]{Corollary}
    \newtheorem{theorem}[conjecture]{Theorem}
    \newtheorem{definition}[conjecture]{Definition}
\crefname{reductionrule}{Reduction Rule}{Reduction Rules}
\NewDocumentCommand{\functionTemplate}{m m m m o}%
{%
    \IfNoValueTF{#5}%
    {%
        \mathOrText{#1\left#2{#4}\right#3}%
    }%
    {%
        \mathOrText{#1#5#2{#4}#5#3}%
    }%
}
\newcommand*{\leftBracketType}{(}
\newcommand*{\rightBracketType}{)}
\NewDocumentCommand{\createFunction}{m m o o}%
{%
    \renewcommand*{\leftBracketType}{\IfNoValueTF{#3}{(}{#3}}%
    \renewcommand*{\rightBracketType}{\IfNoValueTF{#4}{)}{#4}}%
    \NewDocumentCommand{#1}{o o}%
    {%
        \IfNoValueTF{##1}%
        {%
            \mathOrText{#2}%
        }%
        {%
            \functionTemplate{#2}{\leftBracketType}{\rightBracketType}{##1}[##2]%
        }%
    }%
}
\DeclareDocumentCommand{\probabilisticFunctionTemplate}{m m O{} o}
{%
    \functionTemplate{#1}%
    {\lbrack}%
    {\rbrack}%
    {#2\IfEmptyTF{#3}{}{\ \IfNoValueTF{#4}{\left}{#4}\vert\ \vphantom{#2}#3\IfNoValueTF{#4}{\right.}{}}}%
    [#4]%
}
    \newcommand*{\N}{\mathOrText{\mathbf{N}}}
    \newcommand*{\R}{\mathOrText{\mathbf{R}}}
    \newcommand*{\indicatorFunctionSymbol}{\mathbf{1}}
    \newcommand*{\N}{\mathOrText{\mathds{N}}}
    \newcommand*{\R}{\mathOrText{\mathds{R}}}
    \newcommand*{\indicatorFunctionSymbol}{\mathds{1}}
\RenewDocumentCommand{\Pr}{m O{} o}%
{%
    \probabilisticFunctionTemplate{\mathrm{Pr}}{#1}[#2][#3]%
}
\NewDocumentCommand{\Var}{m O{} o}%
{%
    \probabilisticFunctionTemplate{\mathrm{Var}}{#1}[#2][#3]%
}
\DeclareDocumentCommand{\bigO}{m o}%
{%
    \functionTemplate{\mathrm{O}}{(}{)}{#1}[#2]%
}
\DeclareDocumentCommand{\smallO}{m o}%
{%
    \functionTemplate{\mathrm{o}}{(}{)}{#1}[#2]%
}
\DeclareDocumentCommand{\bigTheta}{m o}%
{%
    \functionTemplate{\upTheta}{(}{)}{#1}[#2]%
}
\DeclareDocumentCommand{\bigOmega}{m o}%
{%
    \functionTemplate{\upOmega}{(}{)}{#1}[#2]%
}
\DeclareDocumentCommand{\smallOmega}{m o}%
{%
    \functionTemplate{\upomega}{(}{)}{#1}[#2]%
}
\DeclareDocumentCommand{\eulerE}{o}%
{%
    \mathOrText{\mathrm{e}\IfNoValueTF{#1}{}{^{#1}}}%
}
\DeclareDocumentCommand{\poly}{m o}%
{%
    \functionTemplate{\mathrm{poly}}{(}{)}{#1}[#2]%
}
\createFunction{\id}{\mathrm{id}}
\NewDocumentCommand{\ind}{m o o}%
{%
    \IfNoValueTF{#2}%
    {%
        \mathOrText{\indicatorFunctionSymbol_{#1}}%
    }%
    {%
        \functionTemplate{\indicatorFunctionSymbol_{#1}}{(}{)}{#2}[#3]%
    }%
}
\DeclareDocumentCommand{\dom}{m o}%
{%
    \functionTemplate{\mathrm{dom}}{(}{)}{#1}[#2]%
}
\DeclareDocumentCommand{\rng}{m o}%
{%
    \functionTemplate{\mathrm{rng}}{(}{)}{#1}[#2]%
}
\DeclareDocumentCommand{\d}{o}%
{%
    \mathrm{d}\IfNoValueTF{#1}{}{^{#1}}%
}
\DeclareDocumentCommand{\set}{m m o}%
{
    \mathOrText{\IfNoValueTF{#3}{\left}{#3}\{#1\ \IfNoValueTF{#3}{\left}{#3}\vert\
    \vphantom{#1}#2\IfNoValueTF{#3}{\right.}{}\IfNoValueTF{#3}{\right}{#3}\}}
}
\definecolor[named]{lipicsGray}{rgb}{0.31,0.31,0.33}
\else\DeclareMathOperator{\N}{\mathbb{N}}\fi
\else\DeclareMathOperator{\R}{\mathbb{R}}\fi
\newcommand{\BigO}{\mathcal{O}}
\newcommand{\iphase}{\delta}
\newcommand{\nodeset}[1]{V(#1)}
\newcommand{\edgeset}[1]{E(#1)}
\newcommand{\adjacentEdges}[1]{N(#1)}
\newcommand{\setsize}[1]{\left|#1\right|}
\newcommand{\nodesetsize}[1]{\setsize{V(#1)}}
\newcommand{\edgesetsize}[1]{\setsize{E(#1)}}
\newcommand{\decc}[1]{\setsize{EC_\iphase(#1)}}
\newcommand{\Tmax}{T_\mathrm{max}}
\DeclareMathOperator{\edgeLabelOp}{\lambda}
\newcommand{\edgeLabel}[1]{\edgeLabelOp(#1)}
\newcommand{\G}{\mathcal{G}}
\newcommand{\defaultGraphName}{\G}
\newcommand{\defaultTemporalGraph}{\G = (V, E, \edgeLabelOp)}
\newcommand{\defaultdecc}{\decc{\defaultGraphName}}
\newcommand{\tw}{\mathrm{tw}}
\newcommand{\funnyconstant}{K}
\newcommand{\EOp}{\mathbb{E}}
\newcommand{\E}[1]{\EOp\left[#1\right]}
\newcommand{\ProbOp}{\mathbb{P}}
\newcommand{\Prob}[1]{\ProbOp\left[#1\right]}
\DeclareMathOperator{\divOp}{div}
\DeclareMathOperator{\modOp}{mod}
\newcommand{\FMainHack}{\mathtt{Follow}}
\newcommand{\FMain}{\FMainHack}
\newcommand{\FDMainHack}{\mathtt{DiscoveryFollow}}
\newcommand{\FDMain}{\FDMainHack}
\newcommand{\exHack}{\mathtt{Explore}}
\newcommand{\ex}{\exHack}
\newif\ifpaper
\definecolor[named]{benGray}{rgb}{0.61,0.61,0.63}
\begin{document}

    \frontmatter
    % This file contains the layout of the title page. It is a generous interpretation of the demo page provided by the MNF of the University of Potsdam.

% This page uses a different geometry, as the content will be centered (not including the binding correction).
\ifprintVersion
    \ifprofessionalPrint
        \newgeometry
        {
            textwidth = 134 mm,
            textheight = 220 mm,
            top = 38 mm + \extraborderlength,
            inner = 38 mm + \mybindingcorrection + \extraborderlength,
        }
    \else
        \newgeometry
        {
            textwidth = 134 mm,
            textheight = 220 mm,
            top = 38 mm,
            inner = 38 mm + \mybindingcorrection,
        }
    \fi
\else
    \newgeometry
    {
        textwidth = 134 mm,
        textheight = 220 mm,
        top = 38 mm,
        inner = 38 mm,
    }
\fi

% The format of the title page.
\begin{titlepage}
    \sffamily
    \begin{center}
        \includegraphics[height = 3.2 cm]{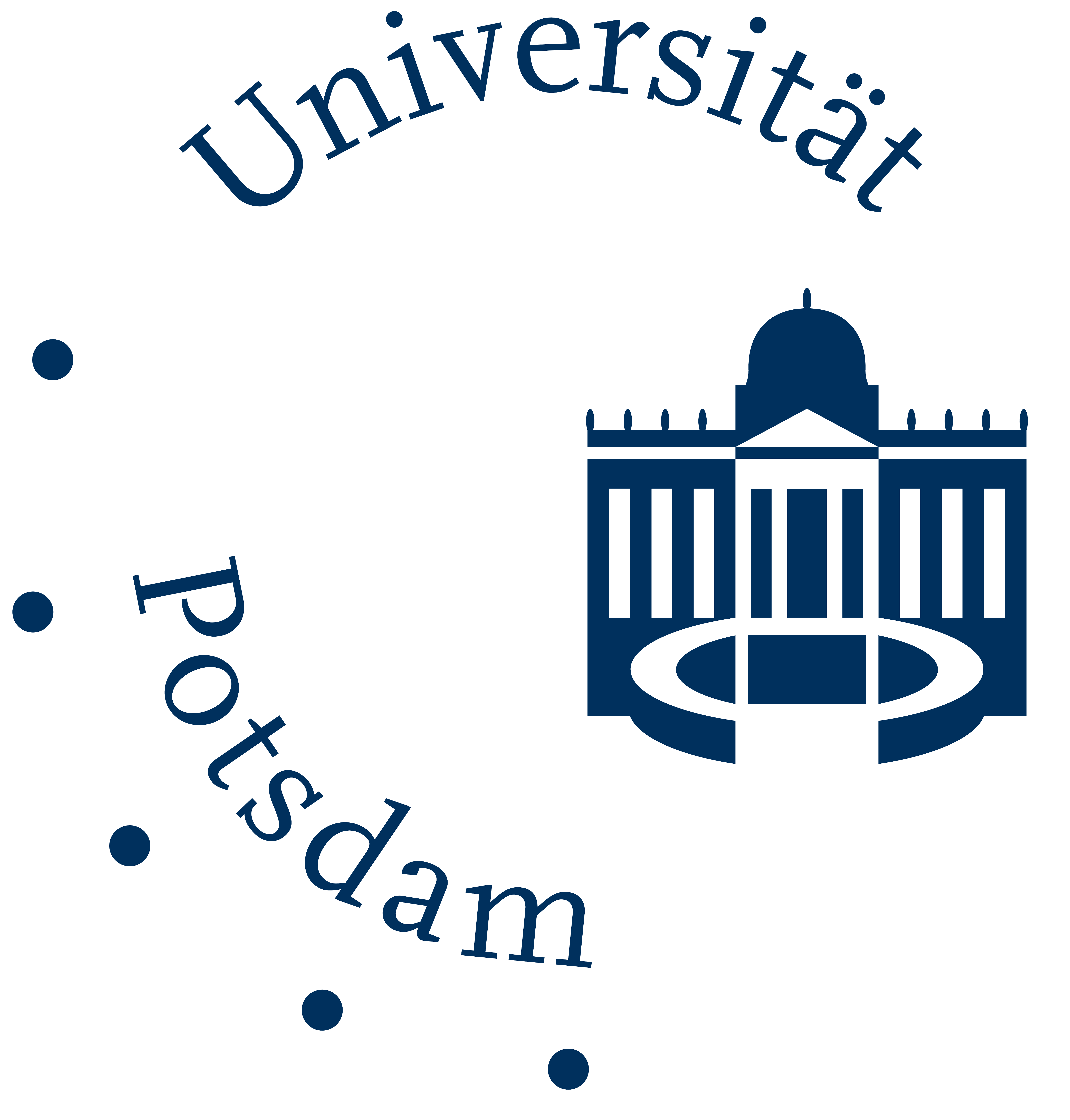} \hfill \includegraphics[height = 3 cm]{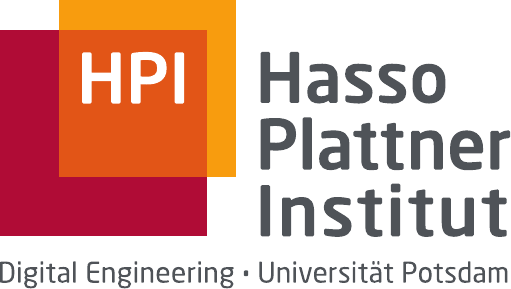}\\
        \vfil
        {\LARGE
            \rule[1 ex]{\textwidth}{1.5 pt}
            \onehalfspacing\printTitleBold\\[1 ex]
            {\vspace*{-1 ex}\Large \printGermanTitle}\\
            \rule[-1 ex]{\textwidth}{1.5 pt}
        }
        \vfil
        {\Large\textbf{\printAuthor}}
        \vfil
        {\large Universitäts\colloquialDegreeNameLowercase arbeit\\[0.25 ex]
        zur Erlangung des akademischen Grades}\\[0.25 ex]
        \bigskip
        {\Large \colloquialDegreeName{} of Science}\\[0.5 ex]
        {\large\emph{(\degreeAbbreviation\,Sc.)}}\\
        \bigskip
        {\large im Studiengang\\[0.25 ex]
        \printProgram}
        \vfil
        {\large eingereicht am \printDateReceived{} am\\[0.25 ex]
        Fachgebiet Algorithm Engineering der\\[0.25 ex]
        Digital-Engineering-Fakultät\\[0.25 ex]
        der Universität Potsdam}
    \end{center}
    
    \vfil
    \begin{table}[h]
        \centering
        \large
        \sffamily 
        {\def\arraystretch{1.2}
            \begin{tabular}{>{\bfseries}p{3.8 cm}p{5.3 cm}}
                Gutachter               & \printNameOfSupervisor\\
                Betreuer*innen                & \printAdditionalExaminers
            \end{tabular}
        }
    \end{table}
\end{titlepage}

\restoregeometry

    \pagestyle{plain}

    \addchap{Abstract}
    % This file should contain the English abstract.

Researchers, policy makers, and engineers need to make sense of data on spreading processes as diverse as viral infections, water contamination, and misinformation in social networks.
Classical questions include predicting infection behavior in a given network or deducing the structure of a network from infection data.
We study two central problems in this area.
In \emph{graph discovery}, we aim to fully reconstruct the structure of a graph from infection data.
In \emph{source detection}, we observe a limited subset of the infections and aim to deduce the source of the infection chain.
These questions have received considerable attention and have been analyzed in many settings (e.g.,~under different models of spreading processes), yet all previous work shares the assumption that the network has the same structure at every point in time.
For example, if we consider how a disease spreads, it is unrealistic to assume that two people can either never or always infect each other, rather such an infection is possible precisely when they meet.
\emph{Temporal graphs}, in which connections change over time, have recently been used as a more realistic graph model to study infections.
Despite this recent attention, we are the first to study graph discovery or source detection in temporal graphs.

We propose models for temporal graph discovery and source detection that are consistent with previous work on static graphs and extend it to embrace the stronger expressiveness of temporal graphs.
For this, we employ the standard susceptible-infected-resistant model of spreading processes, which is particularly often used to study diseases.

For graph discovery, our main contributions include the \(\FDMain \) algorithm, which we prove to be optimal in the cost of discovery per edge.
We also provide algorithms and lower bounds for a number of variations of the problem—such as where the Discoverer does not learn who infected whom but only when each node was infected.
These variations make our results more widely applicable and yield a clearer picture of which aspects make temporal graph discovery easy or difficult.
We complement our theoretical analysis with an experimental evaluation of the \(\FDMain \) algorithm on real-world interaction data from the Stanford Network Analysis Project and on temporal Erdős-Renyi graphs.
On Erdős-Renyi graphs, we discover an interesting threshold behavior, which can be explained by a novel connectivity parameter that captures which edges in the temporal graph can influence each other during the spreading process.

For the source detection problem, we crucially show that, under worst-case analysis, no algorithm can be faster than the trivial brute force algorithm.
This motivates us to propose faster, randomized algorithms for different variations of the problem.
For example, we examine general graphs and trees, as well as sources with dynamic or with consistent behavior.
We also prove matching lower bounds, demonstrating that our algorithms are asymptotically optimal among all algorithms that  succeed with constant probability.
We thus provide a thorough picture of the randomized complexity of the temporal source detection problem in a wealth of settings.

    \selectlanguage{ngerman}
    \addchap{Zusammenfassung}
    % This file should contain the German abstract.

Forschende, Entscheidungsträger*innen und Analyst*innen müssen jeden Tag Daten aus sich verbreitenden Prozessen verstehen, die alles von viralen Infekten über Wasserverunreinigungen bis hin zu Missinformationen in sozialen Netzwerken beschreiben.
Zu den klassischen Aufgaben gehören die Vorhersage von Infektionsverhalten in einem gegebenen Netzwerk und das Erschließen von Netzwerkstrukturen aus Infektionsdaten.
Mit dieser Arbeit betrachten wir zwei zentrale Probleme dieses Forschungsgebiets.
In der \emph{Graphenerkundung} ist das Ziel, die Struktur eines Netzwerks vollständig aus Infektionsdaten zu erschließen.
In der \emph{Quellensuche} möchten wir aus wenig Informationen über das Infektionsverhalten die Quelle einer Infektionskette ermitteln.
Diese Fragen wurden bereits eingehend und in vielen Variationen analysiert (z.~B.~unter verschiedenen Ausbreitungsmodellen).
Allerdings teilt alle aktuelle Forschung die Annahme, dass das zugrundeliegende Netzwerk zu jedem Zeitpunkt die gleiche Struktur aufweist.
% TODO: Maybe include the example here if there is space
In \emph{temporalen Graphen} verändern sich die Verbindungen des Netzwerks mit der Zeit, sodass mit diesem Netzwerkmodell Infektionsverhalten realistischer modelliert werden kann.
Trotz der jüngeren Forschung zu Infektionen in temporalen Graphen, sind wir die Ersten, die Graphenerkundung und Quellensuche auf temporalen Graphen untersuchen.

Wir stellen Modelle für temporale Graphenerkundung und Quellensuche vor, die konsistent mit den aktuellen Modellen auf statischen Graphen sind und das zeitabhängige Verhalten der Netzwerke realistischer beschreiben.
% Dafür nutzen wir das \emph{susceptible-infected-resistant} Ausbreitungsmodell.

Einer unserer wichtigsten Beiträge zur Graphenerkundung ist der \(\FDMain\)-Algorithmus, der beweisbar optimale Kosten pro erkundeter Kante erreicht.
Außerdem stellen wir Algorithmen und untere Schranken für viele Variationen des Problems vor.
So zum Beispiel für den Fall, dass der Erkunder nur lernt, welcher Knoten wann infiziert wird, aber nicht von welchem anderen Knoten.
Diese Variationen machen unsere Ergebnisse breit anwendbar und zeichnen ein klareres Bild davon, welche Aspekte Graphenerkundung einfach oder schwer machen.
Wir komplementieren unsere theoretische Analyse mit Experimenten mit dem \(\FDMain\)-Algorithmus auf Echtwelt-Interaktionsdaten aus dem Stanford Network Analysis Project und auf temporalen Erdős-Renyi-Graphen.
Auf Erdős-Renyi-Graphen beobachten wir ein interessantes Schwellenverhalten, dass wir durch einen neuartigen Zusammenhangsparameter erklären können, der einfängt, welche Kanten sich während des Infektionsverhaltens beeinflussen können.

Zur Quellenerkundung zeigen wir, dass im Worst-Case kein Algorithmus den trivialen Brute-Force-Algorithmus schlagen kann.
Davon motiviert stellen wir schnellere, randomisierte Algorithmen für mehrere Versionen des Problems vor.
So lösen unsere Algorithmen z.~B.~das Problem auf allgemeinen Graphen und Bäumen sowie mit konsistentem oder dynamischem Quellverhalten.
Außerdem beweisen wir untere Schranken, die zeigen, dass unsere Algorithmen asymptotisch optimal sind unter allen Algorithmen, die das Problem mit konstanter Erfolgswahrscheinlichkeit lösen.
Somit zeichnen wir ein detailliertes Bild der randomisierten Komplexität der temporalen Quellenerkundung.

    \selectlanguage{american}

    \addchap{Acknowledgments}
    % Here you can write whom you want to thank.

Above all, I owe a mountain of gratitude to my supervisors, who enabled me to move this personal mountain.
Without you,  Michelle, George, and Nico, this thesis would not exist.
Michelle has been the most Argus-eyed proofreader anyone could wish for, and her notes on my illustrations were invaluable for making the technical proofs understandable.
George prevented me from panicking too many times to count and made me write a coherent story instead of the mess that filled my head sometimes.
Nico did not only fulfill his destiny as the ultimate proof-breaker, but also gave me valuable advice on how to structure my arguments.
So, if you like the proofs in this thesis, think of Nico.

Jonas, Niko, Busty, Armin, and Paula were the best office mates.
You all ensured that I enjoyed coming to the office every day.
I will keep all the games we played, yarns we spun, and tea we spilled in fond memory.
And of course, you were great targets for nerd sniping whenever I needed someone to rubber duck this thesis with.

Sonia's advice on what to write and do was almost as valuable as her advice on what not to do and when not to write. You kept me sane in the last weeks of this thesis.

    \setuptoc{toc}{totoc}
    \tableofcontents

    \pagestyle{headings}
	\mainmatter

\chapter{Introduction}
\label{sec:orgd9802be}
Information, diseases, pollutants—all these things spread through the networks that connect everything from people and servers to our streets and sewers.
Despite their differences, all of these networks can be modeled in a similar fashion, and we can thus utilize a common algorithmic toolkit for their analysis, allowing us to answer many questions impacting our everyday lives.
The study of spreading processes in networks has received substantial attention from researchers as diverse as physicians, sociologists, and theoretical computer scientists \cite{independent-cascade,fake-news-survey,wang_predicting_2016}.
Most famously, the \emph{influence maximization problem}, introduced by
Kempe et al.~\cite{independent-cascade},
asks which node in a network should be infected to maximize the number of nodes infected by the spreading process.
Other problems include finding a sensor placement to detect outbreaks as quickly as possible \cite{sensor-placement}, and finding nodes or edges particularly important to the spread—so-called super-spreaders \cite{Enright2018DeletingET,venkataraman2005new,maji2021identifying}.

All of these problem formulations assume that the structure of the underlying network is already known.
This, of course, is not true in many real-world scenarios, and thus we require algorithms for discovering the structure of these networks to overcome this hurdle.
This could mean discovering the whole graph (which we call \emph{network discovery} or \emph{graph discovery}) or only determining particularly important structural properties, such as finding the source of the spreading process (which we call \emph{source detection}) \cite{sir-source-detection,shah2011rumors,sensor-placement,paluch_fast_2018}.
While these problems are fundamental to data mining, they can be approached from different angles \cite{estimate-diffusion-networks,park2016information}.
One natural approach is to discover the underlying network from the infection data itself (e.g.,~from a small set of disease infection times or from a number of sensors in a sewer network).
This idea has been extensively studied \cite{infer-from-cascades,lokhov2016reconstructing,daneshmand2014estimating,netrapalli2012learning}.
Chistikov et al.~\cite{chistikov2024learning},
similar to us, consider a model where the discovery algorithm may intervene in the spreading process.

Beyond their application as the foundation for spreading analysis tasks, graph discovery and source detection are interesting and relevant problems in their own right.
After we have discovered a network from infection data, we are free to abstract away from the spreading process itself and use the resulting network in a host of different ways.
This is especially relevant for the study of social networks, both real-world and online, where infection data can reveal underlying structures that are otherwise difficult to observe.

The current mainstream models of spreading processes, such as the independent cascade model proposed by
Kempe et al.~\cite{independent-cascade},
and the susceptible-infected-resistant (SIR) model \cite{Hethcote1989,a-contribution,shah2011rumors}, incorporate the inherent time-dependent behavior of the spreading process.
That is, infections take time, meaning a node first gets infected and only then is able to pass on the infection to other nodes.
Yet, they make the same simplifying assumption: they model the underlying network as \emph{static}.
That is, it does not change over time, which means these models fail to capture temporal dependencies in the underlying network itself, even though these changes heavily impact the behavior of spreading processes.
In most applications, this is not a realistic assumption as many real-world networks change over time (e.g.,~social networks in which diseases or information spread are highly temporal as the link between two people is only able to transmit diseases or information at specific points in time).

Over the last twenty years, \emph{temporal graphs} have provided a robust and vibrant graph theoretic framework in which many temporal analogs to problems previously studied on static graphs have been explored.
Temporal graphs are a model of dynamic networks where the edges exist only at specific time steps.
This model has received considerable attention from theoretical computer scientists for both foundational problems like computing shortest paths \cite{michail2016introduction,wu2014path,Danda} or flow \cite{akrida2019temporal}, and for a growing number of applications, for example, in logistics, transportation, mobile sensor networks, and neural networks \cite{casteigts_et_al:DagRep.11.3.16}.
For our purposes, a temporal graph \(\G=(V, E, \edgeLabelOp)\) with lifetime \(\Tmax \in \N\) is a static graph \(G=(V, E)\) together with a function \(\edgeLabelOp\colon E \to 2^{[\Tmax]}\) which indicates that an edge \(e \in E\) exists precisely at the time steps \(\edgeLabel{e}\).

In 2015,
Gayraud et al.~\cite{gayraud-evolving-social-networks}
initiated the study of infections in temporal graphs by generalizing the independent cascade model from static to temporal graphs.
Since then,
Deligkas et al.~\cite{influencers}
have studied which sources maximize infections given a known temporal graph in the susceptible-infected-susceptible (SIS) model.
In this model, each node in the network starts off as susceptible, becomes infected if one of its neighbors is in the infected state, and then stays in the infected state for a fixed period of time, after which the node turns susceptible again.
This is similar to our approach, where we study the SIR model  (which is a standard biological spreading model \cite{Hethcote1989}).
In this model, a node turns resistant after an infection and cannot be infected again.

Even though  graph discovery is a central task in the study of spreading processes, there is no prior work on temporal graphs.
Similarly, while the important problem of source detection in an unknown non-temporal graph has received considerable attention \cite{sir-source-detection,paluch_fast_2018}, its temporal analog has not yet been examined.

\section{Our Contribution}
\label{sec:org0dfed5b}
We provide the first rigorous definitions of graph discovery and source detection in temporal graphs, and give both algorithms and lower bounds for these problems in a variety of settings.
In \Cref{sec:prelims}, we introduce our preliminaries, including temporal graphs and the SIR infection model.
After this, the thesis is split into two parts.
The first deals with graph discovery, and the second with source detection.

In \Cref{part:graph-discovery}, we give both the first theoretical and the first empirical analysis of graph discovery on temporal graphs.
 \ifpaper In this work \fi
In \Cref{sec:game}, we  \ifpaper then \fi formally define temporal network discovery as a round-based, interactive, two-player game.
In each round of this game, the \emph{Discoverer} can cause infections and observe the resulting infection chains.
It aims to find the time labels of all edges in as few rounds as possible.
The \emph{Adversary} gets to pick the shape and temporal properties of the graph, with the aim of forcing the Discoverer to take as long as possible to accomplish their task.

In \Cref{sec:ideal-patient-zero}, we study a related problem, which will yield a valuable subroutine to solve the graph discovery problem later.
Concretely, in the \emph{ideal patient zero problem}, the Discoverer is asked to find a node \(v \in V\) and time step \(t \in [\Tmax]\) such that an infection at node \(v\) and time step \(t\) causes the whole graph to be infected, or to decide that no such pair \((v,t)\) exists.
We give the \(\FMain\) algorithm, which solves this problem in \(6\edgesetsize{\G} + \lceil \Tmax / \iphase \rceil\) rounds of our game, where \(\iphase\) is the number of time steps a node stays infectious before recovering.
The analysis of this algorithm naturally leads to the definition of \(\iphase\)-edge connected components, denoted by \(EC_\iphase(\G)\).
Intuitively, this is a grouping of the edges such that only edges from the same component may influence each other during infection chains.
The number and size of these components are a new edge-based connectivity parameter that captures the inherently temporal properties of the graph.
We believe this parameter to be of independent interest, and our experiments (in \Cref{sec:experiments}) on Erdős-Renyi graphs suggest a natural conjecture about how the parameters of the Erdős-Renyi model relate to the size of  the \(\iphase\)-edge connected components.

In \Cref{sec:fd-algorithm}, we extend our \(\FMain\) algorithm to the \(\FDMain\) algorithm, which solves the graph discovery problem in \(6\edgesetsize{\G} + \decc{\G} \left\lceil \Tmax/\iphase \right \rceil\) rounds.
We prove this is asymptotically tight in the number of edges in \Cref{sec:witnesses}.
Formally, we prove that there is an infinite family of graphs such that any algorithm winning the graph discovery game requires at least \(\Omega(\edgesetsize{\G})\) rounds.
Crucially, this cannot be improved even if the Discoverer is allowed to start multiple infection chains per round.
We also prove that there is an infinite family of graphs such that the minimum number of rounds required to win the graph discovery game grows in \(\Omega(n T_\mathrm{max} / (\iphase k))\), where \(k\) is the number of infection chains the Discoverer is allowed to start.
This shows that the algorithm is almost tight in the second summand.
These provable lower bounds are important as they show us the smallest number of rounds we can hope for, which allows us to closely analyze where our methods are open to further improvement and where they are asymptotically optimal.

\newcommand{\graphDiscoveryResultsTable}{%
\begin{table}[t]
\caption{\label{tbl:variations-overview}Overview of upper and lower bounds on the number of rounds for different variations of the graph discovery game. Let \(n = \nodesetsize{\G}\) and \(m = \edgesetsize{\G}\). A subscript to a Landau symbol indicates variables that the asymptotic growth is independent of.}
\centering
\begin{tabular}{llll}
\toprule
\textbf{Model} & \textbf{Lower Bound} & \textbf{Upper Bound}\\[0pt]
\midrule
Basic model & \(\Omega_k(m)\) & \(\BigO_k(m + \defaultdecc{} \Tmax / \iphase)\)\\[0pt]
Infection times only & \(\Omega_k(m)\) & \(\BigO_k(m + \defaultdecc{} \Tmax / \iphase)\)\\[0pt]
Unknown static graph & \(\Omega_{m, \decc{\G}}(n \Tmax / (\iphase k))\) & \(\BigO(n \Tmax )\)\\[0pt]
Multilabels & \(\Omega_{\decc{\G}}(\min\{n,m\} \Tmax / (\iphase k))\) & \(\BigO(n \Tmax )\)\\[0pt]
Multiedges & \(\Omega(n \Tmax / (\iphase k))\) & \(\BigO_k(m + \defaultdecc{} \Tmax / \iphase)\)\\[0pt]
\bottomrule
\end{tabular}
\end{table}
}
 \ifpaper \else {\graphDiscoveryResultsTable}\fi

We finish off our theoretical analysis in \Cref{sec:extending}, where we explore a number of variations of the graph discovery problem.
First, we analyze the case where the feedback the Discoverer receives about the infection chains is reduced to infection times.
That is, the Discoverer learns if and when a node was infected, but not by which other node.
Surprisingly, we are able to show that our \(\FDMain\) algorithm directly translates to this scenario, where we have significantly less information about the infections.
Second, we discuss what happens if the Discoverer has no information about the static graph in which the infections are taking place.
Third, we allow the temporal graph to now contain multiedges or more than one label per edge, and analyze the results.
 \ifpaper \else {See \cref{tbl:variations-overview} for an overview over our bounds on these variations.}\fi

Finally in \Cref{sec:experiments}, we empirically validate our theoretical results.
Using both synthetic and real-world data, we execute the \(\FDMain\) algorithm and observe its performance.
We utilize the natural temporal extension of Erdős-Renyi graphs \cite{casteigts_threshold} as well as the \texttt{comm-f2f-Resistance} data set from the Stanford Large Network Dataset Collection \cite{kumar2021deception}, a social network of face-to-face interactions.
Beyond the pure number of rounds, we closely analyze which factors affect the performance of the algorithm.
In particular, we see that the density of the graph affects the performance since, in dense graphs, it needs to spend less time finding new \(\iphase\)-edge connected components.
On Erdős-Renyi graphs, we provide evidence that this effect is mediated by the number of \(\iphase\)-edge connected components, which exhibits a threshold behavior in \(\Tmax/(\nodesetsize{\G} p)\), where \(p\) is the Erdős-Renyi density parameter.
This prompts us to give a conjecture on this threshold behavior, which closely mirrors the famous threshold behavior in the connected components of nodes in static Erdős-Renyi graphs \cite{erdos1960evolution}.

 \ifpaper In summary, we provide the first theoretical and empirical analysis of temporal network discovery, giving algorithms, lower bounds, and experimental evaluation. \fi

In \Cref{part:source-detection}, we turn our attention to source detection.
In \Cref{sec:game-def}, we formalize the source detection problem as a round-based game in which two players interact with each other.
The \emph{Adversary} decides the structure of the underlying temporal network, as well as which node is the source and when it starts an infection chain.
The \emph{Discoverer} initially has no information about the location, number, or label of the edges.
In each round, the Discoverer may watch a single node and is informed if, when, and by which neighbor this watched node is infected.
Note that the Adversary does not know which nodes the Discoverer watches.
Counting the rounds of this game until the Discoverer has found the source is an insufficient cost metric, as then the Adversary is incentivized to minimize infections in order to evade detection.
Instead, we study the number of infections until detection, which we call the \emph{price of detection}.
This cost measure combines an incentive to avoid detection (to prolong the period in which the Adversary may perform infections) and an incentive to infect as many nodes as possible in a single round.
Notice, that if \(n\) is the number of nodes in the graph, any Discoverer algorithm must always tolerate \(\Omega(n^2)\) infections in the worst case, and there is an algorithm that always wins the game within \(n^2\) infections (see \Cref{thm:dY1}).

Because in this game, the worst-case performance of any algorithm is trivially bad (i.e.,~its price of detection is in \(\Omega(n^2)\)), we explore the power of randomized techniques to overcome this limitation in \Cref{sec:rand}.
In particular, we give a Discoverer algorithm that wins the source detection game with constant probability (i.e.,~a probability that does not decrease for larger networks) while only tolerating \(\BigO(n \sqrt n)\) infections.
We also prove that this price of detection is asymptotically optimal for any algorithm that wins the game with a constant probability.

In \Cref{sec:known}, we explore how the difficulty of the problem changes if the Discoverer has knowledge about the underlying static graph (but not about when an edge exists in the temporal graph).
Surprisingly, we prove that this does not lead to a decreased price of detection.
To be precise, we prove that if there is an algorithm winning the game for all known static graphs with some constant probability within \(f(n)\) infections, then there is an algorithm winning the game for unknown static graphs with the same probability and price of detection.
Thus, the lower bound from the game on unknown static graphs transfers.
As the algorithm also transfers, we have an asymptotically tight \(\Theta(n \sqrt n)\) bound for the price of detection for algorithms winning the game with constant probability.
Yet, not all is lost.
We give an algorithm that, for graphs with bounded treewidth \(\tw\), wins the game on known static graphs with constant probability with a price of detection of \(\BigO(\tw \cdot n \log n)\).
This directly translates to a \(\BigO(n \log n)\) algorithm on trees.

\begin{table}[t]
\caption{\label{tbl:results}An overview of the price of detection in different settings. For lower bounds, a \emph{wcp} annotation signifies that the bound holds for all Discoverer algorithms winning the game with constant probability. For upper bounds, \emph{wcp} signifies that a Discoverer algorithm winning the game with constant probability and the noted price of detection exists. Upper bounds marked \emph{det} are achieved via a deterministic algorithm. Results marked \textsuperscript{a} are transferred between known and unknown settings (i.e.,~lower bounds from known to unknown and upper bounds in the other direction). Similarly, results transferred between the consistent and obliviously dynamic settings are marked \textsuperscript{b} and transfers between trees and general graphs are marked \textsuperscript{c}. The result marked * only holds when the Discoverer is allowed to watch two nodes.}
\centering
\begin{tabularx}{\linewidth}{Xllll}
\toprule
 & \multicolumn{2}{c}{\textbf{Trees}}  & \multicolumn{2}{c}{\textbf{General}} \\[0pt]
\cmidrule(r){2-3} \cmidrule(l){4-5}
 & Lower Bound & Upper Bound & Lower Bound & Upper Bound\\[0pt]
\midrule
\textbf{Consistent} &  &  &  & \\[0pt]
Known & \(\Omega(n \log n)\) wcp & \(\BigO(n \log n)\) wcp & \(\Omega(n \sqrt {n})\) wcp &  \textcolor{benGray}{\(\BigO(n \sqrt{n})\) wcp\textsuperscript{a}}\\[0pt]
 & \Cref{thm:aX1-lb-tree} & \Cref{cor:aX1-tree} &  \textcolor{benGray}{\Cref{thm:consistent-known-to-unknown}} & \\[0pt]
Unknown & \(\Omega(n \sqrt n)\) wcp &  \textcolor{benGray}{\(\BigO(n \sqrt n)\) wcp\textsuperscript{c}} &  \textcolor{benGray}{\(\Omega(n \sqrt n)\) wcp\textsuperscript{c}} & \(\BigO(n \sqrt {n})\) wcp\\[0pt]
 & \Cref{thm:aY1-lb} &  &  & \Cref{thm:aY1}\\[0pt]
\midrule
\textbf{Obliviously dynamic} &  &  &  & \\[0pt]
Known &  \textcolor{benGray}{\(\Omega(n \log n)\) wcp\textsuperscript{b}} & \(\BigO(n \log n)\) wcp* & \(\Omega(n^2)\) wcp &  \textcolor{benGray}{\(n^2\) det\textsuperscript{a}}\\[0pt]
 &  & \Cref{thm:dX2} & \Cref{thm:dX1-lb} & \\[0pt]
Unknown &  &  &  \textcolor{benGray}{\(\Omega(n^2)\) wcp\textsuperscript{a}} & \(n^2\) det\\[0pt]
 &  &  &  & \Cref{thm:dY1}\\[0pt]
\bottomrule
\end{tabularx}
\end{table}

The results up until now assume that the behavior of the source is consistent, that is, the source begins its infection chain at the same point in time in each round.
While this is not true in all real-world settings, it is equivalent to taking multiple measurements in one iteration of a process.
In \Cref{sec:dyn} we study what happens when we lift this restriction and thus give the Adversary stronger capabilities.
Note that we still assume the Adversary does not know which nodes the Discoverer chooses to watch.
That is, while the source behavior may be dynamic, it may not depend on the actions of the Discoverer, thus, we call this model \emph{obliviously dynamic}.
Tragically, we are now able to show that any Discoverer algorithm that wins the game with constant probability must have a price of detection in \(\Omega(n^2)\).
Thus, not even randomization may save us from what is asymptotically the worst case.
On a more positive note, we show that if we strengthen the Discoverer slightly by allowing them to watch two nodes each round (instead of just one), we are still able to achieve a price of detection of \(O(n \log n)\) with constant probability on trees (if the static graph is known).
We finish off this line of inquiry by giving a proof that allowing the Discoverer to watch \(k\) nodes may only decrease the price of detection by at most a factor of \(k\) if the source behavior is consistent, thus this generalization may only be asymptotically different in the obliviously dynamic setting.

See \Cref{tbl:results} for an overview of our results  \ifpaper \else {for the source detection problem}\fi.
Note that most of our results are tight.
That is, for most settings, we are able to provide an asymptotically optimal Discoverer algorithm that wins the game with constant probability.

After this thorough study of graph discovery and source detection, we finish this thesis by giving a conclusion in \Cref{sec:thesis-conclusion}.
We point out further research directions in the promising study of spreading processes in temporal networks, which we hope will result in both exciting theoretical insights and applications helping us deal with everything from viral infections over water pollution to misinformation in social networks.

\FloatBarrier
\section{Related Work}
\label{sec:orgb70d616}
The SIR model and related models—such as the SIS model—are standard in mathematical biology \cite{Hethcote1989} and have been extensively studied  from a statistical perspective \cite{BRITTON201024}.
The SIR most closely models how viral infections spread through populations.
For example, adaptations of the model have been used to study COVID-19 \cite{KUDRYASHOV2021466,COOPER2020110057,covid-sir}.
The model can be used both with explicit reference to a graph on which the infections take place and as a stochastic model that abstracts over the individual nodes and their links and instead aims to capture the spreading behavior in a small number of parameters \cite{BRITTON201024,montagnon_stochastic_2019,JI20145067,XUE2017434,wang_predicting_2016}.

More recently, researchers have started to ask more algorithmic questions about infection models, aiming to answer questions such as source detection in addition to the classical task of predicting spread.
In particular, with their seminal paper on influence maximization in social networks,
Kempe et al.~\cite{independent-cascade}
set off this new wave of algorithmic study of infections.
They introduce the \emph{influence maximization problem} (though under the independent cascade model instead of SIR), which already incorporates the inherently temporal nature of spreading infections while the underlying graph is modeled as static.
Their work finds broad resonance with applications as diverse as the study of misinformation \cite{budak-limiting,fake-news-survey}, infectious diseases \cite{covid-sir,computer-immunology}, and viral marketing \cite{marketing-im-billion,10.1145/2896377.2901462}.
Later work considers similar algorithmic questions, including both graph discovery and source detection, under this and other models of infection.

Graphs may be discovered from different sources of information.
In the context of spreading processes, the most natural choice is to discover the graph from infection data on that graph.
This approach has received considerable attention
 \cite{infer-from-cascades,lokhov2016reconstructing,daneshmand2014estimating,netrapalli2012learning}.
 In their work,
Amin et al.~\cite{Amin2014LearningFC}
study graph discovery in a framework where, similar to our graph discovery model, the party wishing to discover the network may choose a set of initial nodes to be infected.
More recently,
Chistikov et al.~\cite{chistikov2024learning}
also consider a model where the observer may intervene in the spreading process.

The source detection problem has a rich literature and has even been studied on the SIR model we employ.
Concretely, in 2011, Shah and Zaman \cite{shah2011rumors} study the source detection problem under the SIR model.
In 2016, Zhu and Ying \cite{sir-source-detection} extend their work.
Other notable algorithmic contributions include the PVTA algorithm proposed by
Pinto et al.~\cite{pinto2012locating},
which focuses on estimating the source of an infection from observations on a sparse set of nodes.
More recently, the independent cascade model has been extended to source detection \cite{pmlr-v216-berenbrink23a}.

Since their introduction by
Kempe et al.~\cite{kempe2000connectivity},
temporal graphs have become one of the standard models to capture graphs where the set of accessible edges changes over time.
Other related approaches include dynamic graphs \cite{ferone2017shortest,dynamicNetworksSurvey}, or settings where graphs are only provided as a stream of edges \cite{graph-streams}.
Casteigts et al.~\cite{casteigts2012time}
provide a comprehensive hierarchical classification over this category of models, which they call \emph{time-varying graphs}.
Temporal graphs in particular have received considerable attention and have been a rich research field for fundamental algorithmics.
Of particular interest are questions related to connectivity in temporal graphs, such as path finding \cite{Danda,wu2014path}, separators \cite{Zschoche2017TheCO}, or connected subgraphs \cite{angrick_et_al:LIPIcs.ESA.2024.11,axiotis_size_approx_spanner,casteigts_fireworks_jour}.
In 2022,
Casteigts et al.~\cite{casteigts2012time}
gave sharp thresholds for the connectivity of random temporal graphs that were generated by a generalization of Erdős-Renyi graphs.
Temporal graphs have also received interest in more applied settings, such as delay-robust routing \cite{Fchsle2022TemporalCC} or routing with waiting-time constraints \cite{Casteigts2021FindingTP}.

The problems of node and edge exploration ask us to find a walk traversing all nodes or edges, respectively.
These problems have been studied on temporal graphs \cite{Erlebach2015OnTG,Akrida2018TheTE,Bumpus2021EdgeEO} and are loosely related to the spreading processes we study.
The two central differences are that (a) infection chains form trees instead of walks and that (b) the infection chains are fully determined by their seed nodes and times, while in the exploration problems, the explorer may deliberately choose where to go next.

Unfortunately, until now, there is a limited amount of research on spreading processes on temporal graphs.
Gayraud et al.~\cite{gayraud-evolving-social-networks}
first extend the study of influence maximization on both the independent cascade and linear threshold models to temporal graphs.
Enright et al.~\cite{Enright2018DeletingET}
study the problem of finding edges or edge labels to delete in order to minimize the spread of infections in a known temporal graph.
Similarly, new work on firefighting in temporal graphs asks which nodes to protect from a spreading process in order to minimize its spread \cite{Gupta2021SpreadAD,DBLP:conf/fun/HandEM22}.
Most recently,
Deligkas et al.~\cite{influencers},
study a number of variations of the influence maximization problem on temporal graphs under the SIS model (which is closely related to SIR).

No previous work studies graph discovery or source detection in temporal graphs.

\chapter{Preliminaries \label{sec:prelims}}
\label{sec:org2d26cc2}
For \(n \in \N^+\) and \(x \in \N\), define \([x, n] \coloneqq \{x, \dots, n\}\) and \([n] \coloneqq [1,n]\).
Write \(2\N\) for the set of even natural numbers and \(2\N + 1\) for the set of odd natural numbers.
For \(a \in \N, m \in \N^+\), set \(a \divOp m \coloneqq \lfloor a / m \rfloor\).
Note that \(a = (a \divOp m)m + a \modOp m\) and thus (for a fixed \(m\)) every integer has precisely one unique representation by \(\divOp\) and \(\modOp\).

Define a \emph{temporal graph} \(\G = (V, E, \edgeLabelOp)\) with lifetime \(\Tmax\) as an undirected graph \((V, E)\) together with a \emph{labeling function} \(\edgeLabelOp \colon E \to 2^{[\Tmax]}\).
We interpret this as an edge \(e \in E\) being present precisely at the time steps \(\edgeLabel{e}\).
We restrict ourselves to \emph{simple} temporal graphs, where each edge has exactly one label.
Abusing  notation, we therefore also use \(\edgeLabelOp\) as if it were defined \(E \to [\Tmax]\).
In \Cref{sec:multigraphs}, we investigate the graph discovery problem if we allow an edge to be present at more than one time step and see that many of our results translate.
We also write \(\nodeset{\G}\) for the nodes of \(\G\) and \(\edgeset{\G}\) for its edges.
Every temporal graph has an \emph{(underlying) static graph} \(G=(V, E)\) obtained by simply forgetting the edge labels.
A sequence of nodes \(v_1, \dots, v_\ell \in V\) is a \emph{temporal path} iff it is a path in the underlying static graphs and the labels along the edges are increasing (i.e.,~for all \(i \in [\ell -2]\), we have \(\edgeLabel{v_i v_{i+1}} < \edgeLabel{v_{i+1} v_{i+2}}\)).
For the rest of this section, let \(\G=(V, E, \edgeLabelOp{})\) be a temporal graph with lifetime \(\Tmax\).

Our model of temporal infection behavior is based on \cite{influencers} (and more historically flows from \cite{a-contribution} and \cite{pastor2015epidemic}).
We use the simple susceptible-infected-resistant (SIR) model, where all nodes start in a \emph{susceptible} state, might become \emph{infected}, remain infectious (i.e.,~attempt to infect their neighbors) for a fixed period of time \(\iphase\), and then become \emph{resistant} (i.e.,~cannot be infected for the rest of the lifetime).
Concretely, the \emph{infection chain} unfolds as follows.
At most \(k \in \N\) nodes may be infected by the Discoverer instead of their neighbors.
We call these \emph{seed infections} and they may occur at arbitrary points in time.
Formally, we denote the set of seed infections as \(S \subseteq V \times [0,\Tmax]\).
Otherwise, a node \(u\) becomes \emph{infected at time step \(t\)} if and only if it is susceptible, some node \(v\) is infectious at time step \(t\), and there is an edge \(uv\) with label \(t\).
Then, \(u\) itself is infectious for the time steps \(t+1\) until \(t+\iphase{}\), after which \(u\) becomes resistant.
Note that if a susceptible node has two or more infected neighbors at the same time,  it may be infected by an arbitrary one, but only one of them.
Thus, for a given set of seed infections, there can be multiple possible infection chains.

The \emph{infection log} of such a chain now consists of the information showing which node becomes infected by which of its neighbors at which time.
Formally, the infection log is a set \(L \subseteq V^2 \times [0,\Tmax]\) where \((u,v,t) \in L\)  means that \(u\) infected \(v\) at time step \(t\). We denote a seed infection at node \(u\) at time step \(t\) by  \((u,u,t) \in L\).
Similarly, call the \emph{infection timetable} the list \(T \subseteq V \times [0,\Tmax]\) obtained by only including information on when a node became infected but not the neighbor who caused the infection.
For graph discovery, we will later discuss a variation where only this infection timetable available to the Discoverer, see \Cref{sec:infection-time-feedback}.
We call an infection log \(L\)  \emph{consistent} with a given set of seed infections \(S\) if there is an instance of the above-described infection process seeded with \(S\) and producing \(L\).
Let consistency for infection timetables be defined similarly.

While, for the same set of seeds, there can be multiple consistent infection logs, there is exactly one consistent infection timetable.

\begin{lemma}
Let \(S \subseteq V \times [0,\Tmax]\), be a set of seed infections, and \(L_1\) and \(L_2\) be two infection logs consistent with \(S\). Then the induced infection timetables \(T_i \coloneqq \{(u,t) \mid (u,v,t) \in L_i\}\) (for \(i=1,2\))  are the same, that is, \(T_1 = T_2\).
\label{lem:infection-log-to-time-table}
\end{lemma}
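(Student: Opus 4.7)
The plan is to prove the claim by strong induction on the time step $t \in [0, \Tmax]$, showing that the two timetables agree on the set of nodes infected at time $t$, that is, $T_1 \cap (V \times \{t\}) = T_2 \cap (V \times \{t\})$.

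First I would make the following deterministic-timing observation about the SIR dynamics as described in the preliminaries: a node $u$ becomes infected at time step $t$ if and only if either $(u, u, t) \in L_i$ is a seed infection, or there exists a neighbor $v$ such that $v$ was infected at some time $t' \in [t - \iphase, t-1]$ and the edge $uv$ satisfies $\edgeLabel{uv} = t$. The text explicitly allows for multiple infectious neighbors to be candidates, but the non-determinism concerns only \emph{which} neighbor is credited with the infection in the log, not \emph{whether} the infection occurs nor \emph{when} it occurs. Hence the infection \emph{time} of each node is a function of the seed set and of the infection times of other nodes strictly before $t$.

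For the base case $t = 0$, only seed infections can occur, and since both $L_1$ and $L_2$ are consistent with the same seed set $S$, they yield the same infections at time $0$. For the inductive step, suppose $T_1 \cap (V \times [0, t-1]) = T_2 \cap (V \times [0, t-1])$. Then the set of nodes that are infectious at time $t$ (precisely those infected at some $t' \in [t - \iphase, t-1]$) is identical under both logs. Combined with the common seed set $S$ and the common edge labeling $\edgeLabelOp$, the observation above forces the set of nodes infected at time $t$ to coincide in $L_1$ and $L_2$. Summing over all $t \in [0, \Tmax]$ yields $T_1 = T_2$.

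The main subtlety will be to state the deterministic-timing observation cleanly enough that the inductive step is immediate, and in particular to handle seed infections that may occur at arbitrary times (not only $t = 0$) by folding them directly into the inductive step. No genuine combinatorial obstacle arises, because the model's only source of ambiguity is the attribution of an infection to one of possibly several infectious neighbors, which is exactly the information that the timetable discards.
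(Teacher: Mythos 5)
Your proposal is correct and takes essentially the same approach as the paper: both argue by induction on the time step that the set of nodes infected at each time, and hence the full state of the process, must coincide under \(L_1\) and \(L_2\), since the model's non-determinism only affects which infectious neighbor is credited, not whether or when a node becomes infected. The paper compresses this into a single sentence, whereas you make the induction and the deterministic-timing observation explicit.
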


\begin{proof}
It is easy to see that, inductively, in each time step, the state of all nodes must be identical under \(L_1\) and \(L_2\).
In particular, for each node, the time step in which it first becomes infected (if it ever gets infected) is the same, and thus \(T_1 = T_2\).
\end{proof}

\part{Graph Discovery \label{part:graph-discovery}}
\chapter{Modeling Temporal Graph Discovery \label{sec:game}}
\label{game}
We model temporal graph discovery as an interactive game where player A (the Discoverer) aims to discover information about the graph (e.g.,~all edges and labels in the graph or a node with a specific property) while player B (the Adversary) gets to influence the infection behavior (e.g.,~by picking the edge labels) with the goal of stopping or stalling player A from achieving their goal.

Which operations and what information are available to each player are interesting parameters whose effect we will investigate in \Cref{sec:extending}.
But first, see \cref{game:graph-discovery} for a description of the common structure of all variations.
As the default model, we assume the Discoverer learns the set of static edges in step \ref{game:graph-discovery:step-info-sharing}.
While this is a strong assumption, it is pleasant to work with, and we will later show that many of our results translate to other variations.
In particular, we make such a restrictive choice to ensure our lower bound results translate.

In \Cref{sec:ideal-patient-zero}, we consider a simpler objective.
There, the Discoverer must not find out the entire graph but is tasked with finding an \emph{ideal patient zero}, that is, a combination of node and time such that an infection starting at that node and time, leads to an infection chain engulfing the whole graph.
The ideal patient zero game is a valuable stepping stone in the study of the graph discovery game.

\begin{figure}[tbph]
\rule{\textwidth}{0.4pt}
\textbf{Input:} \(\Tmax, \iphase, k, n \in \N\)
\begin{enumerate}
\item The Adversary picks the \(n\) nodes of the graph.
\item The Discoverer learns the set of nodes and possibly additional information. \label{game:graph-discovery:step-info-sharing}
\item In each round, the Discoverer may submit up to \(k\) seed infections (each specifying a node and a time) to the Adversary. The Adversary responds with an infection log consistent with these seed infections. \label{game:graph-discovery:step-rounds}
\item After a finite number of rounds, the Discoverer declares that it has discovered the whole graph. Now, the Discoverer submits a temporal graph to the Adversary. If the Adversary can output a different temporal graph consistent with all previous infection logs, the Adversary wins. Otherwise, the Discoverer wins. \label{game:graph-discovery:step-decision}
\end{enumerate}
\vspace{-0.5\baselineskip} \rule{\textwidth}{0.4pt}
\caption{The graph discovery game \label{game:graph-discovery}}
\end{figure}

We measure the quality of a Discoverer by the number of rounds it needs to win the game.
\begin{definition}
For parameters \(\Tmax{}\), \(\iphase\), \(k\) and a static graph \(G\), define the \emph{graph discovery complexity} as the minimum number of rounds required for the Discoverer to win all possible graph discovery games.
\end{definition}

We start off by giving the simplest possible algorithm: brute-forcing all combinations of nodes and time steps.
This gives us a baseline to compare both more intricate algorithms and lower bounds to.

\begin{theorem}
There is an algorithm that wins the graph discovery game in \(\nodesetsize{G}\Tmax\) rounds.
\label{thm:brute-force}
\end{theorem}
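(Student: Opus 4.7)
The plan is to run one round of the game for each pair $(v, t) \in V \times [\Tmax]$, for a total of $\nodesetsize{G} \cdot \Tmax$ rounds. In the round assigned to the pair $(v, t)$, the Discoverer submits the single seed infection $(v, t-1)$; this is a valid seed time since $t \geq 1$ implies $t-1 \in [0, \Tmax]$. From the resulting infection log $L$ returned by the Adversary, the Discoverer scans for every tuple of the form $(v, u, t) \in L$ and records the edge $vu$ as having label $t$.

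For correctness, observe that because the only seed infection is $(v, t-1)$, node $v$ becomes infectious precisely at time step $t$ (and remains so through $t-1+\iphase$), while every other node is still susceptible at time $t$. Hence $v$ is the unique infectious node at time $t$, and by the SIR rule a neighbor $u$ of $v$ becomes infected at time $t$ if and only if the edge $vu$ carries label $t$. Each such infection must appear in $L$ as the tuple $(v,u,t)$, so this round cleanly reveals exactly the label-$t$ edges incident to $v$.

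After all $\nodesetsize{G}\Tmax$ rounds, every edge $vu$ of the static graph has been detected (in fact, twice: once in the round for $(v, \edgeLabel{vu})$ and once in the round for $(u, \edgeLabel{vu})$), and its label has been recorded. Combined with the static graph $G$, which the Discoverer learns in step~\ref{game:graph-discovery:step-info-sharing}, this uniquely determines the labeling $\edgeLabelOp$, so in step~\ref{game:graph-discovery:step-decision} the Discoverer submits the correct temporal graph and wins. There is essentially no obstacle; the only thing to ensure is that no earlier infection pollutes the time-$t$ signal at $v$, and using a single seed per round trivially does so. One could shave a factor of $\iphase$ by letting a single seed reveal $\iphase$ consecutive labels per node, but the crude bound $\nodesetsize{G}\Tmax$ does not require this optimization.
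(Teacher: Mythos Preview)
Your proof is correct and follows essentially the same approach as the paper: enumerate all pairs $(v,t)\in V\times[\Tmax]$, seed a single infection at $(v,t-1)$, and read off the label-$t$ edges incident to $v$ from the returned log. The paper's own proof is a one-line version of yours, simply noting that every edge is successfully traversed in at least one of these rounds; your more careful justification that $v$ is the unique infectious node at time $t$ is a welcome elaboration but not a different idea.
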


\begin{proof}
Consider the algorithm that performs the seed sets \(\{\{(v,t - 1)\} \mid v \in \nodeset{\G}, t \in [0, \Tmax-1]\}\).
Clearly, there is a successful infection along any edge at least once in these rounds.
Thus, the algorithm correctly discovers all edge labels.
\end{proof}
\chapter{Ideal Patient Zero \label{sec:ideal-patient-zero}}
\label{ideal-patient-zero}
Before we propose a better algorithm for the general graph discovery problem, we study a simpler problem: finding an ideal patient zero.

\begin{definition}
We call a node \(v\) an \emph{ideal patient zero with time \(t\)} if seed-infecting \(\{(v,t)\}\)  causes every node to be infected. In this case, we call \((v,t)\) an \emph{ideal patient zero pair}.
\label{def:ideal-patient-zero}
\end{definition}

In this section, we study a game similar to \cref{game:graph-discovery}. Define it identically except for a modification to the last step: Here the Discoverer submits a pair \((u,t) \in \nodeset{\G} \times [0,\Tmax]\) or some error \(\bot\) indicating it believes no ideal patient zero exists. The Adversary now replies with a graph \(\G=(V, E, \edgeLabelOp{})\) consistent with all infection logs.
If the Discoverer submitted a pair \((u,t)\), the Adversary wins if \(u\) is not an ideal patient zero at \(t\) in \(\G\). If the Discoverer submitted \(\bot\), the Adversary wins if there is some ideal patient zero in \(\G\).

A straightforward algorithmic approach to finding an ideal patient zero on a given temporal graph might be to check each node at the times when it has an adjacent edge.
To put it formally, for each combination \(u \in \nodeset{\defaultGraphName}\) and \(v \in \adjacentEdges{u}\), perform a round with the single seed infection \((u, \edgeLabel{u, v} - 1)\).
Surprisingly, there are instances where there is an ideal patient zero pair, but no ideal patient zero pair has this form.

\begin{lemma}
There is a temporal graph \(\G\) such that there is an ideal patient zero pair, but no ideal patient zero pair has the form \((u,\edgeLabel{e}-1)\) where  \(e\) is an edge adjacent to \(u\).
\end{lemma}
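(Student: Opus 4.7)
The plan is to exhibit a small, explicit temporal graph as a counterexample. The guiding intuition is that seeding a node $u$ at a time $t < \edgeLabel{e} - 1$ for every edge $e$ adjacent to $u$ gives $u$ an infectious window that omits at least one of its later adjacent edges. This omission can be essential: if $u$ were to directly infect a downstream node $z$ via such a late adjacent edge, then $z$ would become infectious too early and miss a still-later edge further down the chain. By seeding $u$ strictly earlier, we force $z$ to be reached through a long detour, positioning $z$'s infectious window just in time to complete the chain. Concretely, I would take the graph on six nodes $u, v, a, b, z, x$ with $\iphase = 3$ and edges $uv, va, uz, ab, bz, zx$ carrying labels $2, 3, 4, 5, 7, 10$ respectively.

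The first step is to verify that $(u, 0)$ is an ideal patient zero pair. Seeding $u$ at time $0$ gives $u$ the infectious window $[1, 3]$, so $u$ uses only $uv$ (label $2$) and not $uz$ (label $4$). The infection then walks the chain $u \to v \to a \to b \to z \to x$ using the successive labels $2, 3, 5, 7, 10$, each falling in the corresponding infectious window of length $\iphase = 3$. Crucially, $z$ is infected by $b$ at time $7$, so $z$'s window $[8, 10]$ catches $zx$ at time $10$, and every node is infected.

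The second step is to rule out all twelve candidate pairs $(w, \edgeLabel{e} - 1)$ with $w$ an endpoint of $e$. The central case is $(u, 1) = (u, \edgeLabel{uv} - 1)$: now $u$'s window $[2, 4]$ also covers $\edgeLabel{uz} = 4$, so $z$ is infected directly at time $4$, $z$'s window shrinks to $[5, 7]$, and $zx$ at time $10$ is missed, leaving $x$ uninfected. The other candidates fall to analogous short SIR simulations: either the seed is too late to use $uv$ at $2$ (cutting $u$, $v$, and the chain off from each other), or it activates $uz$ from one side or the other and reproduces the $z$-failure above, or it sits so far down the chain that the strictly increasing chain labels together with $\iphase = 3$ prevent propagation back to $u$. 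The main obstacle is precisely this bookkeeping; each individual check is a simulation of at most a dozen time steps, so the work is mechanical, but the real care lies in choosing the labels so that the detour infection of $z$ at time $7$ is simultaneously forced (in order to reach $zx$) and unreachable by any adjacent-edge-aligned seed.
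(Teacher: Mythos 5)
Your construction is correct: seeding $u$ at time $0$ infects the whole graph via the chain $u \to v \to a \to b \to z \to x$, while each of the twelve pairs $(w, \edgeLabel{e}-1)$ with $e$ adjacent to $w$ either triggers the shortcut $uz$ too early (so $z$'s infectious window closes before $zx$ at time $10$) or is simply too late/too far along the chain to reach all six nodes, and I checked all twelve. The paper proves this lemma the same way, by exhibiting an explicit counterexample in a figure (a four-node graph with $\iphase = 2$, ideal patient zero pair $(a,1)$ along a chain $a \to b \to c \to d$, with a shortcut edge that would misfire if seeded at any $\edgeLabel{e}-1$); your graph is a slightly larger instance of the identical idea, so the approaches are essentially the same.
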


\begin{proof}
See \Cref{fig:ideal-patient-zero-non-edge-label-minus-delta-solution} for a graph with the claimed property.
\end{proof}

\begin{figure}[tbhp]
\centering
\includegraphics[width=3cm]{./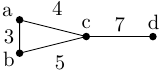}
\caption{\label{fig:ideal-patient-zero-non-edge-label-minus-delta-solution}In this temporal graph, for \(\iphase = 2\), no ideal patient zero pair has the form \((u, \edgeLabel{e} - 1)\) where \(e\) is an edge adjacent to \(u\). Also, \((a, 1)\) is an ideal patient zero pair (via the infection chain \(a \to b \to c \to d\)).}
\end{figure}

As we know from \Cref{thm:brute-force}, we can discover the entire temporal graph in \(\BigO(\nodesetsize{G} \cdot \Tmax)\).
Thus, to now obtain an interactive algorithm for the ideal patient zero problem, we need an algorithm to calculate an ideal patient zero (or determine its non-existence) from a given temporal graph.
To see that this is easily calculable, consider \Cref{lem:ideal-patient-zero-deterministic}.

\Cref{def:ideal-patient-zero} is not obviously precise, as our model allows for multiple different infection logs consistent with the same set of seed infections. Specifically, if a node \(u\) is susceptible at a time step where it has two infected neighbors, it could become infected via either edge.
Fortunately, these tie breaks are  essentially irrelevant for the ideal patient zero problem, and thus our problem is well-defined.

\begin{lemma}
Let \(\defaultGraphName\) be a graph and \((u,t)\) such that there is an infection log consistent with a seed infection only at \((u,t)\) in which all nodes become infected, then in all infection chains consistent with a seed infection only at \((u,t)\) all nodes become infected.
\label{lem:ideal-patient-zero-deterministic}
\end{lemma}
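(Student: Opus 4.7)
The plan is to reduce this statement directly to \Cref{lem:infection-log-to-time-table}, which does the real work. That earlier lemma establishes that, for a fixed set of seed infections $S$, \emph{any} two infection logs $L_1, L_2$ consistent with $S$ induce the same infection timetable $T_1 = T_2$. The present lemma is essentially a corollary: the predicate ``all nodes become infected'' depends only on the timetable (i.e.\ on which nodes ever appear in $T$), not on the particular choice of infecting neighbor, and the timetable is an invariant of the seed set.

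Concretely, I would instantiate \Cref{lem:infection-log-to-time-table} with the singleton seed set $S = \{(u,t)\}$. By hypothesis there exists some infection log $L^\star$ consistent with $S$ in which every node of $V$ becomes infected; hence the induced timetable $T^\star = \{(w, t_w) \mid (v, w, t_w) \in L^\star\}$ contains an entry for every $w \in V$. Now let $L$ be any other infection log consistent with the same seed. \Cref{lem:infection-log-to-time-table} yields $T = T^\star$, where $T$ is the timetable induced by $L$. In particular, for every $w \in V$ there is a triple of the form $(v', w, t_w) \in L$, so $w$ becomes infected under $L$ as well.

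I do not anticipate any real obstacle, because the tie-breaking ambiguity that the lemma is meant to address is exactly what \Cref{lem:infection-log-to-time-table} already rules out at the level of timetables. The only thing worth writing down carefully is the observation that ``all nodes become infected'' is a statement about the \emph{support} of the timetable (i.e.\ the set $\{w : \exists\, t \text{ with } (w,t) \in T\}$), and hence is invariant under the equality $T_1 = T_2$. Once that is made explicit, the proof is a one-line invocation of the earlier lemma.
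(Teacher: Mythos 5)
Your proposal is correct and matches the paper exactly: the paper's proof is literally ``This follows directly from \Cref{lem:infection-log-to-time-table}.'' You have simply spelled out the (short) argument the paper leaves implicit, namely that ``all nodes become infected'' is a property of the timetable's support and the timetable is uniquely determined by the seed set.
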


\begin{proof}
This follows directly from \Cref{lem:infection-log-to-time-table}.
\end{proof}

This leads us to the conclusion that simply enumerating and simulating all \(\Tmax \nodesetsize{G}\) possible seed node and time combinations yields a polynomial time offline algorithm to find an ideal patient zero pair (or conclude that there is none) given a known temporal graph.

\begin{corollary}
There is a Discoverer algorithm winning the ideal patient zero game using \(\Tmax \nodesetsize{G}\) rounds.
\end{corollary}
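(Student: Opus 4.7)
The plan is to combine the brute-force graph discovery algorithm of \Cref{thm:brute-force} with the offline simulation enabled by \Cref{lem:ideal-patient-zero-deterministic}. Concretely, the Discoverer first executes the round schedule of \Cref{thm:brute-force}, performing the $\nodesetsize{G} \Tmax$ seed sets $\{(v, t-1)\}$ for every $v \in \nodeset{\G}$ and every $t \in [0, \Tmax - 1]$. By the proof of that theorem, this causes at least one successful infection across every edge $e$ at time $\edgeLabel{e}$, so the Discoverer learns the full labeling $\edgeLabelOp$ and thus reconstructs the temporal graph exactly.

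Once $\G$ is fully known, no further interaction with the Adversary is required. The Discoverer iterates offline over every candidate pair $(v, t) \in \nodeset{\G} \times [0, \Tmax]$ and simulates the SIR dynamics from the single seed infection $(v, t)$. By \Cref{lem:ideal-patient-zero-deterministic}, whether all nodes become infected does not depend on the tie-breaking of the infection chain, so any deterministic simulation suffices. If some $(v,t)$ infects every node, the Discoverer submits that pair; otherwise, it submits $\bot$.

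Correctness of the submission follows because the Discoverer's reconstructed graph must agree with the true $\G$ (every edge label was observed), so $(v,t)$ is an ideal patient zero pair in the true graph iff the offline simulation confirmed it. The total number of rounds used in the interactive phase is exactly $\nodesetsize{G}\Tmax$, matching the bound claimed in the corollary. The only potential subtlety is that the offline computation step is not counted against the round complexity, which is consistent with the game's cost measure being defined purely in rounds; this is a non-obstacle but worth mentioning explicitly so the bound is not misread.
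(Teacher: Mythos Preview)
Your proof is correct and follows essentially the same approach as the paper: first run the brute-force discovery of \Cref{thm:brute-force} to learn the full temporal graph in $\nodesetsize{G}\Tmax$ rounds, then offline enumerate and simulate all candidate pairs $(v,t)$, using \Cref{lem:ideal-patient-zero-deterministic} to justify that the simulation outcome is well-defined. The paper's justification is in fact briefer than yours, so your write-up is, if anything, more explicit about the correctness argument.
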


This Discoverer algorithm for the ideal patient zero game now works by applying the Discoverer algorithm that discovers the whole graph, and then computing the ideal patient zero pair offline from the discovered graph.
Unfortunately, the number of rounds is large (recall from section \Cref{thm:brute-force} that in the same number of rounds we can discover the whole graph).

This leads us to ask if we can find an algorithm that better uses the structure of the temporal graph.
In particular, observe that an ideal patient zero implies the existence of an infection chain spanning the whole graph.
By relaxing the notion of this spanning infection chain to only include information local to a single node, we observe that two edges can only ever partake in the same infection chain if they are connected via a series of edges whose time label differs by at most \(\iphase\).
From this idea, there naturally arises a new connectivity parameter for temporal graphs, which captures the constraint that nodes only stay infectious for a limited time, and thus infection chains must respect this type of waiting time constraint.

\begin{definition}
Let \(\defaultTemporalGraph\) be a temporal graph and \(\iphase \in \N^+\).
Consider the relation that links two edges if their time difference at a shared endpoint is at most \(\iphase\).
Let \(EC_\iphase\) be the partitioning of edges obtained by taking the transitive closure of this relation.
We call these the resulting equivalence classes the \emph{\(\iphase\)-edge connected components} of \(\G\).
\end{definition}

An infection originating at a node \(u\) can only infect a node \(v\) if they each have an edge in the same \(\iphase\)-edge connected component.
The opposite is not true, as you can see in \Cref{fig:delta-connected-not-infectious}.

\begin{figure}[tbhp]
\centering
\includegraphics[width=5cm]{./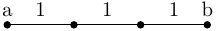}
\caption{\label{fig:delta-connected-not-infectious}While nodes \(a\) and \(b\) each have an edge in the same \(\iphase\)-edge connected component (regardless of the choice of \(\iphase\)), no infection originating at \(a\) can ever infect \(b\).}
\end{figure}

By definition, any infection chain caused by infecting a single seed node (thus also that of an ideal patient zero) must be contained in a single \(\iphase\)-edge connected component.
This motivates us to propose \Cref{alg:follow} which is faster as it only requires \(\BigO(\edgesetsize{G})\) rounds.
See \Cref{fig:follow-execution} for an illustration of an exemplary execution.

\begin{algorithm}[tbph]
    \DontPrintSemicolon
    \SetKwFunction{FMain}{Follow}
    \SetKwFunction{ex}{Explore}
    \SetKwProg{Fn}{fun}{:}{}
    \Fn{\FMain{$G$, $\iphase$}}{
        1. Pick node $v_0 \in \nodeset{G}$ arbitrarily.

        2. For \(i \in \left [0, \left \lceil \Tmax / \iphase \right \rceil \right]\), perform a round with seed infection \((v_0, i \iphase)\). Record successful infections.

        3. For each edge $v_0u$ that successfully infects: \ex{$e, \edgeLabel{e}$}.
    }
    \Fn{\ex{$u$, $t$}}{
    1. For each \(t' \in \{t-\iphase-1, t-1, t\}\):
        If there has been no round with seed infection $(u, t')$, seed an infection $(u, t')$. Store that this has been done.

    2. For each newly infected edge $uv$ with infection time $t$, do $\ex{v,t}$.
    }
    \caption{The $\FMainHack$ algorithm discovers the neighbors of $v_0$ and calls $\exHack$ on them. $\exHack$ then discovers the respective $\iphase$-edge connected component.}
    \label{alg:follow}
\end{algorithm} %

The next lemma is the crucial property that allows us to prove the correctness of \Cref{alg:follow} as it allows us to later argue that we do not miss relevant edges.

\begin{lemma}
Let \(v\) be a node that is seed-infected at time step \(t\) in the execution of the \(\FMain\) algorithm (\Cref{alg:follow}) and \(e\) an edge adjacent to \(v\) with label in \(t+1\) to \(t+\iphase\). Then there is a round such that there is a successful infection via \(e\) (from \(v\) or to \(v\)).
\label{lem:follow-edge}
\end{lemma}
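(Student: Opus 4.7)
The plan is to focus on the round in which $v$ is seed-infected at time $t$. Let $u$ be the other endpoint of $e$ and set $t_e \coloneqq \edgeLabel{e}$. Since $t_e \in [t+1, t+\iphase]$, node $v$ is infectious at $t_e$ in this round. If $u$ is susceptible at time $t_e$ here, then $e$ fires from $v$ to $u$ and we are done. Otherwise, since $v$ is the only seed in this round, $u$ must have been infected earlier via an alternate chain from $v$, say $v = w_0, w_1, \ldots, w_k = u$ with $k \geq 2$ and each $w_i$ infected at time $s_i = \edgeLabel{w_{i-1} w_i}$, satisfying $s_1 \geq t+1$ and $s_k < t_e$.

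I would then invoke the recursive structure of $\FMain$ and $\ex$ to show that $\ex$ is eventually called on $u$ with time argument $s_k$. The idea is that $v_0$'s seed-infections in $\FMain$ trigger $\ex$ on every successfully infected neighbor, and each $\ex$ call in turn triggers $\ex$ on its own newly infected neighbors. Walking along $w_0, w_1, \ldots, w_k$ and applying this observation at each step, one obtains that $\ex(u, s_k)$ is eventually invoked. In that call, $u$ is seed-infected at time $s_k - 1$, making $u$ infectious throughout $[s_k, s_k + \iphase - 1]$. Combining $s_k \geq t+1$ with $t_e \leq t+\iphase$ yields $t_e \in [s_k + 1, s_k + \iphase - 1]$, so $u$ is infectious at $t_e$ in this new round.

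If $v$ is susceptible at $t_e$ in the round seeding $u$ at $s_k - 1$, the edge $e$ fires from $u$ to $v$ and we are done. Otherwise, $v$ was itself blocked by a fresh chain originating from $u$ in this new round, and one would repeat the argument with the roles of $v$ and $u$ swapped. To rule out an infinite back-and-forth, I would take a minimum-length blocking chain across all rounds and all choices of blocked endpoint, and derive a contradiction by exhibiting a strictly shorter blocking chain in another round (the recursion contracts because each new chain lies strictly inside the previous chain's time window). The main obstacle will be to formalize the propagation claim for $\ex$ along $w_0, \ldots, w_k$, which requires an auxiliary statement that $\ex$ correctly traverses a $\iphase$-edge connected component, and to make the minimality argument for termination fully rigorous.
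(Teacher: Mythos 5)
Your high-level strategy matches the paper's: pass to the round where $v$ is seeded at $t$, observe that if $e$ does not fire then $u$ must have been reached by an alternate chain, seed $u$ near its actual infection time, and swap the roles of $v$ and $u$. However, two steps in your proposal do not go through as stated.

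First, your termination measure is wrong. You propose to take a \emph{minimum-length} blocking chain and exhibit a strictly shorter one, but chain length need not decrease: the new chain lives in the strictly smaller time window $[s_k, \edgeLabel{e}]$ instead of $[t{+}1, \edgeLabel{e}]$, which caps its \emph{maximum} possible length, but the actual chain in the new round can easily be as long or longer than the original (e.g.\ original chain $v\to w_1 \to u$ with labels $t{+}1, t{+}2$, and a new chain with four edges using labels $t{+}2,\dots,t{+}5$ inside $[t{+}2,\edgeLabel{e}]$). The paper instead descends on the quantity $\edgeLabel{e}-t$ (the ``overtaking budget''), which strictly shrinks because the new seed time $s_k-1$ is strictly larger than $t$; your observation that each new chain lies in a strictly smaller window is exactly this fact, but you attached the wrong measure to it. Second, the walking argument along $w_0,\dots,w_k$ to get $\ex(u,s_k)$ invoked is both unnecessary and risks circularity: you yourself note it would require the auxiliary claim that $\ex$ correctly traverses a $\iphase$-edge connected component, but that is precisely \Cref{cor:follow-whole-component}, which the paper derives \emph{from} this lemma. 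The paper sidesteps this: since the \emph{last} edge $e'$ of the alternate chain does fire in the round under consideration, the recursion immediately triggers $\ex(u,\edgeLabel{e'})$, whose step 1 seeds $u$ at $\edgeLabel{e'}-1$; no induction over the whole chain is needed. (A small further slip: you write $s_k < t_e$, but because ties between simultaneously infectious neighbors are broken arbitrarily, $u$ could be infected at exactly $t_e$ by some other edge, so the correct condition is $s_k \le t_e$.)
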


\begin{figure}[t]
\centering
\includegraphics[width=\textwidth]{./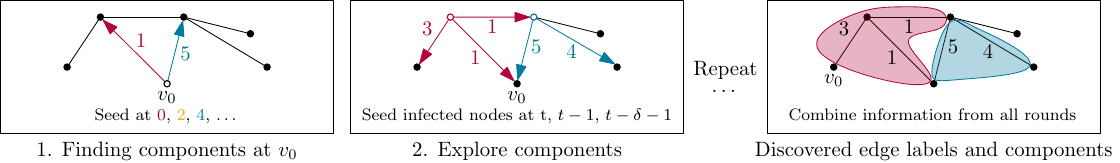}
\caption{\label{fig:follow-execution}An execution of the \(\FMainHack\) algorithm. In this example, \(\iphase = 2\). We perform the initial search for an edge at \(v_0\). Ringed nodes indicate seed infections. Next, we explore the two nodes we found in the initial search by performing seed infections at \(t\), \(t-1\), and \(t - \iphase - 1\) where \(t\) is the time step we observed the node being infected. We repeat that until we have no such seed infections to perform without repetitions. Since no \(\iphase\)-edge connected component spans all nodes, we conclude that no ideal patient zero pair exists.}
\end{figure}

\begin{proof}
Let \(e = uv \in \edgeset{G}\), \(t \in [0, \Tmax]\) such that \(v\) is seed-infected at \(t\) during the execution of the \(\FMain\) algorithm.
We argue the property via downwards induction over \(\edgeLabel{e} - t\) and call this the \emph{overtaking budget}.
Intuitively, it is the time in which \(u\) could be infected by some other path than via \(e\).
Observe that if \(\edgeLabel{e} - t = 1\), the infection attempt along \(e\) must be successful as any other path to \(u\) has hop-length at least two and thus takes at least two time steps because we have strictly increasing paths.

If the infection along \(e\) is successful in the currently considered round, we are done.
So assume the infection attempt along \(e\) is unsuccessful.
Then \(u\) must be infected via a different path \(p'\) at or before \(\edgeLabel{e}\).
Let \(e' \ne e\) be the last edge on \(p'\).
By line 1 in the \(\ex\) algorithm, there must be a round where \(u\) is seed-infected at \(\edgeLabel{e'} - 1\).
Note that since \(p'\) is strictly increasing and has at least hop-length two, we have \(\edgeLabel{e'} -1 > t\). Thus, the overtaking budget for that round is strictly smaller than \(\edgeLabel{e} - t\).
Note that \(v\) and \(u\) swap roles in this recursive application, but this is immaterial as the edges are undirected.

Since the overtaking budget strictly decreases with recursive applications, it must reach 1, at which point the infection attempt along \(e\) must be successful, as argued above.
\end{proof}

\begin{corollary}
If \(\FMain\)  discovers an edge from an \(\iphase{}\)-edge connected component, then it discovers all edges from that component.
\label{cor:follow-whole-component}
\end{corollary}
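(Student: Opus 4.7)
My plan is to prove the corollary by induction on the length of a chain of edges connecting the initially discovered edge to any other edge in the same component. By definition of $\iphase$-edge connected components, any two edges $e$ and $e'$ in the same component are linked by a sequence $e = e_0, e_1, \ldots, e_k = e'$ such that consecutive edges $e_{i-1}$ and $e_i$ share an endpoint $v_i$ with $|\edgeLabel{e_i} - \edgeLabel{e_{i-1}}| \leq \iphase$. The base case $k = 0$ is immediate, so I focus on the inductive step: assuming $e_{k-1}$ has been discovered, I show that $e_k$ is discovered.

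The key subclaim I will establish first is that whenever $\FMain$ discovers an edge $\bar{e}=xy$, the algorithm eventually calls $\ex$ on \emph{both} endpoints at time $\edgeLabel{\bar{e}}$. Given this, I apply it to $e_{k-1}$ to conclude that $\ex(v_k, \edgeLabel{e_{k-1}})$ is eventually called, so line~1 of $\ex$ seeds $v_k$ at the three times $\edgeLabel{e_{k-1}} - \iphase - 1$, $\edgeLabel{e_{k-1}} - 1$, and $\edgeLabel{e_{k-1}}$. Taking the union of the three windows of the form $[s+1, s+\iphase]$ covered by these seeds yields the entire interval $[\edgeLabel{e_{k-1}} - \iphase, \edgeLabel{e_{k-1}} + \iphase]$ of labels reachable from $v_k$. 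Since $\edgeLabel{e_k}$ lies in this interval and $e_k$ is adjacent to $v_k$, \Cref{lem:follow-edge} immediately yields that $e_k$ is discovered, closing the induction.

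The main obstacle is the subclaim, specifically showing that \emph{both} endpoints of a discovered edge $\bar{e}$ receive an $\ex$ call at time $\edgeLabel{\bar{e}}$. The easier half is direct: in any round where $\bar{e}$ is observed, its receiver $R$ is newly infected at $\edgeLabel{\bar{e}}$, and the algorithm explicitly calls $\ex(R, \edgeLabel{\bar{e}})$. To also obtain an $\ex$ call on the sender $S$, I exploit the round that the $\ex(R, \edgeLabel{\bar{e}})$ call itself spawns via the seed infection $(R, \edgeLabel{\bar{e}}-1)$: in that round, $R$ is the unique infectious node at time $\edgeLabel{\bar{e}}$ (having just become infectious), so $S$ is still susceptible, and in a simple temporal graph the only way $S$ can be infected at that moment is via $\bar{e}$ itself. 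Thus $S$ is a receiver in this round, triggering $\ex(S, \edgeLabel{\bar{e}})$. This symmetrization is what makes the inductive step go through even when $v_k$ happens to be the sender (rather than the receiver) of $e_{k-1}$ in the round where $e_{k-1}$ was first discovered.
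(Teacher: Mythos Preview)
Your proof is correct and matches the paper's intended route: the corollary is stated immediately after \Cref{lem:follow-edge} without a separate proof, so the expected argument is precisely the induction along a defining chain $e_0,\dots,e_k$ in the $\iphase$-edge connected component, invoking \Cref{lem:follow-edge} at each step.

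The one nontrivial detail you add---and which the paper glosses over---is the subclaim that discovering an edge $\bar e$ forces $\ex$ to be called on \emph{both} endpoints at time $\edgeLabel{\bar e}$. The paper implicitly relies on this (the running-time analysis counts ``at most $\edgeLabel{e}$, $\edgeLabel{e}-1$, and $\edgeLabel{e}-\iphase-1$ per endpoint''), but never argues it. Your symmetrization via the seed $(R,\edgeLabel{\bar e}-1)$ is exactly the right justification: with that single seed, $R$ is the unique infectious node at time $\edgeLabel{\bar e}$, so $S$ is still susceptible and must be infected via $\bar e$, triggering $\ex(S,\edgeLabel{\bar e})$. Without this step the induction would stall whenever $v_k$ happens to be the sender rather than the receiver in the round where $e_{k-1}$ is first observed, so you have correctly identified the only real gap and closed it.
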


With this tool in hand, we can proceed to prove the correctness of the \(\FMain\) algorithm.

\begin{theorem}
The \(\FMain\) algorithm (\Cref{alg:follow}) correctly solves the ideal patient zero problem.
\label{thm:follow-correct}
\end{theorem}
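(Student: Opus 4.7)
The plan is to show that after $\FMain$ terminates, the Discoverer knows every $\iphase$-edge connected component that contains an edge incident to $v_0$, and that this suffices to answer the ideal patient zero question offline against any Adversary-consistent graph. The key structural observation is that the infection chain of any single seed never leaves one $\iphase$-edge connected component, so an ideal patient zero must lie in a component whose edges reach every node, including $v_0$.

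To discharge the discovery claim, I would first argue that for every edge $e = v_0 u$ with label $t \in [1, \Tmax]$ there is an $i$ with $i\iphase < t \le i\iphase + \iphase$, so the initial loop of $\FMain$ seeds $v_0$ in a round whose infectious window covers $t$. \Cref{lem:follow-edge} then guarantees that some round of the execution records a successful infection along $e$, and once any one edge of an $\iphase$-edge connected component has been observed, \Cref{cor:follow-whole-component} ensures that the recursive $\ex$ calls explore the rest of that component. For the decision step, the Discoverer would then enumerate every candidate pair $(v,t)$ with $v$ appearing in a discovered component and simulate the resulting infection chain, which is well-defined by \Cref{lem:ideal-patient-zero-deterministic}; it outputs any pair infecting all $n$ nodes and outputs $\bot$ otherwise.

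Correctness against the game's consistency check is the part I expect to require the most care. If the Discoverer commits to $(v, t)$, the observed log already certifies the full spanning chain, which uses only edges of one fully known component $C^*$; additional edges the Adversary may later reveal outside $C^*$ cannot redirect this chain, because infections do not cross component boundaries. If instead the Discoverer outputs $\bot$ while some consistent graph admits an ideal patient zero $(v^\star, t^\star)$, the resulting chain must reach $v_0$, placing its component among those fully discovered by the first step, and the offline simulation would already have identified $(v^\star, t^\star)$, a contradiction. The main obstacle is exactly this last consistency argument: the Adversary could in principle try to add unobserved edges to fabricate a new ideal patient zero, and ruling this out relies crucially on $\iphase$-edge connectivity as a barrier separating what has been discovered from what has not.
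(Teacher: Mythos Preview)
Your proposal is correct and follows essentially the same approach as the paper: discover all $\iphase$-edge connected components touching $v_0$ via \Cref{cor:follow-whole-component}, then use the fact that any ideal patient zero's infection tree lies entirely within one such (fully known) component to decide the question offline. Your handling of the two consistency cases is a bit more explicit than the paper's, though note that in the basic model the static edge set is already given to the Discoverer, so the Adversary's residual freedom is only in the \emph{labels} of undiscovered edges, not in ``additional edges'' as you phrase it; the point that rules out interference is precisely that \Cref{cor:follow-whole-component} forces every edge the Adversary could place inside the relevant component to already be discovered.
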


\begin{proof}
Let \(v_0\) be the start vertex picked in the algorithm.
Observe that the loop in step 2 of \(\FMain{}\) discovers at least one edge from each \(\iphase{}\)-edge connected component adjacent to \(v_0\).
Applying \Cref{cor:follow-whole-component}, we can see that then the algorithm discovers all edges in these components.
To finish the proof we are left to show that if we know the labels of edges that are in the same \(\iphase\)-connected edge component as any edge at \(v_0\), we can find an ideal patient zero if it exists.

Assume that there is an ideal patient zero.
Now, take any infection chain caused by seed-infecting this ideal patient zero pair.
We construct the \emph{directed tree of successful infections} by only including edges along which there was a successful infection and by directing all edges in the direction along which the infection traveled.
By definition of the ideal patient zero pair, this tree spans the entire graph.
Thus, one of its edges is adjacent to \(v_0\).
By definition, the directed tree of infections must be in a single \(\iphase\)-connected edge component.
Therefore, all edge labels of the tree as well as all other edge labels that could affect the outcome of the ideal patient zero infection (edges in different components do not interact if there is only one seed infection) are known at the end of the algorithm.

Since we learn all relevant edges through the algorithm, we can decide the existence of an ideal patient zero.
\end{proof}

\begin{theorem}
The \(\FMain\) algorithm terminates after at most \(6 \edgesetsize{G} + \left \lceil T_\mathrm{max}/\iphase \right \rceil\) rounds.
\end{theorem}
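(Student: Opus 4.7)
The plan is to bound the number of rounds contributed by the two distinct phases of the algorithm separately, namely the initial sweep in step 2 of $\FMain$ and all invocations of $\ex$.

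First I would count the initial phase. Step 2 of $\FMain$ performs exactly one seed infection per value of $i$, contributing $\left\lceil \Tmax / \iphase \right\rceil$ rounds (up to an unimportant off-by-one). This is the second summand in the bound.

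Next I would bound the rounds produced by $\ex$. The key bookkeeping observation is that step 1 of $\ex(u,t)$ introduces at most three \emph{new} seed infections, namely at times $t-\iphase-1$, $t-1$, and $t$, because the algorithm records each executed seed infection and skips any combination $(u,t')$ that has already been used. Thus it suffices to bound the total number of $\ex$ invocations by $2 \edgesetsize{G}$, as this yields at most $3 \cdot 2 \edgesetsize{G} = 6 \edgesetsize{G}$ additional rounds.

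To bound the invocations, I would charge each call of $\ex$ to a directed edge-discovery event. Every $\ex$-call is triggered either by step 3 of $\FMain$ (one per edge $v_0 u$ whose successful infection was observed in step 2) or by a recursive step inside some $\ex(u, t')$ (one per edge $uv$ newly infected during the seed infections of that call, with recursion on $\ex(v, \edgeLabel{uv})$). In either case, the trigger is the first observation of a successful infection along a specific edge in a specific direction. Because each undirected edge has exactly two endpoints, it can give rise to at most two such first-observation events across the entire execution, so the total number of $\ex$ invocations is at most $2 \edgesetsize{G}$. Combining the two counts gives the claimed bound of $6 \edgesetsize{G} + \left\lceil \Tmax / \iphase \right\rceil$ rounds.

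The only genuine subtlety is the charging argument in the previous paragraph: one must be sure that no edge triggers more than two $\ex$-calls and that the bookkeeping truly enforces the ``at most three new seed infections per call'' bound. Both points are immediate from the stored-seed-infection set maintained in step 1 of $\ex$ together with the semantics of ``newly infected edge''; the rest is addition.
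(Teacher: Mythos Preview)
Your proposal is correct and follows essentially the same two-phase decomposition as the paper. The only cosmetic difference is in the accounting for the $\ex$ phase: the paper charges each seed infection directly to a pair (edge $e$, endpoint $u$) via the three possible offsets $\edgeLabel{e}$, $\edgeLabel{e}-1$, $\edgeLabel{e}-\iphase-1$, whereas you first bound the number of $\ex$ invocations by $2\edgesetsize{G}$ and then multiply by three---both yield the same $6\edgesetsize{G}$.
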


\begin{proof}
The search of edges adjacent to \(v_0\) takes \(\left \lceil T_\mathrm{max}/\iphase \right \rceil\) rounds.
The \(\ex\) sub-algorithm uses at most 6 rounds per edge \(e\) (at most \(\edgeLabel{e}\), \(\edgeLabel{e}-1\), and \(\edgeLabel{e}-\iphase-1\) per endpoint).
This yields the desired bound.
\end{proof}

The algorithm can be more closely analyzed to yield an \(O(\min(m, n T_\mathrm{max}) + T_\mathrm{max}/\iphase)\) round bound, but this does not seem useful since, even as soon as \(nT_\mathrm{max} \le m\), we might as well use the trivial algorithm from \Cref{thm:brute-force} to discover the whole graph (not just this subset) in \(O(n T_\mathrm{max})\) rounds.
\chapter{Deriving a Graph Discovery Algorithm \label{sec:fd-algorithm}}
\label{fd-algorithm}
We now extend the ideas of the previous section to obtain a better graph discovery algorithm.
Interestingly, the \(\FMain\) algorithm (\cref{alg:follow}) explores precisely the \(\iphase\)-edge connected components adjacent to the start node \(v_0\), because calling the explore subroutine on a single edge explores precisely its \(\iphase\)-edge connected component.
Since we have discovered a whole graph precisely iff we have discovered all its \(\iphase\)-edge connected components, we can derive \cref{alg:follow-discover} which discovers the entire graph.

    \begin{algorithm}[tbhp]
        \DontPrintSemicolon
        \SetKwFunction{FDMain}{DiscoveryFollow}
        \SetKwFunction{ex}{Explore}
        \SetKwProg{Fn}{fun}{:}{}
        \Fn{\FDMain{$G$, $\iphase$}}{
            \While{there is a node $v_0$ with an adjacent edge for which the label is still unknown}{
                In steps of size $\iphase$ perform rounds such that $v_0$ is infection at every time step in $[\Tmax]$ in at least one round.

For each edge $e$ which successfully infects from $v_0$ to a neighbor: $\ex{$u$, $\edgeLabel{e}$}$.
            }
        }
    \caption{The graph discovery extension of the \(\FMainHack \) algorithm.}
    \label{alg:follow-discover}
    \end{algorithm}

\begin{theorem}
\Cref{alg:follow-discover} wins any instance of the graph discovery game on a graph \(\G\) in at most \(6\edgesetsize{\G} + \decc{\G} \left\lceil \Tmax/\iphase \right \rceil\) rounds.
\label{thm:follow-discover}
\end{theorem}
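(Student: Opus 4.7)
The plan is to mirror the analysis of \(\FMain\) and then amortize across the outer loop iterations. First I would observe that a single iteration of the outer loop of \Cref{alg:follow-discover} essentially runs \(\FMain\) from the chosen starting node \(v_0\): seeding at \(v_0\) in steps of \(\iphase\) through \([\Tmax]\) is exactly the step that finds at least one edge per \(\iphase\)-edge connected component adjacent to \(v_0\), and the subsequent \(\ex\) calls are identical. Hence by \Cref{cor:follow-whole-component} (together with the argument of \Cref{thm:follow-correct}), every \(\iphase\)-edge connected component containing an edge incident to \(v_0\) is fully discovered during that iteration. Because \(v_0\) is chosen to have some unknown adjacent edge, at least one previously undiscovered component is completed per iteration; so the outer loop runs at most \(\decc{\G}\) times, and upon termination every edge has been discovered, which suffices for the Discoverer to win.

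For the round count, I would bound the two phases of each iteration separately. The seed-at-\(v_0\) phase uses \(\lceil \Tmax/\iphase \rceil\) rounds because \(v_0\) stays infectious for \(\iphase\) time steps, so a stride of \(\iphase\) covers all of \([\Tmax]\); summed over at most \(\decc{\G}\) outer iterations this contributes the \(\decc{\G}\lceil \Tmax/\iphase \rceil\) summand. The \(\ex\) phase I would amortize globally across the entire algorithm: each edge of \(\G\) lies in exactly one \(\iphase\)-edge connected component and the components discovered in distinct outer iterations are disjoint, so each edge participates in at most one recursion tree of \(\ex\). Reusing the per-edge bound from \(\FMain\), at most three distinct seed infections are issued per endpoint (for the three shifts \(t-\iphase-1,\, t-1,\, t\), each gated by the ``store that this has been done'' check), for at most \(6\) rounds chargeable to each edge, giving the \(6\edgesetsize{\G}\) summand.

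The main obstacle I expect is arguing cleanly that no round is double-counted. Within a single \(\ex\) recursion the bookkeeping flag suffices, but across different outer-loop iterations I need the discovered components to be genuinely disjoint. This I would deduce from \Cref{cor:follow-whole-component}: each iteration takes a component with at least one unknown edge and makes every one of its edges known, so a later iteration picking some \(v_0\) with an unknown adjacent edge must be exploring a strictly new component and cannot reissue seed infections that were already issued in earlier iterations' \(\ex\) recursions (and in any case, the stored flags persist globally, so repeated seed infections are never performed). Once this is nailed down, summing the two bounds yields the claimed \(6\edgesetsize{\G} + \decc{\G}\lceil \Tmax/\iphase \rceil\).
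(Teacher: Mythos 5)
Your proposal is correct and follows essentially the same route as the paper's proof: correctness via \Cref{cor:follow-whole-component}, the outer loop bounded by \(\decc{\G}\), and the key observation that the \(6\edgesetsize{\G}\) term amortizes globally (the paper phrases this as ``the factor only applies to the second summand'' and ``in essence we avoid duplication''). You spell out the disjointness and persistent-flag argument more explicitly than the paper does, but the underlying decomposition and key lemma are identical.
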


Note, this running time now depends on the edge labels and not only on the static part of the graph.
The algorithm crucially depends on the temporal connectivity structure of the graph to be discovered.

\begin{proof}
The correctness follows since, by \Cref{cor:follow-whole-component}, every call of the \(\ex\) algorithm discovers all edges which share a \(\iphase\)-edge connected component with one edge adjacent to the start vertex (i.e.,~the one picked as \(v_0\) in step 1).
Since we explore all \(\iphase\)-edge connected components, we discover all edges.

For the running time, observe that the loop in  \cref{alg:follow-discover} runs at most \(\decc{\G}\) times.
Since by the same argument as before, there are at most \(6\) infections per edge (in essence we avoid duplication), the factor only applies to the second summand, yielding the stated bound.
\end{proof}
\chapter{Lower Bounds for Temporal Graph Discovery \label{sec:witnesses}}
\label{witnesses}
Let us build a small toolkit to prove lower bounds for our graph discovery game.
Observe that at the beginning of the graph discovery game, every edge could have every possible label.
As the game progresses, the Adversary has to reveal information about the edge labels to the Discoverer, thus reducing the labels any edge could have.
In particular, if an infection attempt via an edge is successful, this completely fixes the label of the edge (the only label consistent with information the Adversary has revealed to the Discoverer is the time of the successful infection).
If an infection attempt is unsuccessful, but one of the endpoints stayed susceptible, this reduces the set of possible labels by at least one (precisely by the time stamp of the failed infection attempt).
We can use these observations to derive a potential argument technique for finding lower bounds on the graph discovery complexity.

\begin{theorem}
Let \(\defaultTemporalGraph\) be a temporal graph and \(T_\mathrm{max}\), \(\iphase\), \(k\) parameters for the graph discovery game. For a sequence of seed infection sets \(S_1, \dots, S_a\), we define \(\Phi(i)\) as the sum of the sizes of the sets of consistent labels over all edges after rounds 1 to \(i\). Then, \(S_1, \dots, S_a\) is a \emph{witnessing schedule} iff \(\Phi(a) = \edgesetsize{\G}\).
\label{thm:potential-argument}
\end{theorem}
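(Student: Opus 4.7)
My plan is to exploit the trivial lower bound $\Phi(i) \geq \edgesetsize{\G}$ and then show that equality is exactly the condition characterizing a witnessing schedule (i.e., a schedule after which the Discoverer can win step \ref{game:graph-discovery:step-decision}). I would first observe that since the Adversary's responses are always truthful, the true label of each edge remains consistent throughout the game, so every edge has a nonempty set of consistent labels. Summing over the $\edgesetsize{\G}$ edges yields $\Phi(i) \geq \edgesetsize{\G}$, with equality iff each edge has exactly one consistent label.

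For the $(\Leftarrow)$ direction, assume $\Phi(a) = \edgesetsize{\G}$, so each edge's label is pinned to a unique value. Hence there is exactly one temporal graph consistent with the observed infection logs, namely the true graph (which exists and witnesses the singletons). The Discoverer submits this graph and wins, because no alternative consistent graph exists for the Adversary to produce.

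For the $(\Rightarrow)$ direction, I would argue by contrapositive. If $\Phi(a) > \edgesetsize{\G}$, some edge $e$ admits at least two consistent labels $t_1 \neq t_2$. Unfolding the definition of consistency, there are two temporal graphs $\G_1, \G_2$ both compatible with every observed infection log but disagreeing on the label of $e$. Whichever graph the Discoverer submits in step \ref{game:graph-discovery:step-decision}, the Adversary can output the other one and win, so the schedule is not witnessing.

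The main subtlety will be to justify carefully that a per-edge consistent-label set really reflects the space of globally consistent temporal graphs — that is, that a label $t$ lies in edge $e$'s consistent set iff some temporal graph consistent with all the logs assigns $t$ to $e$. This is the natural reading of the phrasing in the statement, and with it in hand both directions become essentially immediate: the $(\Rightarrow)$ direction exhibits a global witness $\G_j$ for each surplus label, while the $(\Leftarrow)$ direction is secured by uniqueness of any consistent labeling together with the fact that the true graph is always one such labeling.
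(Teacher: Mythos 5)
Your proof is correct and takes essentially the same route as the paper's, which consists only of the one-line observation that if some edge still has more than one consistent label, the Adversary can return a different consistent graph in the final step and win — i.e., your $(\Rightarrow)$ direction by contrapositive — while treating the $(\Leftarrow)$ direction and the bound $\Phi(i)\ge\edgesetsize{\G}$ as immediate. The subtlety you flag is a non-issue under the intended reading: the per-edge consistent-label set for $e$ is precisely the set of labels $t$ such that some temporal graph consistent with all observed logs assigns $t$ to $e$, which is exactly what makes both directions go through as you describe.
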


\begin{proof}
Clearly, if there is more than one consistent label for some edge, the Adversary will always win step \ref{game:graph-discovery:step-decision} in the graph discovery game (\cref{game:graph-discovery}).
\end{proof}

Recall that we have an algorithm for graph discovery taking \(\BigO(n \Tmax)\) rounds (see \Cref{thm:brute-force}).
We will now show that, in the worst case, this can be improved by at most a factor of \(\iphase k\).

\begin{theorem}
For all \(\iphase, k \in \N^+\) and \(\Tmax \ge 4\), there is an infinite family of temporal graphs \(\{\G_n\}_{n \in \N}\) such that the graph \(\G_n\) has \(\Theta(n)\) nodes and the minimum number of rounds required to satisfy the graph discovery game grows in \(\Omega(n (T_\mathrm{max} - 3) / (\iphase k))\). Also, all graphs in the family are temporally connected.
\label{thm:temporally-connected-graph-discovery-lower-bound}
\end{theorem}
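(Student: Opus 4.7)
I would apply the potential argument of \Cref{thm:potential-argument}: it suffices to exhibit a family of temporally connected graphs $\G_n$ on $\Theta(n)$ nodes together with an Adversary strategy that reduces $\Phi$ by at most $\O{\iphase k}$ per round. Since a witnessing schedule must drive $\Phi$ from $\Phi(0) = \Theta(n \Tmax)$ down to $\edgesetsize{\G_n} = \Theta(n)$, this yields the claimed $\Omega(n(\Tmax - 3)/(\iphase k))$ bound on the round count.

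For the construction, I would build $\G_n$ from $\Theta(n)$ node-disjoint gadgets, each carrying exactly one ``hard'' edge $e_i$ whose label the Adversary will ultimately commit to a value in the central window $I = [3, \Tmax - 1]$ of size $\Tmax - 3$. The gadgets are stitched into a temporally connected host graph by a light backbone of ``bridge'' edges whose labels the Adversary fixes up front from the boundary set $\{1, 2, \Tmax - 1, \Tmax\}$. The structural invariant I aim for is chain-isolation: once an infection has traversed one hard edge at some $t \in I$, the only outgoing edges on which it could continue are late bridges with labels in $\{\Tmax - 1, \Tmax\}$, which in turn cannot feed into any further hard edge before the lifetime ends. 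Hence every time-respecting infection chain touches at most one hard edge.

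For the per-round bound, I would fix the Adversary to the greedy rule: on any infection attempt across a hard edge $e$ at time $t$, respond \emph{failure} whenever the current consistent label set $\Lambda_e$ has an element outside $[t+1, t+\iphase]$, which shrinks $\Lambda_e$ by at most $\iphase$ elements; otherwise $\Lambda_e \subseteq [t+1, t+\iphase]$ is already small, so respond \emph{success} with an arbitrary remaining label, shrinking $\Lambda_e$ by at most $\iphase - 1$ further elements. Combined with the chain-isolation invariant, each of the $k$ seeds in a round hits the label set of at most one hard edge and contributes at most $\iphase$ units to the decrease of $\Phi$; bridge edges contribute only a constant to $\Phi$ once they have been probed a handful of times. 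Thus the total drop in $\Phi$ per round is $\O{k \iphase}$.

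Putting the pieces together gives the lower bound of $\Omega(n(\Tmax - 3)/(\iphase k))$ rounds. The main obstacle I anticipate is the simultaneous design of the backbone so that (i) the resulting graph is genuinely temporally connected under the strict-increase definition and (ii) the chain-isolation invariant used in the per-round bound holds against a worst-case Discoverer; the natural device I would try is a two-layer backbone consisting of an ``ingress fan'' carrying label $1$ or $2$ and an ``egress fan'' carrying label $\Tmax - 1$ or $\Tmax$, so that any chain that traverses a hard edge has already spent its temporal budget and cannot reach a second hard edge before the lifetime ends.
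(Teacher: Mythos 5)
Your high-level framework matches the paper's: a greedy Adversary that answers ``failed'' on unfixed edges, the potential argument of \Cref{thm:potential-argument}, a linear set of ``hard'' edges whose labels must be pinned down, and a fixed-label backbone for temporal connectivity. But the concrete device you propose for the backbone — an ingress fan with labels in $\{1,2\}$ — breaks the very per-round bound you need. A single seed placed at, or near, an ingress-fan contact node at time $0$ reaches the ingress hub by time $1$ or $2$, and the hub then simultaneously infects all the other gadget contacts by time $2$. Each of those contacts becomes infectious from time $3$ onward, which lands squarely inside your ``central window'' $[3,\Tmax-1]$, so each one probes (and removes up to $\iphase$ labels from) its own hard edge. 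That is a drop of $\Theta(n\iphase)$ in a single round, not $\O{\iphase k}$. Your chain-isolation invariant does not prevent this because the infection never traverses a hard edge — it travels exclusively along the fixed ingress edges, and then every gadget contact probes its hard edge locally. Pushing the hard-edge endpoints away from the ingress contacts does not help either: if the only way into a gadget's interior is via a late fixed label, the interior nodes cannot temporally reach any other gadget (their earliest outgoing label is already near $\Tmax$), so temporal connectivity fails — which is exactly the obstacle you flag.

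The paper resolves this by dispensing with the ingress fan entirely. It uses a path $v_1,\dots,v_{n-2}$ of unfixed edges plus two hub nodes $v_{n-1}$ and $v_n$, with all bridge labels concentrated at the very end of the lifetime: $v_{n-1}$ connected to every $v_i$ at $\Tmax-2$, the edge $v_{n-1}v_n$ at $\Tmax-1$, and $v_n$ connected to every $v_i$ at $\Tmax$. Temporal connectivity follows because every ordered pair $(u,v)$ is joined by the three-hop strictly increasing path $u\to v_{n-1}\to v_n\to v$ at times $\Tmax-2,\Tmax-1,\Tmax$. Crucially, any node infected via the backbone is infected no earlier than $\Tmax-2$; path nodes infected this way receive the infection at time $\Tmax$ (simultaneously, so no susceptible/infectious overlap across hard edges) and are infectious only from $\Tmax+1$ on, past the lifetime. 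Thus only the $k$ seed nodes themselves ever shave labels off hard edges, giving the $\O{\iphase k}$ per-round drop. Adopting a late-only backbone of this form is the missing piece in your plan.
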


Consider \Cref{fig:temporally-connected-graph-discover-lower-bound} for an illustration of this family of graphs.

\begin{figure}[tbhp]
\centering
\includegraphics[width=6cm]{./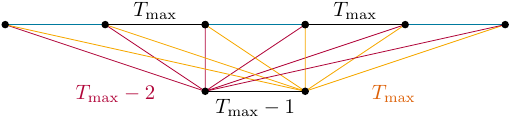}
\caption{\label{fig:temporally-connected-graph-discover-lower-bound}The construction from the proof of \Cref{thm:temporally-connected-graph-discovery-lower-bound}.}
\end{figure}

\begin{proof}
Let \(n, \iphase \in \N^+\) be arbitrary and let \(\Tmax \ge 4\). For simplicity, assume \(n\) is even. We will fix some edge labels, meaning the Adversary will always pick them to be the same value. Construct the host graph as follows:
\begin{enumerate}
\item \(v_1, \dots, v_{n-2}\) form a path of \(n-2\) nodes.
\item \(v_{n-1}\) has an edge with a fixed label \(T_\mathrm{max}-2\) to each of the  nodes \(v_1, \dots v_{n-2}\),
\item \(v_{n-1}\) and \(v_{n}\) share an edge with the fixed label \(T_\mathrm{max}-1\), and
\item \(v_{n}\) has an edge with a fixed label \(T_\mathrm{max}\) to each of the  nodes \(v_0, \dots v_{n-2}\).
\end{enumerate}

We assume the Discoverer acts arbitrarily and describe responses for the Adversary, which yields a lower bound as desired.

First observe that the graph is temporally connected since for all \(i,j \in [n-2]\), the path \(v_i \to v_{n-1} \to v_{n} \to v_j\) is temporal.
The nodes \(v_{n-1}\) and \(v_{n}\) are connected to all others in a similar fashion.

For an infection attempt along an edge whose time label we fixed during the construction, the Adversary responds according to that time label.
For all other infection attempts, the Adversary answers “infection failed” as long as the set of possible consistent labels for an edge is non-empty after that answer.
Otherwise, it replies “infection successful” with an arbitrary label from the set of consistent answers.

We apply the potential argument from \cref{thm:potential-argument} to all edges that do not have a fixed label.
Call these edges \emph{relevant}.
As the potential for a subset of edges is a lower bound for the potential of all edges, this restriction is sound.
Of these non-fixed edges there are \((n-3)/2\).
In the beginning, for every edge there are \(T_\mathrm{max}\) possible labels, thus clearly \(\Phi(0) \ge (n-3)/2 \cdot T_\mathrm{max}\).

Now, for every node which is infected, at most \(\iphase\) labels per adjacent relevant edge are removed from the set of consistent labels, because (1) if the Adversary answers “infection failed” for all \(\iphase\) infection attempts, either because the target is resistant or because the Adversary has decided against picking this label, at most \(\iphase\) labels are made impossible or (2) if the Adversary replies “infection successful” then there must have been \(\iphase\) or fewer consistent labels left.

Combining these arguments, we see that the potential decreases by at most \(\iphase k\) in each round. Dividing the starting potential by this decrease, yield that any Discoverer needs at least
$$
\left \lfloor \frac{n\cdot (T_\mathrm{max}-3)}{2 \iphase k} \right \rfloor
$$
rounds to win the game with the described Adversary.
\end{proof}

We are now ready to derive two lower bounds that almost match the upper bound of the \(\FDMain\) algorithm.

\section{Witness Complexity}
\label{sec:orgb2cfeb5}
Intuitively, the witness complexity of a temporal graph is the minimum number of rounds a Discoverer who knows all the labels needs to perform to convince an observer of the correctness of the claimed labeling.
We give a formal definition.

\begin{definition}
A length \(a\) \emph{witnessing schedule} for a temporal graph \(\defaultTemporalGraph\) is a sequence of seed infection sets \(S_1, \dots, S_a\) such that after performing \(a\) rounds with the respective seed infection sets, all labels in the graph are uniquely determined by the logs of these rounds.
The \emph{witness complexity} of a temporal graph \(\G\) is the length of the shortest witnessing schedule for it.
\end{definition}

Observe that the graphs constructed in the proof of \Cref{thm:temporally-connected-graph-discovery-lower-bound} have witness complexity \(O(n)\), which is significantly less than their complexity in the graph discovery game.
Yet, witness complexity proves a powerful tool for lower bounds on the graph discovery complexity, especially in cases where we look for a bound not including \(\Tmax\).
But first, let us state the formal relationship between the two complexities.

\begin{lemma}
Let \(\G\) be a temporal graph and \(T_\mathrm{max}\), \(\iphase\), \(k\) parameters as defined above. Then the witness complexity of this instance is at most as large as its graph discovery complexity.
\end{lemma}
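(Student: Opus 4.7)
The plan is to extract a witnessing schedule from any winning Discoverer strategy. Fix the temporal graph $\G$ and let $r$ denote its graph discovery complexity. By definition, there is a Discoverer strategy $\mathcal{D}$ that wins every instance of the graph discovery game on $\G$'s underlying static graph in at most $r$ rounds.

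First, I would simulate $\mathcal{D}$ against one specific Adversary: the one that, in step \ref{game:graph-discovery:step-rounds}, always returns an infection log obtained by actually running the SIR process on $\G$ with the submitted seed infections (this is well-defined since such a log always exists; any ties among simultaneous infecting neighbors may be broken arbitrarily). Because $\mathcal{D}$ reacts deterministically to the Adversary's answers, this single play-through produces a concrete, fixed sequence of seed infection sets $S_1, \dots, S_a$ with $a \le r$, together with their infection logs.

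Second, I would argue that $S_1, \dots, S_a$ is a witnessing schedule for $\G$. By construction, the observed logs are consistent with $\G$, so $\G$ lies in the set of temporal graphs consistent with them. Since $\mathcal{D}$ wins the game, step \ref{game:graph-discovery:step-decision} forces this set to be a singleton: no graph different from the one $\mathcal{D}$ submits can be consistent with the logs. Hence $\G$ is the unique temporal graph consistent with the logs of $S_1, \dots, S_a$, which means every label of $\G$ is uniquely determined by these logs — the defining property of a witnessing schedule.

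Combining the two steps yields a witnessing schedule of length at most $r$, so the witness complexity of $\G$ is at most its graph discovery complexity. The only point that requires care is that $\mathcal{D}$ is adaptive whereas a witnessing schedule is a fixed sequence; this is handled by freezing the particular $S_1, \dots, S_a$ generated during the single fixed play against the $\G$-truthful Adversary. No deeper obstacle remains — in essence, winning the graph discovery game against a truthful Adversary on $\G$ is the same as producing a witnessing schedule for $\G$.
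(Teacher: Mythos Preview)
Your proposal is correct and follows the same approach as the paper, which simply notes that recording the seed infection sets of any winning online algorithm yields a witnessing schedule. You spell out more carefully the point the paper leaves implicit, namely that one fixes a single play of the adaptive Discoverer against a truthful Adversary on $\G$ to obtain a concrete sequence $S_1,\dots,S_a$.
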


\begin{proof}
This follows directly from the fact that recording the seed infection sets of any online algorithm yields a witnessing schedule upon termination.
\end{proof}

As stated above, the witness complexity technique is limited in its power, as it can only ever be at most the number of edges in a graph.

\begin{theorem}
For any instance \((\G, \Tmax{}, \iphase, k)\), the witness complexity is at most \(\edgesetsize{\G}\).
\end{theorem}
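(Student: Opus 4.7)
The plan is to exhibit, for a given temporal graph $\defaultTemporalGraph$, an explicit witnessing schedule of length exactly $\edgesetsize{\G}$ by processing one edge per round. Specifically, I would enumerate the edges as $e_1, \dots, e_m$ where $m = \edgesetsize{\G}$, and for each edge $e_i = u_i v_i$ with label $t_i \coloneqq \edgeLabel{e_i}$, choose an arbitrary endpoint (say $u_i$) and set the $i$-th seed infection set to be the singleton $S_i \coloneqq \{(u_i, t_i - 1)\}$. Since $t_i \in [\Tmax] = [1, \Tmax]$, we have $t_i - 1 \in [0, \Tmax]$, so this is a valid seed infection. I would then argue that after running these $m$ rounds, the label of each edge is uniquely pinned down by the resulting logs.

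The heart of the argument is showing that the log of round $i$ necessarily contains the entry $(u_i, v_i, t_i)$, which in turn forces $\edgeLabel{e_i} = t_i$ in any consistent labeling. For this, I would observe that in round $i$ the only seed is $(u_i, t_i - 1)$, so $u_i$ becomes infected at time $t_i - 1$ and, by the SIR dynamics, only becomes infectious starting at time $t_i$. Therefore any node infected in this round must be reached by an infection chain starting from $u_i$ whose first edge carries a label at least $t_i$. In particular, no node can be infected before time $t_i$, so $v_i$ is still susceptible at time $t_i$. Since edge $e_i$ is present at time $t_i$ and both $u_i$ (infectious) and $v_i$ (susceptible) are endpoints of it, $v_i$ becomes infected at time $t_i$. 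Moreover, at time $t_i$ the only infectious node is $u_i$, so the infection of $v_i$ at time $t_i$ can only be attributed to $e_i$; hence every consistent infection log for this round records $(u_i, v_i, t_i)$.

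From the recorded tuple $(u_i, v_i, t_i)$ in the $i$-th log, any labeling function consistent with the logs must assign label $t_i$ to edge $e_i$, since this is the only label that permits the recorded infection. Iterating over $i$, every edge label is uniquely determined, which is precisely the witnessing property. Since the schedule has length $m$, this proves the claim.

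The main subtlety I anticipate is the step establishing that $v_i$ is still susceptible at time $t_i$ in every consistent chain of the $i$-th round; it hinges on the fact that $u_i$ is not infectious until exactly time $t_i$, so no infection can propagate from $u_i$ to $v_i$ by any path strictly before $t_i$, and any indirect path of length at least two from $u_i$ to $v_i$ would only deliver an infection to $v_i$ at time strictly greater than $t_i$. The argument is straightforward once phrased carefully but deserves an explicit sentence to rule out alternative infection chains that the Adversary might try to exhibit.
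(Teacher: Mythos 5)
Your proof is correct and matches the paper's approach: the paper also enumerates the edges, sets $S_i$ to a seed infection at an endpoint of $e_i$ at time $\edgeLabel{e_i}-1$, and asserts the resulting schedule is witnessing. The paper gives this as a two-line argument; your write-up supplies the detail it omits, namely that $u_i$ first becomes infectious only at $t_i$, so no node (in particular $v_i$) can be infected strictly before $t_i$ in round $i$, forcing the log entry $(u_i, v_i, t_i)$ and hence pinning the label.
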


\begin{proof}
Let \(e_1, \dots, e_{\edgesetsize{\defaultGraphName}}\) be an arbitrary numbering of the edges, and set \(S_i \coloneqq \{(e_i, \edgeLabel{e_i}-1\}\).
This is a witnessing schedule of length \(\edgesetsize{\defaultGraphName}\).
\end{proof}

We now show that this worst case is actually tight, and there are graphs that asymptotically require about one round per edge to witness correctly.
Observe that this bound is irrespective of \(k\),  thus does not have one of the major shortfalls of our previous lower bounds for the graph discovery game.

Let us start by giving the main theorem of this section.
\begin{theorem}
There is an infinite family of temporal graphs \(\{\G_n\}_{n \in \N}\) such that their witness complexity grows in \(\Omega_k(\edgesetsize{\G_n})\).
\label{thm:omega-m-witness-complexity}
\end{theorem}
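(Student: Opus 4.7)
To prove the lower bound, I will construct a family of star graphs and apply the potential-argument technique of \Cref{thm:potential-argument}. For each $n \in \N$, let $\G_n$ be the star with center $c$ and leaves $v_1, \ldots, v_n$, with edge labels $\edgeLabel{c v_i} = i$, infectious period $\iphase = 1$, and lifetime $\Tmax = n$, so $\edgesetsize{\G_n} = n$. The initial potential is $\Phi(0) = \edgesetsize{\G_n} \cdot \Tmax = n^2$, while any witnessing schedule needs $\Phi(a) = \edgesetsize{\G_n} = n$; hence the required total reduction is $\Theta(n^2)$. The heart of the argument, and the main obstacle, is to bound the per-round reduction of $\Phi$ by $\BigO(n)$ independently of $k$.

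The per-round bound will rest on two structural observations about stars with $\iphase = 1$. First, at most two chain infections occur per round: the center $c$ can be infected at most once, and once infected it is infectious for exactly one time step, so at most one $v_i \to c$ and at most one $c \to v_j$ can succeed. Each such chain infection fixes one edge's label and therefore contributes at most $\Tmax - 1 = n - 1$ to the potential reduction. Second, eliminations from failed infections are also limited: a single $c$-seed at time $t_c$ can eliminate the label $t_c + 1$ from at most $n - 1$ edges $c v_j$, and each leaf $v_i$ can be effectively seeded at most once per round (since additional seeds at an already-infected node have no effect), so all leaf-seed eliminations together contribute at most $n$ per round.

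The crucial feature of the resulting per-round bound $2(n - 1) + (n - 1) + n = 4n - 3 \in \BigO(n)$ is its $k$-independence: even if the Discoverer submits arbitrarily many seeds in a round, only $\BigO(n)$ of them can cause meaningful state changes, because each node can transition from susceptible to infected at most once per round. The subtle part to verify is precisely this $k$-independence claim, which requires checking that duplicate or late-arriving seeds at already-affected nodes neither fire additional chains nor enable additional eliminations. Once established, applying \Cref{thm:potential-argument} yields $a \cdot (4n - 3) \geq n^2 - n$, and hence any witnessing schedule for $\G_n$ has length $a \in \Omega(n) = \Omega(\edgesetsize{\G_n})$, as claimed.
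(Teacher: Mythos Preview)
Your argument is correct and takes a genuinely different—and much more elementary—route than the paper. The paper builds an elaborate graph on $5x$ nodes partitioned into sets $L, R, B, C$, with $L \times R_2$ forming a bipartite block, edge-disjoint Hamiltonian paths threading through $R$, and gadget edges through $B$ and $C$ that force all of $L \cup R$ to become infected at the end of each \emph{phase} of length $\iphase = 4x+1$. The core of the paper's argument is a notion of \emph{relevant infection attempt} for the $x^2$ edges in $L \times R_2$; a lengthy case analysis (\Cref{lem:active-phases-per-round}, \Cref{lem:relevant-attempts-per-phase}) shows that at most one relevant attempt can happen per round, while \Cref{lem:number-relevant-attempts} shows one is needed per edge, giving $\Omega(x^2) = \Omega(m)$.

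Your star with $\iphase = 1$ sidesteps all of this machinery by making the center $c$ a structural bottleneck: $c$ is infectious for exactly one time step per round, so it can fire at most one outgoing chain infection and eliminate at most one label per leaf edge; each leaf, being effectively seeded once, can likewise eliminate at most one label and fire at most one chain infection toward $c$. The potential from \Cref{thm:potential-argument} then drops by at most $\BigO(n)$ per round, independently of $k$, and the $n^2$ starting potential forces $\Omega(n) = \Omega(m)$ rounds. What your approach buys is a drastically shorter, self-contained proof requiring no gadgets or case analysis. What the paper's construction buys (though the theorem statement does not demand it) is a dense family with $m = \Theta(n^2)$ and $\iphase$ growing with $n$, rather than a tree with $\iphase = 1$; if one cared about those parameter regimes, the paper's construction would be needed.
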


The construction of this family is a bit more involved.
First, we define the family of graphs we will study for the rest of this section.
We then define when we call an infection attempt \emph{relevant}, in the sense that it makes meaningful progress towards winning the witness complexity game.
Lastly, we show that there can be at most one such relevant infection attempt per round in the family of graphs we study.
From this we conclude the desired lower bound.

For the graph construction, let \(x \in \N\) be our size parameter. We construct a graph of size \(n \coloneqq 5x\). We define four subsets of nodes, let
\begin{align*}
L &\coloneqq \{\ell_1, \dots, \ell_x\}, \\
R &\coloneqq \{r_1, \dots, r_{2x}\}, \\
B &\coloneqq \{b_1, \dots, b_x\},\\
C &\coloneqq \{c_1, \dots, c_x\}.
\end{align*}

Set \(V(\G_n) \coloneqq L \sqcup R \sqcup B \sqcup C\). Notate by \(R_2\) the nodes in \(R\) with an even index.

Before we give a formal definition of the edges and their labels, let us briefly give an intuition for the construction: \(L\) and \(R\) form a complete bipartite graph.
These are the edges we are actually interested in, that is, for whom we will examine how quickly an algorithm can prove their label.
The correct labels will be such that the edges between a single node in \(L\) and all nodes in \(R\) are in their own \emph{phase} (a contiguous set of time steps, we make this rigorous later).
The nodes in \(R\) are connected carefully to ensure that once a node in \(R\) becomes infected, the infection spreads through \(R\) and no other node in \(R\) becomes infected from \(L\).

We add \(B\) and \(C\) as gadgets to ensure that at the end of a phase, all nodes in \(L\) and \(R\) become infected via edges not between \(L\) and \(R\) such that no further information can be gained about those edges.

First, let \(P = \{p_1, \dots, p_x\}\) be a set of \(x\) edge-disjoint Hamiltonian paths on \(R\).
The existence of such a path is proven in \cite{axiotis_approx}.
Note that by their construction, for every path, there are at most two nodes with an odd index in a row.
Assume without loss of generality that each \(p_j\) begins at \(r_{2j}\).
Furthermore, write \(r_{p(i,j)}\) for the \(j\)-th node from \(R_2\) on the path \(p_i\).
And write \(p^{-1}(i,j)\) for its index on \(p_i\).

Then, \(E(\G_n) \coloneqq L \times R_2 \sqcup R^2 \sqcup L\times B \sqcup B\times R\sqcup B^2 \sqcup C \times R \sqcup \{b_ic_i\mid i \in [x]\}\). Now, set
\begin{align*}
&\text{for }\ell_i \in L, j \in [x], &&\text{ set } \edgeLabel{\ell_i, r_{p(i,j)}} \coloneqq i(4x+1) + 4(p^{-1}(i,j))+1,  \\
&\text{for } p_q \in P, j\in[x], &&\text{ set } \edgeLabel{(p_q)_j, (p_q)_{j+1}} \coloneqq q(4x+1) + 2j+1,  \\
&\text{for } b_i \in B, j \in [x], &&\text{ set } \edgeLabel{b_i, r_{(p_i)_j)}} \coloneqq i(4x+1) + 2j + 2,  \\
&\text{for } b_i \in B, b_j \in B, i<j, &&\text{ set } \edgeLabel{b_i, b_j} \coloneqq (i+1)\cdot(4x+1) -2, \\
&\text{for } i \in [x] &&\text{ set } \edgeLabel{b_i, c_i} \coloneqq (i+1)\cdot(4x+1) - 1, \\
&\text{for } i > j \in [x] &&\text{ set } \edgeLabel{b_i, c_j} \coloneqq (i+1)\cdot(4x+1) - 2, \\
&\text{for } c_i \in C, r_j \in R &&\text{ set } \edgeLabel{c_i, r_j} \coloneqq (i+1)\cdot(4x+1), \\
&\text{for } \ell_i \in L, b_j \in B, &&\text{ set } \edgeLabel{\ell_i, b_j} \coloneqq (j+1)\cdot(4x+1).
\end{align*}
Let \(k\) be arbitrary, \(T_\mathrm{max} \coloneqq x(4x+1)\), and \(\iphase \coloneqq 4x+1\).

Before we start our argument, let us formalize the notion of phases.
Observe that the edges can be partitioned into \(x\) sets, each having labels in a fixed interval of size \(\iphase\).

\begin{definition}
Formally for \(i \in [x]\), let \(E_i = \{e \in E(\G_n) \mid \edgeLabel{e} \in [\iphase \cdot i, \iphase \cdot (i+1)-1]\}\).
We refer to these intervals and their edges as \emph{phases}.
\end{definition}

We are now ready to give a formal definition of when an infection attempt is relevant.
\begin{definition}
Call an infection attempt between some \(l_i \in L\) and \(r_{2j} \in R_2\) \emph{relevant} if (a) it happens at \(\edgeLabel{l_i r_{2j}}\) and is successful or (b) it happens at \(\edgeLabel{l_i r_{2j}} - 1\), was unsuccessful and exactly one of the endpoints was infected.
\end{definition}

With these definitions in hand, let us outline the central properties from which the result follows:
\begin{itemize}
\item we need at least one relevant infection attempt per edge in \(L \times R_2\) to win the witness complexity game,
\item we can have at most one relevant infection per phase, and
\item we can have at most one phase with a relevant infection per round.
\end{itemize}

\begin{lemma}
If, at the end of the witness complexity game, the Prover wins, there must have been one relevant infection attempt for every edge in \(L \times R_2\).
\label{lem:number-relevant-attempts}
\end{lemma}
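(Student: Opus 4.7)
The plan is to argue by contraposition. Suppose some edge $e_0 = \ell_i r_{2j} \in L \times R_2$ received no relevant infection attempt over the course of the game. I will exhibit an alternative labeling $\lambda'$ of the host graph that differs from the true labeling $\edgeLabelOp$ (on $e_0$, and possibly on a single partner edge used in a label swap) yet reproduces every infection log observed during the witnessing schedule. Since the logs cannot then distinguish $\edgeLabelOp$ from $\lambda'$, the labeling of $\G_n$ is not uniquely determined, so the prover has not actually won.

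The natural candidate for $\lambda'$ is a local swap within phase $i$: shift $\edgeLabel{e_0}$ to an adjacent label in the phase-$i$ window (for instance $\edgeLabel{e_0} - 1$), exchanging labels with the current owner of that label if necessary. Because the phases are separated by at least $\iphase$ time steps and the SIR dynamics only propagate infection along strictly increasing label sequences within a single $\iphase$-edge connected component, no infection chain in phase $i$ can interact with an edge in a different phase. Consequently, the swap cannot alter any log entry outside phase $i$, and it suffices to verify consistency within phase $i$.

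The definition of \emph{relevant} is tailored precisely so that a non-relevant attempt cannot detect this swap. In case~(a) no successful infection traversed $e_0$ at the true time $\edgeLabel{e_0}$, so relabeling $e_0$ does not contradict any observed successful infection. In case~(b) at time $\edgeLabel{e_0} - 1$ we never observed $e_0$ with exactly one infected endpoint and a failed attempt, so moving the label to $\edgeLabel{e_0} - 1$ does not retroactively force an infection that should have been observed but was not. Combined with the partner swap, this renders $\lambda'$ consistent with every round actually played.

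The main obstacle will be showing that after the swap the infection chain inside phase $i$ still yields the same observable log. This is where the gadgetry of the construction pays off: the $x$ edge-disjoint Hamiltonian paths on $R$ guarantee that, once any node of $R$ becomes infected, infection sweeps through $R$ along $p_i$ in a predetermined deterministic pattern, and the auxiliary nodes $B$ and $C$ force every relevant node to resistance by the end of phase $i$. Thus permuting which $L \times R_2$ edge first delivers the infection to $R$ only changes the identity of a single infector in the log, a change that lies entirely within the adversary's tie-breaking freedom whenever multiple infectious neighbors coincide. Formalizing this invariance — carefully handling the $R_2$-versus-$R \setminus R_2$ index parity and the ordering induced by $p^{-1}(i,\cdot)$ — is the technical heart of the proof.
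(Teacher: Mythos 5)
Your proposal attacks the lemma by contraposition and aims to exhibit an alternative labeling indistinguishable from the true one, which is exactly the paper's strategy. However, you have dramatically overcomplicated the argument, and in doing so you slip in a construction (a ``label swap'' with a partner edge) that is neither necessary nor quite the right object.

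The paper's proof is two lines: if no relevant infection attempt was made for an edge $e \in L \times R_2$, then \emph{both} $\edgeLabel{e}$ and $\edgeLabel{e} - 1$ remain consistent with every observed log, so the Adversary can defeat the Prover by picking whichever of the two the Prover did not. The point you missed is that the two clauses of the ``relevant'' definition are \emph{exactly} the two ways the logs could distinguish $\edgeLabel{e}$ from $\edgeLabel{e} - 1$: clause (a) is the only observable signature of a firing at $\edgeLabel{e}$, and clause (b) is the only observable signature that would rule out $\edgeLabel{e} - 1$ (one endpoint infectious, the other susceptible, no infection seen). Any other configuration of endpoint states at those two time steps (both susceptible, both infectious, one resistant and the other not susceptible, etc.)\ produces identical logs under either label, and changing just the single label of $e$ has no further downstream effect because there is then no round in which $e$ fires or fails to fire informatively. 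Consequently the alternative labeling is simply $\edgeLabelOp$ with $\edgeLabel{e}$ replaced by $\edgeLabel{e} - 1$ and nothing else touched.

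Two concrete issues with your version. First, the ``exchanging labels with the current owner'' step is a red herring: the model allows distinct edges to share a time label, so giving $e$ the label $\edgeLabel{e} - 1$ does not force any other edge to move, and performing a swap would actually risk introducing new observable differences on the partner edge. Second, the final paragraph about infections ``sweeping through $R$ in a predetermined pattern'' and ``permuting which $L \times R_2$ edge first delivers the infection'' conflates this lemma with \cref{lem:active-phases-per-round} and \cref{lem:relevant-attempts-per-phase}: those lemmas are where the $B$, $C$ gadgets and the path structure $p_i$ do the work. For the present lemma the gadgets play no role at all — if a relabeling of $e$ were to change which edge delivers the first infection to $R$, that change would itself manifest as a relevant attempt, contradicting the hypothesis.
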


\begin{proof}
The result follows since if there is an edge in \(L \times R_2\) for which there was no relevant infection attempt, then both \(\edgeLabel{e}\) and \(\edgeLabel{e} - 1\) are possible labels consistent with the infection log.
Therefore, the Prover must lose the game, since the Adversary can always pick the label the Prover does not pick out of these two.
\end{proof}

\begin{lemma}
There is at most one phase with relevant infection attempts per round.
\label{lem:active-phases-per-round}
\end{lemma}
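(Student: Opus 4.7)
The plan is to argue that the $B$-$C$ gadget built into $\G_n$ collapses every round's infection activity into a single phase by forcing the Adversary to drive the infection through the gadget rather than through $L \times R_2$. The quantitative backbone is that $\iphase = 4x+1$ exactly equals one phase's length, together with the fact that every edge $\ell_i r_{2j}$ lies in phase $i$: a node infected during phase $i$ becomes resistant by the start of phase $i+2$, and a relevant attempt in phase $i$ therefore requires one of its two endpoints to have been infected during phase $i-1$ or phase $i$.

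I would let $i^*$ be the earliest phase in which a seed lands on a node that is not yet resistant under the Adversary's response, and argue in two steps. First, starting from any seed in phase $i^*$, the Adversary can route the infection along the Hamiltonian paths in $R$ and the $b_{i^*}$-incident gadget edges (whose labels are carefully arranged at $i^*(4x+1) + 2j + 2$ for the $br$-edges and $(i^*+1)(4x+1) - 2, -1$ for the $bb$- and $bc$-edges) so that $c_{i^*}$ becomes infectious at the right moment to infect every $r_j \in R$ via the $c_{i^*} r_j$ edges at time $(i^*+1)(4x+1)$, while the $\ell_i b_{i^*}$ edges at the same time infect $\ell_{i^*+1}$. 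Second, from this state, every edge of $L \times R_2$ in phase $i^*+1$ fires between two already-infected endpoints, so no infection attempt occurs along it; and in every later phase $i > i^*+1$, the corresponding $r_{2j}$ endpoints have turned resistant before the phase's $L \times R_2$ edges fire while the $\ell_i$ endpoints remain susceptible but without an infectious neighbor, so no attempt is made either.

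It remains to handle additional seeds placed in later phases. By the above, any such seed targets a node that is, under the Adversary's response, either already resistant (so the seed changes nothing) or absorbed into a chain that cannot interact with a still-susceptible $L \times R_2$ endpoint, since by then every $r_{2j}$ is resistant; so no further relevant attempt is generated in any phase other than $i^*$. The main obstacle I expect is bookkeeping the gadget timings: the construction's offsets ($+1, +2, +4$, and the phase-boundary values $0, -1, -2$) are hand-tuned, and I would need to verify case-by-case (depending on whether the seed lies in $L$, $R$, $B$, or $C$) that the Adversary's tie-breaking in the infection chain can always be chosen to force the relevant infections through the gadget before an $L \times R_2$ edge in a subsequent phase has the chance to fire from a still-susceptible endpoint.
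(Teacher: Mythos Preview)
Your high-level plan is the same as the paper's: identify the earliest phase that sees any activity, show the $B$--$C$ gadget then saturates all of $L$ and $R$ by the end of that phase (or the next one), and conclude no later relevant attempt can occur. The paper carries this out exactly via the case analysis you anticipate, splitting on whether the first infected node $v$ lies in $B$, $L$, $R$, or $C$, and in each case tracking the infection until $b_{i'}$ is infectious at times $(i'+1)\iphase - 1$ and $(i'+1)\iphase$; from there $b_{i'}$ infects $c_{i'}$, which at time $(i'+1)\iphase$ infects every $r_j$, while $b_{i'}$ simultaneously infects every $\ell_i$.

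Two points where your write-up goes astray. First, the framing ``the Adversary can route the infection'' and ``the Adversary's tie-breaking can always be chosen'' misreads the setting. This lemma supports a \emph{witness complexity} lower bound: the graph and all labels are fixed in advance, the Prover chooses the seed sets, and the infection process is determined by the construction. The paper does not invoke any adversarial routing or tie-breaking; it shows the gadget fires \emph{inevitably} from any seed location. Your proof should argue necessity, not that the Adversary can arrange it.

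Second, a factual error: the edge $\ell_i b_j$ has label $(j+1)(4x+1)$ for \emph{every} $i$, so when $b_{i^*}$ is infectious at time $(i^*+1)(4x+1)$ it infects all of $L$, not only $\ell_{i^*+1}$. Hence in every later phase both endpoints of each $L \times R_2$ edge are already infected or resistant; your clause ``the $\ell_i$ endpoints remain susceptible but without an infectious neighbor'' is incorrect and unnecessary. This also cleans up your treatment of additional later seeds: any later seed lands on a node that is already infected or resistant (since all of $L\cup R$ is, and the $B,C$ nodes involved in the active phase are as well), so it cannot generate a new relevant attempt. The paper handles the interaction with extra seeds by observing that earlier-than-claimed infections of intermediate nodes still leave them infectious at the required time steps, since $\iphase$ equals a full phase length.
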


The proof of this statement is fairly technical. We argue that all infection chains must essentially follow the pattern illustrated in \Cref{fig:construction-omega-m}.
Our construction contains gadgets to ensure this also happens in all edge cases, which we carefully consider in the proof to make this idea rigorous.

\begin{figure}[tb]
\centering
\includegraphics[width=6cm]{./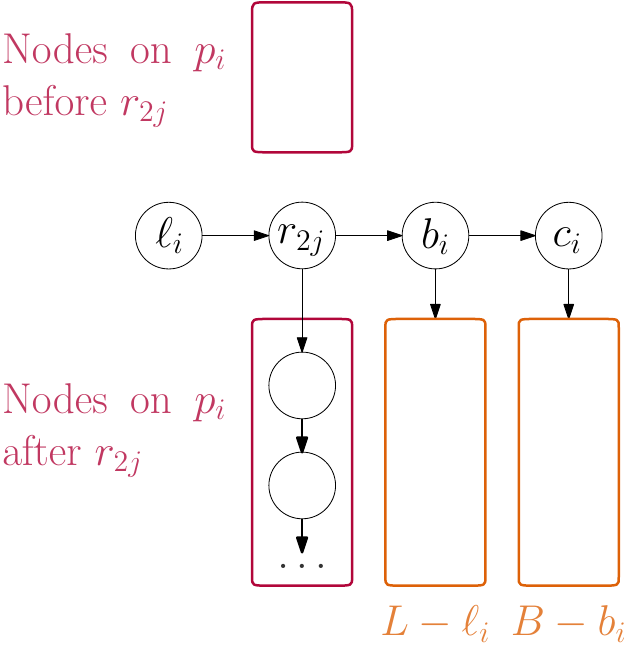}
\caption{\label{fig:construction-omega-m}As an illustrative example, consider the directed tree of infections (an edge \(u \to v\) means that \(v\) is infected via \(uv\)) that occurs when \(\ell_i\) is infected such that the relevant infection attempt is along \(\ell_ir_{2j}\). The other cases are similar. Colored arrows mean that all nodes in a set are infected via an edge from the origin node.}
\end{figure}

To prove this lemma, we need to formalize the notion of an active phase, which, in essence, is the phase where the relevant infection attempts must take place.

\begin{definition}
Let \(v\) be the first node infected in a given round, let \(t\) be the infection time, and let \(i\) be the phase of that infection time.
Then one of two things must be the case:
\begin{enumerate}
\item At time step \((i+1)\iphase\) all nodes in \(L\) and \(R_2\) are infected or resistant, and thus no relevant infection attempt may take place after that time. In this case, we call \(i\) the active phase.
\item At time step \((i + 1) \iphase\) not all nodes in \(L\) and \(R_2\) are infected or resistant. In that case, call \(i+1\) the active phase.
\end{enumerate}
\end{definition}

\begin{proof}
In what follows, we prove that there is at most one active phase and that any two relevant infection attempts must occur within that active phase.
For technical reasons, the first infection in a round is either in the active phase or right before the start of the active phase.

Towards that end, we prove that \(b_i\) is infectious at time steps \((i+1)\iphase\) and \((i+1)\iphase-1\) or \(b_{i+1}\) is infectious at \((i+2)\iphase\) and \((i+2)\iphase -1\), but in the second case, there are no infection attempts before \(i(4x+1) + 4x + 1\) (and thus no relevant infection attempts).

Note that for all infection chain descriptions below, we can ignore the fact that, when we claim a node infects another, there may have been additional seed infections which lead to infecting one of the endpoints earlier. Since we know that there are no infections before \(i\iphase\) and are only interested in infections until \((i+1)\iphase\), earlier infections still leave the nodes infectious at the claimed time steps anyways, except in those cases where we explicitly argue about scenario two.

\uline{Case 1:} \(v = b_i\). If \(t \le i(4x+1) + 4x + 1\), \(b_i\) infectious at both \((i+1)\iphase\) and \((i+1)\iphase -1\). If not, case two clearly holds. There will be no relevant infections (because it’s too late) and \(b_i\) infects \(b_{i+1}\) at time step \((i+2)\iphase - 2\) which is then infectious at \((i+2)\iphase\) and \((i+2)\iphase -1\).

\uline{Case 2:} \(v \in B-b_i\). If \(t \le i(4x+1) + 4x + 1\), then \(v\) infects \(b_i\) at time step \((i+1)\iphase - 2\) and the statement holds. Otherwise, argue scenario two analogously to case 1.

\uline{Case 3:} \(v \in L\). If \(v=\ell_i\) and the infection occurs before \(i(4x+1) + 4x + 1\), then \(\ell_i\) infects some node in \(R_{2}\) within the next four time steps, which then infects \(b_i\) right after. In all other cases, \(v\) infects \(b_i\) at time step \((i+1)\iphase\) which then infects \(b_{i+1}\) at time step \((i+2) \iphase -2\) which then infects \(c_{i+1}\) at time step \((i+2)\iphase -1\), so that at time step \((i+2) \iphase\) all nodes in \(L\) and \(R_2\) become either infected or are already resistant or infected.

\uline{Case 4:} \(v \in R\). Assume \(v = r_{(p_i)_j}\). Then, if the seed infection occurs before \(i(4x+1) + 2j + 2\), \(b_i\) gets infected at that time via the edge \(r_jb_i\). If the infection occurs after \(i(4x+1) + 4j + 2\), we have to analyze a bit more carefully. Now, \(c_i\) becomes infected at time step \((i+1)\iphase\).  Then, \(c_i\) infects \(b_{i+1}\) at \((i+2)\iphase-2\), fulfilling our condition.

\uline{Case 5:} \(v \in C\). Let \(v = c_j\) with \(j \in [x]\). If the infection occurs before \((i+1)\iphase -1\), then \(b_i\) becomes infected at \((i+1)\iphase - 1\). If the infection occurs after \((i+1)\iphase -1\), then \(b_{i+1}\) becomes infected at \((i+2)\iphase -1\). If the infection is this late, clearly, no relevant infection can occur before \((i+1)\iphase\), that is the end of phase \(i\).

We have seen that if the seed infection(s) occur before \(i(4x+1) + 4x + 1\), \(b_i\) is infectious at time steps \((i+1)\iphase - 1\) and \((i+1)\iphase\). In this case we call \(i\) the first active phase. In the other case, \(b_{i+1}\) is infectious at \((i+2)\iphase - 1\) and \((i+2)\iphase\) and we call \(i+1\) the first active phase.

Let \(i'\) be the first active phase, then at time step \((i'+1)\iphase -1\) the node \(b_{i'}\) infects \(c_{i'}\) (or \(c_{i'}\) is already infectious). Then at time step \((i'+1)\iphase\), all nodes in \(L\) and \(R\) become infected (if they are not infected or resistant before). Thus, after that time step, no relevant infection attempt may occur anymore.
\end{proof}

\begin{lemma}
There is at most one relevant infection attempt per phase.
\label{lem:relevant-attempts-per-phase}
\end{lemma}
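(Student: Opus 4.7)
The plan is to argue by contradiction. Suppose two relevant infection attempts happen in phase $i$ within a single round. Since the only $L \times R_2$ edges in phase $i$ emanate from the unique node $\ell_i$, both attempts lie on edges $e_1 = \ell_i r_1$ and $e_2 = \ell_i r_2$ for distinct $r_1, r_2 \in R_2 \cap V(p_i)$, with times $t_1 < t_2$ both in phase $i$.

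First, I would analyze the state of the network just after $t_1$. The definition of relevance forces that at $t_1$ at least one endpoint of $e_1$ is infected: in case (a) both endpoints become infected via the successful transmission, while in case (b) exactly one endpoint is infected by hypothesis. Since $\iphase = 4x+1$ is the full length of a phase, whichever of $\ell_i, r_1$ is infected at $t_1$ remains infectious throughout the remainder of phase $i$, and in particular at $t_2$.

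The key step is to show that $r_2$ must be infected (or resistant) by $t_2$, which would preclude either condition of relevance from holding for the second attempt. Here I would reuse the cascade analysis that drives \Cref{lem:active-phases-per-round}: once any relevant attempt fires in phase $i$, the carefully interleaved labels of the path $p_i$ (at offsets $2k+1$), of the $b_i$-to-$R$ edges (at offsets $2k+2$), and of the end-of-phase $b_i c_i$ and $c_i r_j$ edges (at offsets $4x$ and $4x+1$) force $b_i$ and then $c_i$ to become infected, and $c_i$ in turn infects all of $R$ at the phase boundary. Since the $\ell_i r$-edge labels are spaced at least four apart within phase $i$, this $b_i$--$c_i$--$R$ cascade must complete before the next $\ell_i r$ label after $e_1$ can fire, and hence $r_2$ is infected by $t_2$.

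The main obstacle is the offset-bookkeeping required to verify that the cascade does indeed reach $r_2$ before $t_2$ in every subcase: case (a) versus case (b) of the first attempt, the various positions $q_1 = p^{-1}(i, j_1)$ and $q_2 = p^{-1}(i, j_2)$ on $p_i$, and which of $\ell_i, r_1$ is the initially infected endpoint. I would handle this with a case split paralleling the one in \Cref{lem:active-phases-per-round}, explicitly confirming in each case that the chain through $b_i$ and $c_i$ fires strictly before $t_2$. Once this is established, both endpoints of $e_2$ are infected or resistant at $t_2$, so neither case (a) nor case (b) of the relevance definition can be satisfied, yielding the desired contradiction.
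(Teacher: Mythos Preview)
There is a genuine gap in your key step. You claim that after the first relevant attempt the cascade through $b_i$ and $c_i$ infects $r_2$ before $t_2$, but by your own description the $c_i r_j$ edges carry label $(i+1)(4x+1)$ and hence fire only at the phase boundary. Since $\edgeLabel{e_2} = i(4x+1) + 4\,p^{-1}(i,j_2) + 1$ can sit anywhere inside phase $i$, the time $t_2$ may precede that boundary by almost $4x$ steps, so the $c_i$-cascade is far too slow to reach $r_2$ in time. Your remark that consecutive $\ell_i r$-labels differ by at least four is correct but does nothing to close a gap of order $4x$.

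The paper's proof uses a different and much faster blocking mechanism: propagation along the Hamiltonian path $p_i$ itself. The path edge at position $j$ has offset $2j+1$, so once the infection is on $p_i$ it advances one path-position every two time steps, whereas the $\ell_i r$-edge at path position $k$ sits at offset $4k+1$. Because $\edgeLabel{e_1} < \edgeLabel{e_2}$ forces $r_2$ to lie further along $p_i$ than $r_1$, the path front reaches $r_2$ strictly before $\edgeLabel{e_2}$, ruling out a second relevant attempt. For the case where the first attempt is of type (b), the paper argues that either $r_1$ is still infectious in time to light the next path edge (so the same argument goes through), or one endpoint must have been infectious before the active phase began, contradicting the active-phase definition from \Cref{lem:active-phases-per-round}. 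You do list the path offsets $2k+1$, but then discard them in favour of the end-of-phase gadget; the repair is to compare $2k+1$ against $4k+1$ directly and use the path $p_i$ as the blocking mechanism.
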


\begin{proof}
Let \(i\) be the first active phase.
Then any relevant infection attempt must be made in connection with \(\ell_{i}\) since the time steps related to relevant infection attempts for other nodes in \(L\) are all either in earlier (inactive) phases or in later phase when, as proven above, no relevant infection attempt may take place.

Now assume there are two relevant infection attempts along two different edges \(e\) and \(e'\).
Write \(e = \ell_i r_{2j}\) and \(e' = \ell_i r_{2j'}\) and assume without loss of generality that \(\edgeLabel{e} < \edgeLabel{e'}\).

Observe that since \(\edgeLabel{e} < \edgeLabel{e'}\), it follows that \(r_{2j'}\) must be further back on \(p_i\).
If the infection attempt along \(e\) is relevant because it’s successful, then \(r_{2j'}\) is infected at the latest via \(p_i\) one time steps before a relevant infection attempt can occur via \(e'\).

Now assume that there is a relevant infection attempt along \(e\) but there is no successful infection via \(e\).
This could only happen if one of two things were the case: (1) \(\ell_i\) is infectious at \(\edgeLabel{e} -1\), but not at \(\edgeLabel{e}\) or (2) \(r_{2j}\) is already infected or resistant at \(\edgeLabel{e}\).

(1) cannot happen because then \(\ell_i\) must have been infectious before \((i-1)\iphase + 4x + 1\), which contradicts our assumption.
Regarding, (2) if \(r_{2j}\) is infectious long enough to infect their next node on \(p_i\), the argument proceeds as above.
If \(r_{2j}\) is not infectious anymore by that time, it must have been infectious before \((i-1)\iphase + 4x + 1\), again contradicting our assumption.
\end{proof}

\begin{proof}[Proof of \cref{thm:omega-m-witness-complexity}]
Putting together \cref{lem:number-relevant-attempts,lem:active-phases-per-round,lem:relevant-attempts-per-phase}, we get the desired bound.
\end{proof}
\chapter{Extending the Model \label{sec:extending}}
\label{extending}
So far, we have looked at a simple version of our graph discovery game.
In particular, we have made a few assumptions about our infection behavior and Discoverer knowledge that might seem restrictive and less close to the real-world processes.
For example, we assumed that we not only learn at what time step a node was infected, but also by whom, a property that is, for example, not usually given in disease infection tracing applications.
Similarly, we assumed that the static graph is known to the Discoverer or that an edge has at most one label.
In this section, we will show that lifting these restrictions either leads to the same or equivalent theoretical behavior or trivializes the problem so that it is not solvable significantly faster than by the trivial algorithm.
This way, we legitimize our previous restrictions, as they yield relevant theoretical results while being easier to work with.
See \Cref{tbl:variations-overview} for an overview of the lower bounds and algorithms for these variations.
\todo[color=pink]{@supervisors: Do I need to explain the variations here?}

 \ifpaper \graphDiscoveryResultsTable \fi

\section{Infection Times Only–Feedback \label{sec:infection-time-feedback}}
\label{sec:org0759acf}
First, let us look at a variation of the graph discovery game where the player only gets feedback on when a node is infected, but not by which other node.
Call this \emph{infection times only–feedback}.
This makes the Discoverer's job harder.
In some cases (e.g.,~when a node only has one neighbor), the Discoverer can still deduce who infected a node, but generally that is not the case anymore.
To see this, consider a case where a node becomes infected at some time when it has two infectious neighbors. Then, one of the edges must have the label of the infection time, but the Discoverer cannot directly deduce which of the edges, as it could in our basic model.
Since, up until now, we have looked at an easier (from the Discoverer's perspective) version of the graph discovery game, lower bounds directly translate.
We will see this pattern for the other variations as well, though we will not state it with this level of formality from now on.

\begin{theorem}
Let \((G_n, (\Tmax)_n, \iphase_n, k_n)_{n \in \N}\) be a family of instances of the graph discovery game under the basic model and \(f: \N^5 \to \N\) such that any Discoverer must use at least
\(\Omega(f(\nodesetsize{G_n}, \edgesetsize{G_n}, (\Tmax)_n, \iphase_n, k_n))\)
rounds, then the same lower bound holds if the game is played under the infection times only–variation.
\end{theorem}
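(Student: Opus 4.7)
The plan is to argue that the infection times only--feedback model provides the Discoverer with strictly less information than the basic model, and therefore any Discoverer strategy available in the weaker model is also available in the stronger model. This gives an upper bound of the basic-model round complexity by the infection-times-only round complexity, and the stated lower-bound transfer is the contrapositive.

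More concretely, I would fix a family of instances $(G_n, (\Tmax)_n, \iphase_n, k_n)_{n \in \N}$ and suppose for contradiction that there is a Discoverer algorithm $D'$ in the infection-times-only variation that wins each instance in $o(f(\nodesetsize{G_n}, \edgesetsize{G_n}, (\Tmax)_n, \iphase_n, k_n))$ rounds. From $D'$ I would construct a Discoverer $D$ for the basic model that simulates $D'$ step by step: in each round $D$ submits the same seed infections as $D'$, receives the richer infection log $L \subseteq V^2 \times [0, \Tmax]$ from the Adversary, and then forwards to $D'$ the induced infection timetable $T = \{(u, t) \mid (u, v, t) \in L\}$. By \Cref{lem:infection-log-to-time-table}, $T$ is well-defined and is precisely the feedback $D'$ would have received in the infection-times-only game against any Adversary consistent with $L$. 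When $D'$ terminates and outputs a candidate graph, $D$ outputs the same graph; since $D'$ wins its game, every graph consistent with the timetables it observed (which is a superset of the graphs consistent with the richer logs $D$ observed) is the correct one, so $D$ also wins.

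This simulation uses exactly the same number of rounds as $D'$, contradicting the assumption that the basic-model round complexity is $\Omega(f(\cdots))$. Hence no such $D'$ exists, and the same lower bound applies in the infection-times-only variation.

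The proof is essentially a routine reduction; the only subtlety to flag carefully is that the ``for all Adversaries'' quantifier works out in the right direction. Specifically, one must check that an Adversary in the infection-times-only game who replies with timetable $T$ can always be matched by an Adversary in the basic game who replies with some log $L$ inducing $T$, so that $D$'s simulation covers every situation $D'$ must handle. This holds because any timetable consistent with a set of seed infections arises from at least one full infection log, again by \Cref{lem:infection-log-to-time-table}. With this observation in place, the transfer argument is complete.
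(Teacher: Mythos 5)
Your proof is correct and follows essentially the same approach as the paper, which tersely observes that the Discoverer receives strictly less information in the infection-times-only variation, so any algorithm winning there also wins the basic game. Your explicit simulation of $D'$ by a basic-model Discoverer $D$ that truncates each log to its induced timetable makes this rigorous; note, though, that the subtlety you flag (every timetable is realized by some log) points in the converse direction to what the simulation actually requires, which is merely that every log the basic Adversary produces truncates to a valid timetable response---a triviality---so the argument goes through in any case.
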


\begin{proof}
The result follows since the Discoverer gets strictly less information, and the rest of the process is exactly the same.
Thus, every algorithm winning the game under the infection times only–variation must also win the game under the basic model.
\end{proof}

Crucially, applying this to \Cref{thm:temporally-connected-graph-discovery-lower-bound} and \Cref{thm:omega-m-witness-complexity} yields the following two results.

\begin{corollary}
For all \(\iphase, k \in \N^+\) and \(\Tmax \ge 4\), there is an infinite family of temporal graphs \(\{\G_n\}_{n \in \N}\) such that the graph \(\G_n\) has \(\Theta(n)\) nodes and the minimum number of rounds required to satisfy the graph discovery game under the infection times only–variation grows in \(\Omega(n (T_\mathrm{max} - 3) / (\iphase k))\). Also, all graphs in the family are temporally connected.
\label{thm:temporally-connected-graph-discovery-lower-bound-infection-times-only}
\end{corollary}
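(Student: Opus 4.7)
The plan is simply to invoke the preceding translation theorem on the lower bound already established in \Cref{thm:temporally-connected-graph-discovery-lower-bound}. That theorem exhibits, for every $\iphase, k \in \N^+$ and $\Tmax \ge 4$, an infinite family $\{\G_n\}_{n \in \N}$ of temporally connected temporal graphs on $\Theta(n)$ nodes together with the lower bound $\Omega(n(\Tmax-3)/(\iphase k))$ on the number of rounds any Discoverer needs in the basic graph discovery game. Crucially, this bound is of the form $\Omega(f(\nodesetsize{\G_n}, \edgesetsize{\G_n}, \Tmax, \iphase, k))$ required by the translation theorem, with $f(v,e,T,\iphase,k) \coloneqq v(T-3)/(\iphase k)$.

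First, I would instantiate the hypothesis of the translation theorem with exactly this family and this choice of $f$, verifying that the lower bound on the basic model is indeed supplied by \Cref{thm:temporally-connected-graph-discovery-lower-bound}. Second, I would apply the translation theorem to conclude that the same $\Omega(n(\Tmax-3)/(\iphase k))$ lower bound holds when the very same family $\{\G_n\}_{n \in \N}$ is played against under the infection times only–variation. Third, since the temporal connectivity claim in \Cref{thm:temporally-connected-graph-discovery-lower-bound} is a structural property of the graphs $\G_n$ themselves, independent of the feedback model seen by the Discoverer, it carries over verbatim to the present corollary.

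There is no genuine obstacle: all the combinatorial work lives inside \Cref{thm:temporally-connected-graph-discovery-lower-bound} and the translation theorem, whose proof already captured the observation that the infection times only–feedback provides strictly less information than the basic model, so any algorithm winning in the restricted setting can be simulated in the basic one. The corollary therefore collapses to a one-line invocation, and the only care required is to check that the function $f$ and the instance family transfer cleanly, which they do by inspection.
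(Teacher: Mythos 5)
Your proposal matches the paper's own reasoning: the corollary is obtained by applying the preceding translation theorem (that lower bounds from the basic model carry over to the infection-times-only variation, since the Discoverer receives strictly less information) to the family and bound from \Cref{thm:temporally-connected-graph-discovery-lower-bound}, and noting that temporal connectivity is a property of the graphs themselves and hence independent of the feedback model. This is exactly how the paper derives the statement.
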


\begin{corollary}
There is an infinite family of instances \(\{\G_n\}_{n \in \N}\) such that the witness complexity  under the infection times only–variation grows in \(\Omega_k(\edgesetsize{\G_n})\).
\end{corollary}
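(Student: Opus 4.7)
The plan is to derive this corollary by reusing the same family of graphs constructed for \Cref{thm:omega-m-witness-complexity} and showing that witness complexity under the infection times only–variation can only be at least as large as under the basic model. Once this monotonicity-in-information is established, the bound follows immediately, exactly paralleling how \Cref{thm:temporally-connected-graph-discovery-lower-bound-infection-times-only} was obtained from \Cref{thm:temporally-connected-graph-discovery-lower-bound}.

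Concretely, I would first take $\{\G_n\}_{n \in \N}$ to be the family witnessing \Cref{thm:omega-m-witness-complexity}, so that under the basic model, any witnessing schedule for $\G_n$ has length $\Omega_k(\edgesetsize{\G_n})$. Then the main step is to show that every witnessing schedule $S_1, \dots, S_a$ under the infection times only–variation is also a witnessing schedule under the basic model. By \Cref{lem:infection-log-to-time-table}, the infection timetable of each round is a deterministic function of the infection log of that round, so an infection log carries at least as much information as the corresponding timetable. Consequently, if after playing $S_1, \dots, S_a$ the resulting timetables already pin down all edge labels of $\G_n$, then the corresponding full logs—which determine the timetables—also pin down all edge labels. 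Hence the same schedule witnesses the graph under the basic model, and the witness complexity under the infection times only–variation is bounded below by the basic-model witness complexity.

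Combining these two observations yields the desired $\Omega_k(\edgesetsize{\G_n})$ bound. I do not anticipate any substantial technical obstacle here; the entire argument is a one-line information-monotonicity transfer, and the heavy lifting has already been done in the construction and case analysis behind \Cref{thm:omega-m-witness-complexity}. The only thing to be slightly careful about is to phrase the transfer at the level of schedules (rather than algorithms), since witness complexity is defined non-adaptively and we must verify that the deterministic relation between logs and timetables is what the definition of ``witnessing'' actually needs.
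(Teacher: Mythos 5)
Your proposal is correct and rests on the same information-monotonicity idea the paper uses: the infection-times-only feedback is a deterministic coarsening of the full infection log (via \Cref{lem:infection-log-to-time-table}), so any graph consistent with a set of logs is also consistent with the induced timetables, hence a witnessing schedule under the weaker feedback is automatically one under the basic model. The paper obtains the corollary by invoking its general transfer theorem — stated for Discoverer algorithms in the (adaptive) graph discovery game — and applying it to the witness complexity result; your version argues directly at the level of non-adaptive witnessing schedules, which is, if anything, the slightly cleaner framing here since witness complexity is a schedule-level, not algorithm-level, notion. Either way, the heavy lifting is entirely in \Cref{thm:omega-m-witness-complexity}, and the transfer is a one-liner.
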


What is not quite as obvious is that both the trivial and the \(\FDMain\) graph discovery algorithm translate, giving us the following two results.

\begin{theorem}
\Cref{alg:follow-discover} wins any instance of the graph discovery game under the infection times only–variation on a  graph \(\G\) in at most \(6\edgesetsize{\G} + \decc{\G} \left\lceil \Tmax/\iphase \right \rceil\) rounds.
\end{theorem}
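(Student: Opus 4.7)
The plan is to show that \Cref{alg:follow-discover}, executed with exactly the same seed schedule, still pins down every edge label under the weaker feedback. Since the algorithm itself is untouched, the round count of $6\edgesetsize{\G} + \decc{\G} \lceil \Tmax/\iphase \rceil$ carries over verbatim from \Cref{thm:follow-discover}; the entire content of the proof is to re-establish correctness.

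By \Cref{lem:infection-log-to-time-table} the infection timetable is a function of the graph and the seeds alone, so the Discoverer's view in the infection-times-only variant is exactly the sequence of timetables produced by the rounds of $\FDMain$. It therefore suffices to show that these timetables, together with the (known) static graph, uniquely determine every edge label of $\G$.

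The core step is to adapt \Cref{lem:follow-edge} to this weaker view. I would prove: if $v$ is seed-infected at time $t$ during the execution and $e = uv$ satisfies $\edgeLabel{e} \in [t+1, t+\iphase]$, then the observed timetables suffice to recover $\edgeLabel{e}$. The proof proceeds by downward induction on the overtaking budget $\edgeLabel{e} - t$. In the base case $\edgeLabel{e} = t+1$ the Discoverer has seeded only $v$ in the relevant round, and the (known, since $uv$ is a static edge) neighbor $u$ appears in the timetable as first infected at some time; any temporal path from $v$ to $u$ other than $e$ has hop-length at least two and therefore cannot deliver the infection before $t+2$, so an infection time of exactly $t+1$ forces $\edgeLabel{e} = t+1$. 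The inductive step mirrors the original: if the current round does not yet fix $\edgeLabel{e}$, this is because $u$ was reached earlier via some other edge $e'$, and then by construction $\ex$ also seeds $u$ at $\edgeLabel{e'}-1$, strictly reducing the overtaking budget, so the induction closes. Combined with \Cref{cor:follow-whole-component} and the outer loop of $\FDMain$, every $\iphase$-edge connected component is explored and every label it contains is recovered from infection times alone.

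The main obstacle is precisely the base case, because in the basic model the sentence ``$v$ infects $u$ via $e$'' is read directly off the log, whereas here we only see an infection time. The subtlety is that the argument crucially needs each round of $\FDMain$ to seed exactly one node, so that the strictly-increasing-labels property of temporal paths forces a one-step infection to travel along a direct edge; this is what lets a known neighbor's infection time be identified with the correct edge label. I would also verify that concurrent length-one infections via other unknown-label edges incident to $v$ do not create ambiguity: any such alternative cause would by the same one-step argument be forced to have label $t+1$ as well, so the value assigned to $\edgeLabel{e}$ is consistent with every infection log compatible with the observed timetables.
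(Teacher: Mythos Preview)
Your approach is correct and rests on the same one-step argument the paper uses, but you do more work than necessary and gloss over the one point that actually needs justification. The paper's proof is just two observations. First, the recursive calls of $\ex$ depend only on the pair (node, infection time) and never on the identity of the infecting edge, so the sequence of seeds performed under infection-times-only feedback is literally identical to the basic-model execution; you assert this (``executed with exactly the same seed schedule'') without argument, yet it is precisely what lets you import the round bound and the conclusions of \Cref{lem:follow-edge} and \Cref{cor:follow-whole-component} wholesale. Second, once the execution is known to be identical, the basic-model correctness already guarantees that for every edge $e=\{u,v\}$ some endpoint is seeded at $\edgeLabel{e}-1$; in that round the other endpoint shows up in the timetable at time exactly $\edgeLabel{e}$, and since any alternative path has hop-length at least two and hence needs at least two time steps, this single observation fixes $\edgeLabel{e}$. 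So rather than re-running the overtaking-budget induction in the weaker model, the paper simply transfers the induction's conclusion from the basic model and applies the one-step argument once per edge.
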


\begin{proof}
First, observe that the algorithm does not require the source of an infection to be known. Calling the \(\ex\) subroutine simply needs the fact that a node was infected and at what time.

Secondly, observe that at the end, for each edge \(e = \{u,v\}\) there has been a seed infection at \(u\) and \(v\) at \(\edgeLabel{e} - 1\), thus the infection must have been successful and taken place in the first possible time step (after the seed infection).
As any infection chain not along the single edge between \(u\) and \(v\) takes at least two time steps, we can soundly conclude the label of the edge from the infection logs.

Together, these two properties mean that the \(\FDMain\) algorithm translates to the infection times only variation.
\end{proof}

\section{Unknown Static Graph \label{sec:unknown-static-graph}}
\label{sec:orgbfedc01}
It is a fair criticism that it is not always realistic to assume the Discoverer knows the static graph at the start of the game.
This motivates us to investigate a version of the game where only the node set, but not the edges, are known to the Discoverer at the start of the game.
Here too, the lower bounds from the basic model translate as the Discoverer gets strictly less information, but we can also achieve new, stricter bounds, which show that not only does our \(\FDMain\) not translate, no comparably fast algorithm can even exist.

\begin{theorem}
Consider the variation of the graph discovery game where the Discoverer does not learn the static graph. Let \(n, T_\mathrm{max} \in \N^+\) and \(k \in [n], m \in [{n \choose 2}-n], \iphase \in [T_\mathrm{max}]\). Then there is an Adversary such that any algorithm winning this game variation on graphs with \(n\) nodes must take at least \(\lfloor nT_\mathrm{max}/(2\iphase k) \rfloor\) rounds. This Adversary picks a graph with at most two \(\iphase\)-edge connected components.
\label{thm:unknown-static-graph-lb}
\end{theorem}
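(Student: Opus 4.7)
The plan is to extend the potential argument of \cref{thm:potential-argument} to the unknown static graph setting, combining it with an Adversary graph whose labels are packed into a single $\iphase$-wide window so that the resulting temporal graph has at most two $\iphase$-edge connected components.

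For each unordered pair of nodes $\{u,v\}$, I would track the number of states consistent with the log so far: either ``no edge'' or ``edge with label $t$'' for each $t \in [T_\mathrm{max}]$, giving $T_\mathrm{max}+1$ options per pair initially. The total starting potential is $\binom{n}{2}(T_\mathrm{max}+1)$, and the Discoverer wins only when every pair is pinned to a single state, so the cumulative reduction required is $\binom{n}{2} T_\mathrm{max}$.

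The Adversary would pick a graph $\G^\ast$ with exactly $m$ edges whose labels all lie in $[1,\iphase]$ and whose underlying static graph has at most two connected components (for small $m$ a star or path suffices; for larger $m \leq \binom{n}{2} - n$, a denser cluster covering fewer than $n$ vertices works). Confining all labels to one $\iphase$-window guarantees that $\G^\ast$ has at most two $\iphase$-edge connected components. Under these fixed, truthful responses, a single round's cascade is trapped inside $[1,\iphase]$, so each of the $k$ seed nodes stays infectious for at most $\iphase$ time steps and probes at most $n-1$ neighbors per time step. Hence the per-round potential decrease is at most $\iphase(n-1)k$, and dividing the required total reduction by this per-round decrease gives $\binom{n}{2}T_\mathrm{max}/(\iphase(n-1)k) = \lfloor nT_\mathrm{max}/(2\iphase k)\rfloor$ rounds.

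The main obstacle is accounting for successful probes inside $\G^\ast$: each success collapses $T_\mathrm{max}$ possibilities to one for the corresponding pair, which at first glance seems to allow a much larger per-round reduction. The key observation is that, since all of $\G^\ast$'s labels lie inside a single $\iphase$-window, any cascade reaches only $O(\iphase)$ edges of $\G^\ast$ in a round, and by choosing $\G^\ast$ sparse relative to the ambient $\binom{n}{2}$ pairs (e.g., concentrating the static edges among few vertices), the successful contributions per round amount to an additive $O(\iphase k T_\mathrm{max})$ term that is dominated by the $\iphase(n-1)k$ budget from the ``failed'' probes on the $\Theta(n^2)$ pairs outside $\G^\ast$. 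Formalizing this counting, while technical, yields the claimed bound uniformly in $m$.
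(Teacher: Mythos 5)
Your potential argument fails at the step bounding the per-round decrease. By committing to a \emph{fixed} graph $\G^\ast$ upfront with all labels in $[1,\iphase]$, the Adversary can no longer control how much it reveals in a round: it must answer truthfully. A seed placed at $(u,0)$ with $u$ a high-degree node in $\G^\ast$ triggers a branching cascade that can infect $\Theta(n)$ nodes within the first $\iphase$ time steps, since a temporal cascade is limited in \emph{hop-depth} by $\iphase$, not in width. Each infected node probes $\iphase$ labels against $n-1$ potential partners, so a single round can collapse $\Theta(n^2\iphase)$ of your $\binom{n}{2}T_\mathrm{max}$ possibilities, not the $\iphase(n-1)k$ you budget. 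The claim that ``any cascade reaches only $O(\iphase)$ edges'' conflates depth with size, and the proposed fix of concentrating the static edges on few vertices cannot work uniformly over the stated range $m \in [\binom{n}{2}-n]$: for $m$ near the upper end the graph is necessarily dense on $\Theta(n)$ vertices and the cascade objection returns in full force.

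The paper's argument avoids this with an \emph{adaptive} Adversary holding back exactly one edge. It fixes $m-1$ edges, all with label $1$, so that (aside from the special time-$0$ seed) a seed infection never spreads along the fixed skeleton. The one remaining edge is not committed in advance; the Adversary replies ``infection failed'' on every non-fixed pair as long as at least one $(\text{pair},\text{label})$ choice for that edge remains consistent. The quantity the Discoverer must exhaust is therefore the $\Theta\left(\left(\binom{n}{2}-m+1\right)T_\mathrm{max}\right)$ possibilities for the single dynamic edge, and because the Adversary ensures infections stay local, each round really does eliminate at most $O(n\iphase k)$ of them. The essential idea your proposal is missing is this combination of a single undetermined edge with an Adversary that decides its position lazily; without it, a fixed target graph gives the Discoverer too much free information per cascade.
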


Note that this result holds regardless of the number of edges in the graph and for just two \(\iphase\)-edge connected components.
Therefore, we cannot hope for an algorithm only dependent on the number of edges and \(\iphase\)-edge connected components (as we have in the basic model).
In the basic model, nodes without edges are the best case, as we can simply ignore them.
Here, the opposite is true, since we perform tests to ensure the non-existence of the edge at every time step.

\begin{proof}
As the underlying graph, we fix all \(m\) edges but one, which we pick dynamically.
Choose the \(m -1\) fixed edges arbitrarily such that every node has at most \(n-2\) adjacent edges and that their subgraph is connected.
This is possible by a simple greed strategy starting at a single node \(v\), connecting it to arbitrary nodes that have less than \(n-2\) neighbors until \(v\) has  \(n-2\) neighbors, and then moving on to one of these neighbors to do the same.
Repeat this process until \(m-1\) edges have been added.
Give those edges the time label 1.
For these, the Adversary responds to infection attempts by simply simulating the correct behavior (i.e.,~an attempt is successful precisely at time step 1).

For all other possible edges, respond to all infection attempts with ‘infection failed’ as long as after that response there is a still are \(u,v \in \nodeset{G}\) and \(t \in [\Tmax]\) such that
\begin{itemize}
\item \(uv\) is not one of the edges fixed before, and
\item there has been no unsuccessful infection attempt from \(u\) to \(v\) or \(v\) to \(u\) at \(t\).
\end{itemize}
Otherwise, respond with `infection successful`.
Clearly, no Discoverer can terminate and win until one round before that has happened, since there are still multiple consistent options for the unfixed edge remaining.

A similar argument to \Cref{thm:temporally-connected-graph-discovery-lower-bound} shows that, since no infection can ever spread, the online algorithm essentially has to search through all nodes at all time steps, working on at most \(k\) nodes per round and covering at most \(\iphase\) time steps per node.
This yields the \(\lfloor n \Tmax / (2\iphase k) \rfloor\) bound.
\end{proof}

Therefore, while the goal of this variation seems natural, it makes the problem difficult to efficiently solve.
In particular, exploiting the structure of the static graph is hard, as both edges and non-edges need to be proven.
Naturally, this means that any improvement over the brute-force method is limited.
On the positive side, the brute force method still works.

\begin{theorem}
There is an algorithm that wins the graph discovery game in the unknown graphs variation in \(\nodesetsize{G}\Tmax\) rounds.
\end{theorem}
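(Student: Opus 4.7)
The plan is to reuse exactly the brute-force schedule from \Cref{thm:brute-force} and verify that it still succeeds when the static graph is not known at the start. The key observation is that this schedule only enumerates seed pairs $(v, t)$ over $V \times [0, \Tmax - 1]$, which requires knowledge only of the node set. Thus the same algorithm is executable in the unknown-static-graph variation without modification, and it uses exactly $\nodesetsize{G} \Tmax$ rounds.

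I would then specify how to reconstruct the temporal graph from the collected infection logs. The reconstruction rule is simple: declare $uv$ to be an edge with label $t$ if and only if some log records the triple $(u, v, t)$ or $(v, u, t)$; otherwise declare that $uv$ is not an edge. Soundness is automatic because every logged infection travels along a real edge at the recorded time, so no spurious edges are produced. For completeness, fix any true edge $e = uv$ with $\edgeLabel{e} = t$ and consider the round with seed $(u, t-1)$. Node $u$ becomes infectious at time $t$, and I claim $u$ is in fact the unique infectious node at that time in this round: any other infected node $w$ must have been reached from $u$ via a temporal path of hop-length at least one, whose strictly increasing labels force $w$'s infection time to be at least $t$, so $w$ is not infectious \emph{at} time $t$. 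Consequently the edge $uv$ fires at time $t$, producing the triple $(u, v, t)$ in the log, and the edge is correctly identified with its true label.

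The only subtlety to nail down is ruling out an alternative route from $u$ to $v$ that reaches $v$ before or at time $t$ and thereby preempts the direct infection along $e$. This is handled by the same strict-increase argument applied to $v$: any $u$-to-$v$ path of hop-length at least two starting from $u$ infectious at time $t-1$ must have its first edge of label at least $t$, so the path reaches $v$ no earlier than time $t+1$. I expect this to be the only nontrivial step; once it is in place, both directions of the reconstruction are immediate, and the $\nodesetsize{G} \Tmax$-round bound is inherited directly from \Cref{thm:brute-force}.
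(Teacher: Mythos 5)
Your proof is correct and takes essentially the same approach as the paper: both observe that the brute-force seed schedule of \Cref{thm:brute-force} depends only on the node set, so it runs unchanged when the static graph is unknown, and that the round seeding $(u, \edgeLabel{uv}-1)$ guarantees a successful infection along $uv$ at its true label (because $u$ is the only infectious node at time $\edgeLabel{uv}$ and any competing path of hop-length at least two, being strictly increasing, reaches $v$ too late). The paper states this in one line citing the earlier proof; you fill in the reconstruction rule and the preemption argument explicitly, which the paper leaves implicit. One small nit: you write that $u$ is ``infectious at time $t-1$,'' but under the paper's model a seed at time $t-1$ makes $u$ infectious from time $t$ onward; your subsequent conclusions (first edge on any outgoing path has label $\geq t$) use the correct timing, so the slip is only in phrasing.
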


\begin{proof}
As for \Cref{thm:brute-force}, the algorithm that performs the seed sets \(\{\{(v,t - 1)\} \mid v \in \nodeset{G}, t \in [0, \Tmax-1]\}\) trivially wins the game.
\end{proof}

While, in that model, we were able to find a better algorithm for sparse graphs (in particular the \(\FDMain\) algorithm), we have no such hope here by \Cref{thm:unknown-static-graph-lb}.
In fact, these results show that this naive algorithm is close to optimal.

\section{Multilabels \& Multiedges \label{sec:multigraphs}}
\label{sec:org0b1c5a2}
The last extension we investigate is what happens if an edge between the same two nodes may exist at multiple time steps.
There are two ways to model this: we could allow each edge to have a set of labels instead of just one (we call these \emph{multilabels}) or we could allow multiple distinct edges (with different labels) to exist between the same two nodes (we call these multiedges).
Note that, while for most problems in the literature these notions are equivalent, that is not the case here.
With multiedges, the Discoverer learns the multiplicity of every edge, but with multilabels it does not.
If we only tell the Discoverer where an edge is, but not how many time labels it has, then we run into the same issue as with unknown static graphs in \Cref{sec:unknown-static-graph}.
In essence, any Discoverer would be forced to check all combinations of nodes and seed times, only allowing for the trivial algorithm.
We first formalize this notion by giving the appropriate lower bound for multilabels before giving more positive results for multiedges.

\begin{theorem}
Consider the variation of the graph discovery game where a single edge might have an arbitrary subset of \([\Tmax]\) as labels. Let \(n, T_\mathrm{max} \in \N\) and \(k \in [n], m \in [{n \choose 2}-n], \iphase \in [T_\mathrm{max}]\). Then there is an Adversary such that any algorithm winning this game variation on graphs with \(n\) nodes must take at least \(\lfloor \min\{n/2, m\}\Tmax/(\iphase k) \rfloor\) rounds. This Adversary picks a graph with at most two \(\iphase\)-edge connected components.
\label{thm:multilabels-lb}
\end{theorem}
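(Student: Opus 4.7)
The plan is to closely mirror the proof of Theorem~\ref{thm:unknown-static-graph-lb}, replacing the role of "possibly-existing edges" with "possibly-present labels" of a fixed set of flexible edges. The heart of the argument is that in the multilabel model each edge $e$ carries $\Tmax$ independent binary facts --- one per $t \in [\Tmax]$ encoding whether $t \in \lambda(e)$ --- and the Adversary can commit each such fact individually through the outcome of a single infection attempt. I would have the Adversary designate $\min\{n/2, m\}$ edges as \emph{flexible}, arranged as a matching on $2\min\{n/2, m\}$ of the $n$ nodes so that each node is incident to at most one flexible edge. The remaining $m - \min\{n/2, m\}$ edges form a fixed "backbone" whose label sets the Adversary commits up front; by choosing those backbone labels (for instance all concentrated in a single $\iphase$-window) and insisting that every flexible edge have at least one mandatory label in the same window, the whole graph collapses into at most two $\iphase$-edge connected components.

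With this construction in place, the Adversary's response strategy is: handle every attempt along a backbone edge honestly, and for every attempt along a flexible edge $e$ at time $t$, respond "infection failed" --- implicitly committing $t \notin \lambda(e)$ --- as long as some label set for $e$ consistent with all prior commitments remains. Only when no such set exists does it respond "infection successful". Define a potential $\Phi(i)$ equal to the number of $(e,t)$ pairs, over all flexible edges $e$ and all times $t \in [\Tmax]$, for which the Adversary has not yet committed after $i$ rounds. Initially $\Phi(0) = \min\{n/2, m\} \cdot \Tmax$, and the Discoverer cannot win step~\ref{game:graph-discovery:step-decision} while $\Phi > 0$: for some flexible edge $e$, two distinct label sets remain consistent with the infection logs, and the Adversary defeats any claimed labeling by outputting the other one.

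The main obstacle will be bounding the per-round decrease of $\Phi$ by $\iphase k$. Because the flexible edges form a matching, each node is incident to at most one flexible edge, so a seed infection at $(u, t_0)$ can directly commit at most the $\iphase$ labels $\{t_0+1, \dots, t_0+\iphase\}$ of the single adjacent flexible edge $uv$. The subtle point is to show that no infection ever spreads further within a round in a way that commits additional flexible-edge labels: for that, the Adversary's "deny-as-long-as-possible" policy ensures no flexible edge ever fires until it has been fully pinned down, and the backbone labels can be chosen so that any spreading through the backbone creates no new infections that hit a flexible edge at a previously uncommitted label. Summing over the $k$ seeds per round then yields a decrease of at most $\iphase k$, and dividing $\Phi(0) = \min\{n/2, m\} \cdot \Tmax$ by this rate gives the claimed $\lfloor \min\{n/2, m\}\Tmax / (\iphase k) \rfloor$ lower bound.
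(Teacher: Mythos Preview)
Your potential-argument skeleton matches the paper's, but the construction you propose is different and has a genuine gap on the ``at most two $\iphase$-edge connected components'' clause. The paper does not split into flexible and backbone edges: it makes \emph{all} $m$ edges carry undetermined label sets and uses a greedy construction to guarantee every node has degree at most $\lceil m/n\rceil$ while keeping the static graph connected; labelling every edge with $1$ then puts all edges in a single $\iphase$-edge component. Its potential starts at $m\cdot\Tmax$ and drops by at most $\lceil m/n\rceil\,\iphase k$ per round, and dividing gives the bound.

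In your construction, the flexible edges form a matching, so no two of them share an endpoint. But $\iphase$-edge connectivity requires a chain of edges with \emph{shared endpoints} and close labels; merely giving all flexible edges a mandatory label ``in the same window'' does nothing to connect them. When $m\le n/2$ you have no backbone at all, so the $m$ matching edges are $m$ separate $\iphase$-edge components, not two. Even for $n/2<m<n-1$ the backbone has too few edges to make the union of matching and backbone statically connected, so the component bound still fails. The paper's bounded-degree construction avoids this because all $m$ edges sit in one connected static graph from the start.

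A smaller issue: you assert that backbone labels ``can be chosen'' so spreading never commits extra flexible-edge bits, but you never pin this down. If, say, all backbone labels are placed at $\Tmax$, then a seed $(u,\Tmax-1)$ fires every backbone edge at $u$, and you must check that the newly infected neighbours are only infectious after $\Tmax$ and hence cannot test their own flexible edges. This is fixable, but the case analysis should be done rather than asserted. Also, once each flexible edge carries a mandatory label, your initial potential is $\min\{n/2,m\}\cdot(\Tmax-1)$, not $\min\{n/2,m\}\cdot\Tmax$; harmless, but inconsistent with what you wrote.
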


Again, this lower bound tells us we may not hope for an algorithm taking advantage of a small number of \(\iphase\)-edge connected components.
Similarly, any algorithm can only take advantage of the knowledge of the static edges when there are few of them (precisely if the number of edges is sublinear in the number of nodes).

\begin{proof}
The proof is analogous to the one of \Cref{thm:unknown-static-graph-lb} with three minor modifications: (1) we must fix all edges and let the Discoverer search for possibly existing labels instead of edges, (2) we must assure that any node has few adjacent edges, and (3) the analysis of the potential now analyzes the search for the multilabels and respects this smallest vertex cover.
We need to bound the number of edges adjacent to any one node to ensure that we cannot check all the edges using a small number of nodes to perform seed infections at.
Observe that this condition implies that the smallest vertex cover is large (i.e.,~we need many nodes to cover all edges).

To address (1) and (2), construct the edges of the graph greedily by starting at an arbitrary node and connecting it to a different arbitrary node.
Now, until you have added \(m\) edges, repeatedly consider the node that has the smallest positive number of adjacent edges and add an edge to the node with the smallest number of adjacent edges (0 if possible).
Clearly, in the resulting graph, all nodes that have at least one edge are in the same connected component, and thus, after labeling them \(1\),  all these edges are in the same \(\iphase\)-edge connected component.
There will be at most one additional label assigned by the Adversary, again yielding at most two \(\iphase\)-edge connected components in total.
Also, any node has degree at most \(\lceil m /n \rceil\), thus any vertex cover must have at least \(\lceil m / (\lceil m / n \rceil) \rceil\) nodes.
Note that this implies that if \(m \ge n\), the size of the minimum vertex cover is at least \(\lfloor n/2 \rfloor\).

Regarding (2), observe that, initially, there are \(m \Tmax\) possible edge labels (call that the initial potential).
In this variation, a successful infection only tells the Discoverer that the edge has that label but not that it does not have other labels, as is the case in the basic game.
Also, by the construction of our graph, the smallest vertex cover is large, so any seed infection can only test a small number of edges.
Concretely, any node has at most \(\lceil m / n \rceil\) adjacent edges.
Therefore, any round can decrease the potential by at most \(\lceil m / n \rceil \delta k\).
Dividing \(m \Tmax\) by \(\lceil m / n \rceil \iphase k\) yields the desired bound.
\end{proof}

The situation looks much better if we consider temporal multiedges (i.e.,~temporal multigraphs where there can be multiple distinct edges between the same two nodes) instead.
Here, the core advantage we have is that we know the multiplicity of the edges.
In the proof of \Cref{thm:multilabels-lb} we saw that the crucial property that forces the Discoverer to spend so many rounds is that discovering a label on an edge does not preclude it from having to check all other possible time steps for more labels.
Note that we disallow multiple edges with the same endpoints and the same label.

First, notice that any temporal graph is also a temporal multigraph.
Therefore, we have the following lower bound as a corollary to \Cref{thm:temporally-connected-graph-discovery-lower-bound}.

\begin{corollary}
Consider the variation of the graph discovery game with multigraphs. For all \(\iphase, k \in \N^+\) and \(\Tmax \ge 4\), there is an infinite family of temporal graphs \(\{\G_n\}_{n \in \N}\) such that the graph \(G_n\) has \(\Theta(n)\) nodes and the minimum number of rounds required to satisfy the graph discovery game grows in \(\Omega(n (T_\mathrm{max} - 3) / (\iphase k))\). Also, all graphs in the family are temporally connected.
\label{cor:multigraph-lower-bound}
\end{corollary}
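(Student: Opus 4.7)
The plan is to observe that the multiedges variation is a strict generalization of the basic graph discovery game: every temporal graph is a temporal multigraph in which every pair of endpoints carries exactly one edge. I would therefore recycle the hard instances constructed in the proof of \Cref{thm:temporally-connected-graph-discovery-lower-bound} verbatim, interpreted now as temporal multigraphs of multiplicity one. The asymptotic parameters ($\Theta(n)$ nodes and temporal connectivity) are inherited from that construction, so essentially nothing new has to be built.

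Next I would check that the Adversary strategy of \Cref{thm:temporally-connected-graph-discovery-lower-bound} transfers without modification. Its behavior---fix the labels of the hub edges at $v_{n-1}$ and $v_n$ in advance, and answer ``infection failed'' to every other attempt whenever the remaining set of consistent labels is still non-empty---only produces responses that are consistent with a multigraph in which every edge has exactly one label, which is a legal output in the multigraph game. Hence every sequence of Adversary responses that was valid in the basic setting remains a valid response sequence here.

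Then the lower bound follows by the same potential argument (\Cref{thm:potential-argument}) used before. The initial potential on the relevant edges is still $(n-3)/2 \cdot \Tmax$, and each round still eliminates at most $\iphase$ consistent labels per adjacent relevant edge per seed, hence at most $\iphase k$ potential in total, because the only extra power the Adversary would gain in the multigraph model---using multiplicity---is not exercised. Dividing as in the original proof yields the claimed $\Omega(n(\Tmax - 3)/(\iphase k))$ bound.

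I do not anticipate any real obstacle; the only subtlety to verify is that in the multiedges variation the Discoverer is told the multiplicity of every edge in step \ref{game:graph-discovery:step-info-sharing}. On these instances that multiplicity is uniformly one, so the Discoverer's starting information in the multigraph game is exactly the information it would have received in the basic model, and thus any Discoverer algorithm for the multigraph game also solves the basic game on this family---so the basic-model lower bound must apply.
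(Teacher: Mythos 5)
Your proposal matches the paper's reasoning: the paper states, immediately before the corollary, that any temporal graph is also a temporal multigraph, and derives the corollary directly from \Cref{thm:temporally-connected-graph-discovery-lower-bound}. Your last paragraph (that the multiplicity the Discoverer learns is identically one, so the starting information is the same as in the basic game, and hence any multigraph Discoverer is a basic-game Discoverer on this family) is exactly the right—and cleanest—way to justify the one-line transfer; the earlier re-run of the Adversary strategy and potential argument, while correct, is redundant once that observation is made.
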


On a more positive note, the \(\FDMain\) algorithm (\Cref{alg:follow-discover}) translates as well.

\begin{theorem}
Consider the variation of the graph discovery game with multigraphs.
\Cref{alg:follow-discover} wins any instance of the graph discovery game on a graph \(\G\) in at most \(6\edgesetsize{\G} + \decc{\G} \left\lceil \Tmax/\iphase \right \rceil\) rounds.
\end{theorem}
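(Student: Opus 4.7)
The plan is to port the proof of \Cref{thm:follow-discover} to multigraphs by re-examining the two ingredients that leaned on the simple-graph assumption, namely \Cref{lem:follow-edge} and \Cref{cor:follow-whole-component}. Once these are re-established, correctness of \(\FDMain\) and the round count of $6\edgesetsize{\G} + \decc{\G}\lceil \Tmax/\iphase \rceil$ follow from exactly the accounting used in \Cref{thm:follow-discover}, since the algorithm itself is unchanged and each round's outcome depends only on the infection log it receives.

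The heart of the argument is redoing the downward induction of \Cref{lem:follow-edge} on the overtaking budget $\edgeLabel{e} - t$. In the base case $\edgeLabel{e} - t = 1$, the round seeds only $v$ at $t$, so through time $t+1$ the only infectious node is $v$; any infection reaching $u$ at $t+1$ must travel along an edge with label $t+1$ incident to $v$, and since multigraphs still forbid two parallel edges with equal labels, $e$ is the unique such edge, so its infection attempt must succeed. In the inductive step, either $e$ fires in the current round or $u$ is infected via some path $p'$ whose last edge $e' \ne e$ satisfies $\edgeLabel{e'} \le \edgeLabel{e}$. The original proof invoked the round seeding $u$ at $\edgeLabel{e'} - 1$ and used $|p'| \ge 2$ to conclude $\edgeLabel{e'} - 1 > t$; in a multigraph $p'$ may be a single parallel edge with $\edgeLabel{e'} = t + 1$, breaking that inequality. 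The fix is to use, whenever $\edgeLabel{e'} < \edgeLabel{e}$, the round in which \(\ex\) seeds $u$ at $\edgeLabel{e'}$ itself: since $\edgeLabel{e'} > t$, the new overtaking budget $\edgeLabel{e} - \edgeLabel{e'}$ is strictly less than $\edgeLabel{e} - t$, and the hypothesis $\edgeLabel{e} \in [\edgeLabel{e'}+1, \edgeLabel{e'}+\iphase]$ follows from $\edgeLabel{e} > \edgeLabel{e'}$ and $\edgeLabel{e'} \ge t+1$. The remaining subcase $\edgeLabel{e'} = \edgeLabel{e}$ uses the seed at $\edgeLabel{e'} - 1 = \edgeLabel{e} - 1$, which falls directly into the just-handled base case.

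With \Cref{lem:follow-edge} re-established, \Cref{cor:follow-whole-component} is immediate because recursive \(\ex\) calls still reach every edge in an $\iphase$-edge connected component once any one of its edges has been discovered, parallel edges included. Correctness of \(\FDMain\) is then identical to the simple-graph case: the outer loop exits only when every node's adjacent edges are labeled, and each iteration covers at least one new $\iphase$-edge connected component. For the round count, the outer loop runs at most $\decc{\G}$ times at $\lceil \Tmax/\iphase \rceil$ rounds per sweep, and the deduplicated \(\ex\) seeds contribute at most three times per endpoint of each edge, totaling $6\edgesetsize{\G}$. The main obstacle is the inductive step above: parallel edges destroy the hop-length-two argument, and the fix of selecting the seed at $\edgeLabel{e'}$ rather than $\edgeLabel{e'} - 1$ is the one substantive new idea in an otherwise direct transplant of the simple-graph proof.
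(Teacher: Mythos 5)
Your proof is correct and follows the same route as the paper: re-establish \Cref{lem:follow-edge} and \Cref{cor:follow-whole-component} in the multigraph setting, then reuse the accounting from \Cref{thm:follow-discover}. The paper's own proof merely asserts in one sentence that the earlier proofs ``hold analogously on multigraphs,'' whereas you notice—correctly—that they do not carry over verbatim: the inductive step of \Cref{lem:follow-edge} uses the hop-length-two bound on $p'$ to obtain $\edgeLabel{e'}-1 > t$, and a parallel edge with $\edgeLabel{e'} = t+1$ breaks exactly that inequality.

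Your repair is sound. Stepping to the $\ex$ seed at $\edgeLabel{e'}$ (rather than $\edgeLabel{e'}-1$) whenever $\edgeLabel{e'} < \edgeLabel{e}$ gives a new overtaking budget $\edgeLabel{e}-\edgeLabel{e'}$ that strictly decreases because $\edgeLabel{e'} \ge t+1 > t$, and the recursive hypothesis $\edgeLabel{e} \in [\edgeLabel{e'}+1,\, \edgeLabel{e'}+\iphase]$ holds since $\edgeLabel{e'} \ge t+1$ and $\edgeLabel{e} \le t+\iphase$ give $\edgeLabel{e}-\edgeLabel{e'} \le \iphase-1$. The fall-back to the base case when $\edgeLabel{e'} = \edgeLabel{e}$ also works (in that subcase $p'$ must have hop-length at least two, since parallel edges cannot share a label, so $\edgeLabel{e'}-1 \ge t+1 > t$ still holds and the budget drops to one). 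The multigraph base case is correctly justified by the no-duplicate-label rule, and the final round count is the same bookkeeping as in \Cref{thm:follow-discover}. In short, you carried out carefully the verification that the paper's one-sentence proof glosses over, and filled the genuine gap the parallel-edge case creates.
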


Note that here, we count every multiedge individually.
In essence, this means that we only pay additional rounds for the additional multiplicity of the edges.

\begin{proof}
This result follows since the proofs of \Cref{thm:follow-discover}, \Cref{lem:follow-edge} and thus \Cref{cor:follow-whole-component} hold analogously on multigraphs.
\end{proof}

\section{Discussion}
\label{sec:org4a67c74}
In conclusion, our \(\FDMain\) algorithm translates to situations where we lift some of the restrictions we assumed for the rest of this paper.
In particular, it still works if the Discoverer only gets infection-time feedback or if we allow multiedges, lifting the two most restrictive assumptions we previously made.
We also prove that variations where the Discoverer needs to ensure the non-existence of a high number of edges or labels (such as the unknown static graph or multilabel variations) do not allow for significant improvements over the trivial algorithm.
In particular, our \(\FDMain\) algorithm is not applicable.
Again, see \Cref{tbl:variations-overview} for a summary of these results.
\chapter{Experimental Evaluation \label{sec:experiments}}
\label{experiments}
The gap between the lower and upper bounds for the graph discovery problem is relatively small, but only tells us about the worst-case performance, motivating us to investigate the performance of our graph discovery algorithm on common synthetic graphs and real-world data.
\section{Hypotheses}
\label{sec:orgadaa33f}
We formulate three hypotheses we aim to test with our experiments.

\begin{hypothesis}
The number of rounds required to discover a temporal graph is linear in the number of edge labels.
\end{hypothesis}
This first hypothesis is motivated by the worst-case analysis from \Cref{thm:follow-discover}.
We aim to test how closely the performance of our algorithm in practice matches this theoretical bound.
Note, this also takes into account the effect of the additional optimization described in the setup (\Cref{sec:setup}).

\begin{hypothesis}
Graphs with higher density spend fewer rounds on component discovery.
\label{hypo:component-discovery}
\end{hypothesis}
The \(\FDMain\) algorithm works in two distinct phases: (1) the main loop in \Cref{alg:follow-discover} discovers new \(\iphase\)-edge connected components (we thus call it the \emph{component discovery phase}) and (2) the \(\ex\) routine from explores the found components (we thus call it the \emph{component exploration phase}).
Both of these stages require the Discoverer to spend rounds, and their respective cost is dependent on the structure of the graph to be explored.
As we prove in \Cref{thm:follow-discover}, the component discovery phase is triggered at most \(\defaultdecc\) times.
We hypothesize that \(\decc{\G}\) tends to be lower in denser graphs as the components tend to merge as more edges are inserted, which would lead to a relatively lower cost for component discovery as compared to component exploration.

\begin{hypothesis}
In Erdős-Renyi graphs, the percentage of rounds spend on component discovery shows a threshold behavior in \(\Tmax / (\nodesetsize{\defaultGraphName} \cdot p)\). This effect is mediated by \(\defaultdecc\).
\label{hypo:threshold}
\end{hypothesis}
\Cref{hypo:threshold} is a specification of \Cref{hypo:component-discovery} for Erdős-Renyi graphs.
In particular, we can view this hypothesis as the temporal extension of the typical threshold behavior that static Erdős-Renyi graphs exhibit in \(np\) \cite{erdos1960evolution}.
We also conjecture this behavior holds provably in \Cref{conj:erdos}.

\section{Setup \label{sec:setup}}
\label{sec:org88bc0a4}
We perform our evaluation on two different data sets.
One is synthetically generated, while the other is real-world data from the Stanford Network Analysis Project collection.

Firstly, we evaluate our algorithm on Erdős-Renyi graphs. While the classical model has been developed for static graphs, it is commonly extended to temporal graphs \cite{casteigts_threshold,Angel_Ferber_Sudakov_Tassion_2020}.
To generate a simple temporal graph from an Erdős-Renyi graph, we simply choose each edge label uniformly at random from the set \([\Tmax]\).
This means that \(\Tmax\) is now the third parameter to generate these graphs, in addition to the classical ones \(n\) (the number of nodes) and \(p\) (the density parameter).
Denote such a graph as \(ERT(n, p, \Tmax)\).

Secondly, to evaluate our algorithm on real-world data, we employ a data set from the Stanford Large Network Dataset Collection \cite{kumar2021deception}.
The \texttt{comm-f2f-Resistance} collection is described by the project as a set of 62 “dynamic face-to-face interaction network[s] between a group of participants”.

In our implementation, we employ a small optimization upon \Cref{alg:follow-discover}.
Concretely, we skip what we call \emph{redundant seed infections}.
A seed infection \((v, t) \in \nodeset{\G} \times [0, \Tmax]\) is \emph{redundant} if we already know the labels of all edges adjacent to \(v\) at the time in the algorithm we would perform this seed infection.
Note, this does not depend on \(t\).
For an illustration, consider the example from \Cref{fig:follow-execution}.
There, the \(\FDMain\) algorithm would, after discovering the only edge adjacent to the leftmost node by a seed from its other adjacent node, perform seed infections at the leftmost node even though we already know the label of the only edge we could possibly discover.

We run the \(\FDMain\) algorithm on all graphs from the SNAP dataset and for a wide range of parameters for the temporal Erdős-Renyi graphs.
Concretely, we test with 1 to 100 nodes in steps of size 5, for probabilities \[p \in \{0.01, 0.05, 0.1, 0.15, 0.2, 0.25, 0.3, 0.35, 0.4, 0.5, 0.7, 0.9\}.\]
We pick \(\Tmax\) as a factor of \(n\), testing with \[\Tmax  / n \in \{0.05, 0.1, 0.2, 0.3, 0.4, 0.5, 0.7, 0.9, 1.0, 2.0, 3.0, 5.0, 7.0, 10.0\}.\]
Similarly, \(\iphase\) is 1 or a multiple of \(\Tmax\), namely \(\iphase / \Tmax \in \{0.01, 0.05, 0.1, 0.3, 0.5\}\)  
We record both the number of rounds needed to complete the discovery and how many of these rounds are spent in the component discovery versus the component exploration phases of the algorithm.

Both our implementation and analysis code is available under a permissive open-source license and can be used to replicate our findings.\footnote{See \url{https://github.com/BenBals/dynamic-network-discovery}.}

\section{Results}
\label{sec:orgd74da0e}
We now critically evaluate our hypotheses against the data thus obtained and compare the effects between the different data sets and parameters.
Particularly, we pay attention to evidence on how the different hypotheses interact.
Finally, we give \Cref{conj:erdos} as a result of our analysis of \Cref{hypo:threshold}.

\subsection{Hypothesis 1}
\label{sec:org088cf45}
\begin{figure}[t]
\begin{subfigure}{11cm}
    \centering
    \includegraphics[width=\linewidth]{./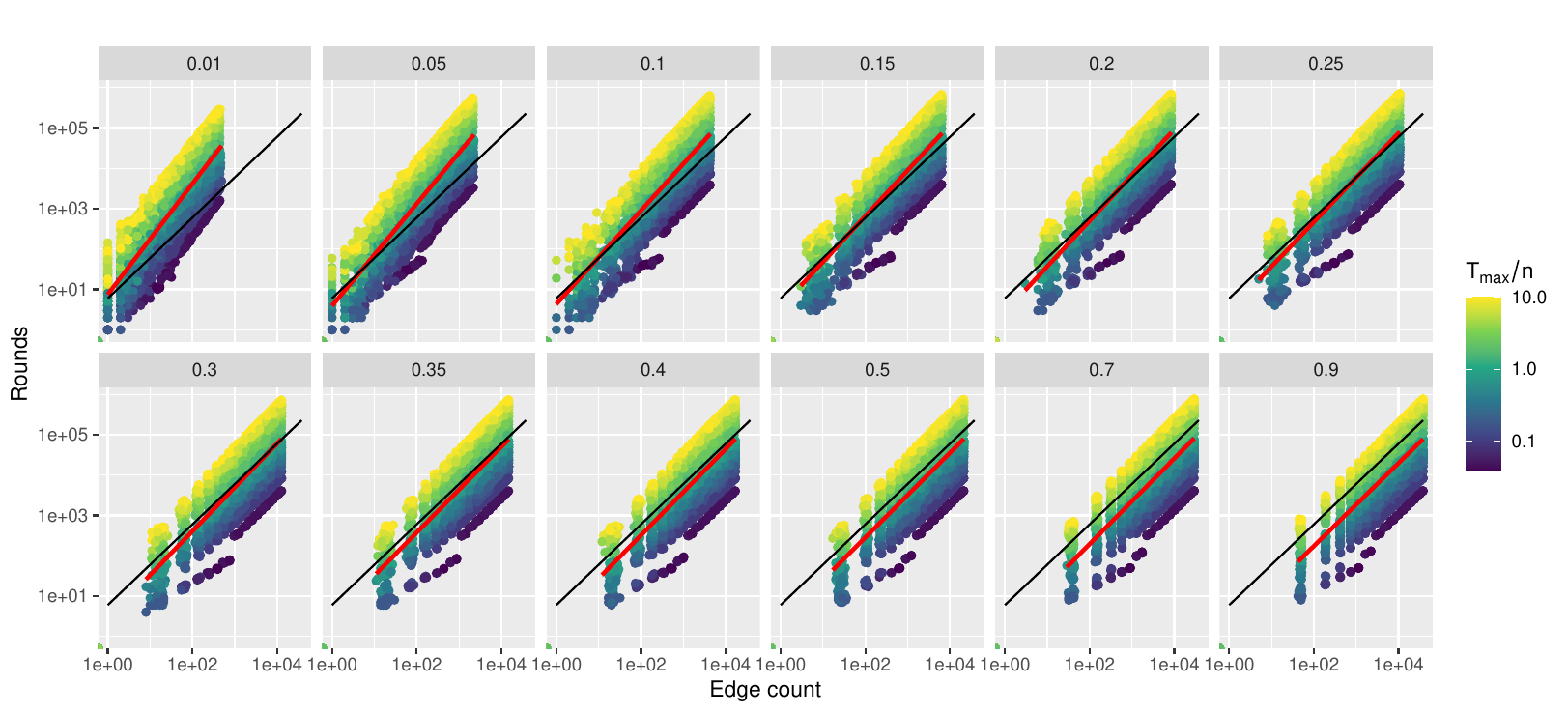}
    \caption{Erdős-Renyi graphs for different densities \(p\)}
    \label{fig:rounds_by_edge_count:erdos}
\end{subfigure}%
\hfill%
\begin{subfigure}{3cm}
    \centering
    \includegraphics[width=\linewidth]{./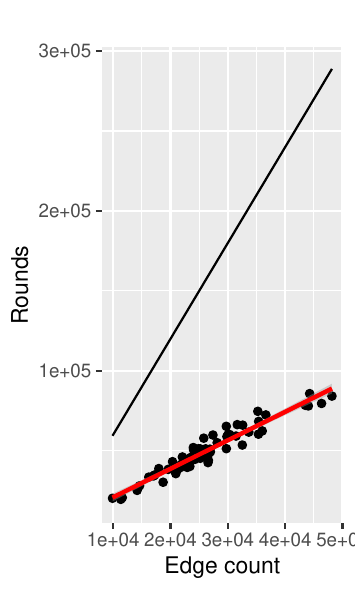}
    \caption{SNAP dataset}
    \label{fig:rounds_by_edge_count:snap}
\end{subfigure}
\caption{Number of rounds by number of edges for the \(\FDMainHack \) algorithm. The black line indicates the \(6\edgesetsize{\G}\) line (which is our theoretical bound for the component exploration phase), and the red line is the trend line (i.e.,~the linear regression).}
\label{fig:rounds_by_edge_count}
\end{figure}

To investigate our first hypothesis, we compare how the number of rounds relates to the number of edges.
See \cref{fig:rounds_by_edge_count} for the results on our datasets.
We can see that in Erdős-Renyi graphs, the relationship closely follows the \(6\edgesetsize{\G}\) line predicted by our theoretical results.
Clearly, this effect is influenced by other parameters such as \(p\), \(\Tmax\), and \(n\), whose roles we will examine in the discussion of the other hypotheses.
However, if we consider graphs with the same \(p\) and ratio \(\Tmax / n\) (i.e.,~one facet and one color in \cref{fig:rounds_by_edge_count}), we see that the relationship is strictly linear—the points form a tightly distributed straight line.
 We see that the trend for \(p \le 0.25\) is slightly above \(6\edgesetsize{\G}\) and slightly under it for larger values of \(p\).
This occurs since, when \(p\) is small, more time is spent on component discovery than on component exploration.
We will explore this more thoroughly in the discussion of the results regarding \Cref{hypo:component-discovery},

In the SNAP data set, we see that the trend is linear in the number of edges, but we require significantly less than \(6\edgesetsize{\G}\) rounds to discover the graph.
In fact, the gradient of the regression line is only 1.78.
This can be explained by the optimization skipping redundant infections as outlined in our experimental setup, as this enables the algorithm to require less than the \(6\) infections per edge, which would otherwise be strictly required.

In summary, while there is a strong linear relationship following the \(6\edgesetsize{\G}\) line, the number of rounds also significantly depends on other properties of the graph.
We can see these effects in both synthetic and real-world data.

\subsection{Hypothesis 2}
\label{sec:orgefb0d05}
\begin{figure}[tbph]
\centering
\includegraphics[width=9cm]{./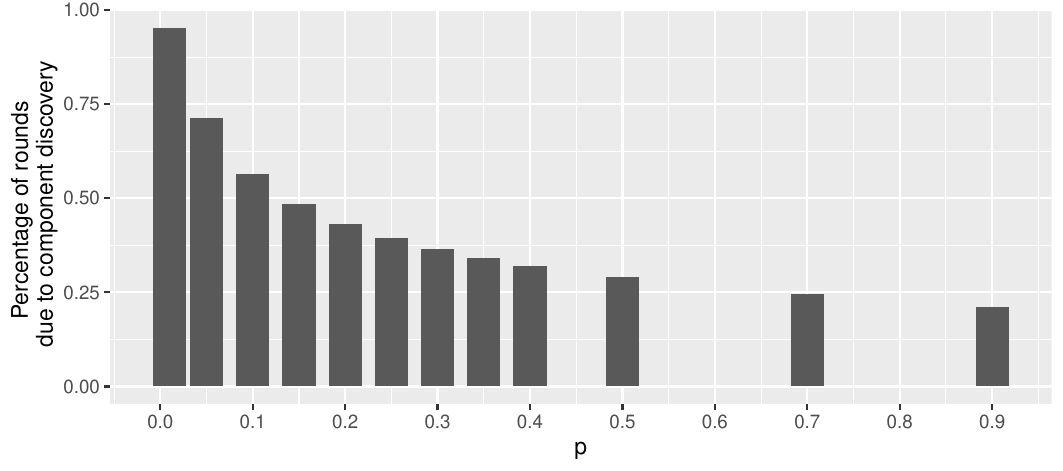}
\caption{\label{fig:erdos-renyi_component-discovery-percentage-by-p}The percentage of rounds the \(\FDMainHack\) algorithm spends in the component discovery phase on temporal Erdős-Renyi graphs by their density \(p\).}
\end{figure}

As this hypothesis is about the relationship between graph density and time spent on component discovery, we only analyze it on the Erdős-Renyi graphs, as the SNAP dataset does not exhibit significant differences in graph density.

In \Cref{fig:erdos-renyi_component-discovery-percentage-by-p}, we plot the percentage of time spent on component discovery dependent on the parameter \(p\) (which specifies the density of the graph).
We observe a clear and strong, inversely proportional relationship.
This leads us to accept \cref{hypo:component-discovery}.
This makes sense in the light of the theoretical analysis of the \(\FDMain\) algorithm, as we would expect denser graphs to have fewer \(\iphase\)-edge connected components, thus less need to discover new components.

\subsection{Hypothesis 3}
\label{sec:org13b004a}
\begin{figure}[t]
\begin{subfigure}{6.5cm}
    \centering
    \includegraphics[width=\linewidth]{./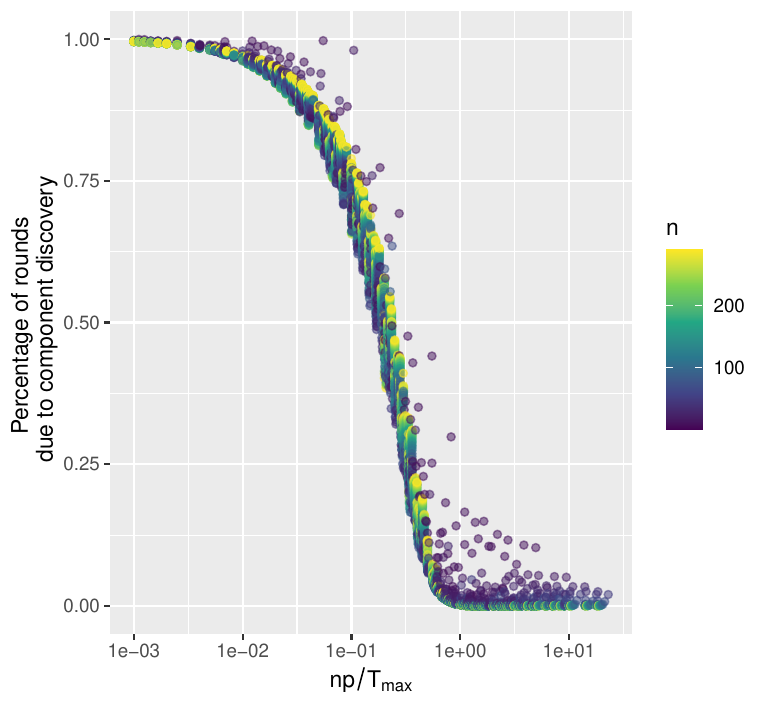}
    \caption{The percentage of rounds spent in the component discovery phase by \(np / \Tmax\).
    \label{fig:erdos-renyi_component-discovery-percentage-by-n-times-p-over-Tmax}}
\end{subfigure}%
\hfill%
\begin{subfigure}{6.5cm}
    \centering
    \includegraphics[width=\linewidth]{./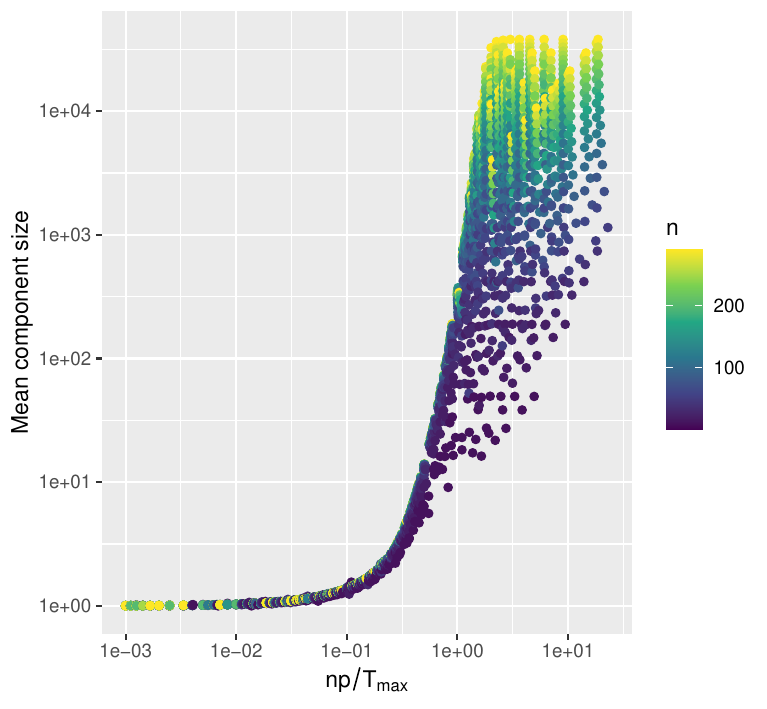}
    \caption{Mean size of the \(\iphase\)-edge connected components by \(np / \Tmax\).
    \label{fig:erdos-renyi_component_mean_size-by-n-times-p-over-Tmax}}
\end{subfigure}
\caption{Threshold behavior in the cost of component discovery in the \(\FDMainHack \) algorithm. Note that both graphs use logarithmic axes, and the color denotes the number of nodes.}
\label{fig:threshold}
\end{figure}

Examining \cref{fig:erdos-renyi_component-discovery-percentage-by-n-times-p-over-Tmax}, we can see a clear threshold between \(np/\Tmax = 0.01\) and \(np/\Tmax = 1.00\) where we move from spending only a small amount of rounds on component discovery to spending close to all of our rounds on component discovery.
Our hypothesis, that this is mediated by the size of the \(\iphase\)-edge connected components, is supported by the results in \cref{fig:erdos-renyi_component_mean_size-by-n-times-p-over-Tmax}.
Here, the additional differentiation by color for larger values of \(np / \Tmax\) can be explained because the size of a component is constrained by the size of the graph.

Note that big \(\iphase\)-edge connected components imply that we can discover many edges in the component exploration phase for a single node explored in the component discovery phase.
That means the plotted percentage shrinks as the average size of \(\iphase\)-edge connected components increases.
This behavior is similar to that seen in static Erdős-Renyi graphs, where if \(np=1\), there almost surely is a largest connected component (in the classical sense) of order \(n^{2/3}\).
And if \(np\) approaches a constant larger than 1, there asymptotically almost surely is a so-called giant component containing linearly many nodes \cite{erdos1960evolution}.
This motivates us to conjecture the following similar behavior.
\begin{conjecture}
Let \(p \in [0,1], \Tmax, n \in \N\) and \(\G \coloneqq ERT(n, p, \Tmax)\). Then
\begin{itemize}[topsep=5pt]
\item if \(\Tmax/(np) \xrightarrow{n \to \infty} c\) where \(c < 1\), then \(\decc{\G} / \edgesetsize{\G} \xrightarrow{n \to \infty} 0\), and
\item if \(\Tmax/(np) > 1\), then almost surely, all \(\iphase\)-edge connected components have size at most \(\BigO(\log n^2) = \BigO(\log n)\).
\end{itemize}
 \ifpaper \else {\vspace{-1em}}\fi
\label{conj:erdos}
\end{conjecture}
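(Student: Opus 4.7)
The plan is to reduce the conjecture to a question about connected components of an auxiliary random graph and then apply branching-process arguments. Let $\G \coloneqq ERT(n, p, \Tmax)$ and define the auxiliary graph $H = H(\G, \iphase)$ on vertex set $\edgeset{\G}$ in which two distinct $\G$-edges are joined precisely when they share an endpoint in $\G$ and the absolute difference of their labels is at most $\iphase$. By the definition of $\iphase$-edge connected components, the connected components of $H$ are in bijection with the $\iphase$-edge connected components of $\G$, so both bullets of \Cref{conj:erdos} translate to statements about the component structure of $H$.

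First I would compute the expected $H$-degree of a fixed $\G$-edge $e = uv$ with label $t$: summing over the $2(n-2)$ potential other edges incident to $u$ or $v$, each existing independently with probability $p$ and, conditional on existing, landing in the length-$(2\iphase{+}1)$ label window around $t$ with probability $\approx (2\iphase{+}1)/\Tmax$, yields expected degree $\mu \sim 4np\iphase/\Tmax$. This places the natural critical line of $H$ at $\mu = 1$, i.e., $\Tmax/(np) \sim 4\iphase$. I would therefore first reconcile this with the conjecture's threshold at $\Tmax/(np) = 1$, either by restating the result up to an $\iphase$-dependent constant or by treating $\iphase/\Tmax$ as a fixed ratio, consistent with the experimental regime in \Cref{sec:setup}.

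For the subcritical direction ($\mu < 1$), I would stochastically dominate the breadth-first exploration of any component of $H$ by a subcritical Galton--Watson process with offspring law $\mathrm{Bin}(2n,\,p(2\iphase{+}1)/\Tmax)$. Standard exponential tail bounds for the total progeny of a subcritical branching process give that a component exceeds size $k$ with probability at most $\exp(-\Omega(k))$, and a union bound over the at most $n^2$ possible starting $\G$-edges yields the uniform $\OOp(\log n)$ bound asserted in the second bullet.

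For the supercritical direction ($\mu > 1$), I would run the dual argument: lower-bound the exploration by a supercritical branching process to produce a unique component of $H$ of size $\Omega(m)$, and combine this with a sprinkling or second-moment cleanup to show that the remaining components contribute only $o(m)$ components in total, which gives $\decc{\G}/\edgesetsize{\G} \to 0$. The main obstacle will be the dependency structure of $H$: two $H$-edges sharing a common $H$-vertex are correlated through a shared $\G$-label, and any two $\G$-edges meeting at a vertex automatically close a triangle in $H$ with every third $\G$-edge whose label is close to both. Consequently, neither the classical Erdős--Rényi component arguments nor off-the-shelf random-intersection-graph results apply verbatim, and the technical crux is to build a coupling with an independent branching model --- for instance via a Poissonization that decouples labels from edge-presences --- that is tight enough to pin down the threshold constant rather than only its order of magnitude.
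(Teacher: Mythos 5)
This statement is a \emph{conjecture}: the paper does not prove it and explicitly lists proving or disproving it as future work, so there is no ``paper's own proof'' to compare against. What I can do is evaluate your blueprint on its merits.

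Your reduction is the natural one and is correct as far as it goes: defining the auxiliary graph \(H\) on \(\edgeset{\G}\) with adjacency meaning ``share a \(\G\)-endpoint and labels within \(\iphase\)'' exactly recovers the \(\iphase\)-edge connected components as components of \(H\), and the branching-process framing is the standard way to attack the component structure of such a random graph. Your strongest contribution is the first-moment calculation: the expected \(H\)-degree \(\mu \sim 4np\iphase/\Tmax\) puts the natural percolation threshold at \(\Tmax/(np) \sim 4\iphase\), not at \(\Tmax/(np) = 1\) as written in the conjecture. This is a genuine observation about the conjecture as stated: with \(\iphase\) fixed and \(\Tmax/(np)\) slightly above \(1\) but below \(4\iphase\), the heuristic predicts a giant component, contradicting the second bullet. (Taken to an extreme, \(\iphase \ge \Tmax\) collapses \(\iphase\)-edge connected components to ordinary connected components, and in a dense ER graph the giant component has linearly many edges, far exceeding \(\BigO(\log n)\).) So the conjecture very likely needs either a factor of \(\iphase\) in the threshold quantity or a convention on how \(\iphase\) scales with \(\Tmax\) --- indeed, the experiments fix \(\iphase/\Tmax\) to a constant, which hides the \(\iphase\)-dependence --- and any proof attempt should address this before pinning down the critical constant.

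Beyond that, your plan is a sensible roadmap but not a proof, and you acknowledge where it is incomplete. The subcritical direction via domination by \(\mathrm{Bin}(2n, p(2\iphase+1)/\Tmax)\) branching and a union bound over \(\le n^2\) starting edges is plausible, but even there you need to verify that the domination survives the ``depletion'' of already-explored labels and endpoints (which actually helps in the subcritical direction, as you note). The supercritical direction is the hard part and is the one the paper's first bullet asserts; a sprinkling argument is the right instinct, but the correlations you identify --- triangles in \(H\) forced by three edges sharing a vertex with mutually close labels, and shared-label dependence between \(H\)-edges meeting at an \(H\)-vertex --- mean none of the standard ER or random-intersection-graph machinery applies directly. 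Any complete proof would have to supply that coupling, and would likely end up proving (or correcting) the conjecture with the \(\iphase\)-dependent threshold your first-moment calculation exposes.
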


Investigating the connectivity behavior of random temporal graphs in a way that respects their inherent temporal aspects is an interesting and little understood problem \cite{casteigts_threshold,casteigts2024simple}.
Casteigts et al.~\cite{casteigts_threshold},
theoretically study sharp thresholds for connectivity in temporal Erdős-Renyi graphs, and also conjecture about a threshold for the emergence of a node-based giant (i.e.,~linear size) connected component.
This conjecture can be seen as capturing the equivalent behavior under waiting time constraints (i.e.,~where edges are only considered if their edge labels do not differ too much).
\chapter{Summary \& Future Work}
\label{sec:org9527ff4}
We give a comprehensive theoretical and empirical study of graph discovery in temporal graphs.
Our lower bounds and algorithms give a clear insight into what is possible in this problem space and what is not.
The \(\FDMain\) algorithm provides an efficient graph discovery strategy requiring only \(6\edgesetsize{\G} + \decc{\G} \left\lceil \Tmax/\iphase \right \rceil\) rounds.
Crucially, our lower bound in \Cref{thm:omega-m-witness-complexity} proves that any algorithm must spend at least \(\Omega(\edgesetsize{\G})\) rounds, showing the \(\FDMain\) algorithm is close to optimal.
The analysis of a number of settings not only broadens the applicability of our results, but also gives us insights into which properties make the problem hard.
When considering infection times-only–feedback or multiedges, the \(\FDMain\) algorithm still works.
For the variations with multilabels or unknown static graphs on the other hand, we prove there can be no algorithm whose running time only depends on the number of edges.

Our empirical analysis highlights the relevance of our theoretical results for practical applications and gives rise to interesting insights of its own.
We see that on Erdős-Renyi graphs, the observed performance of the \(\FDMain\) algorithm matched our theoretical analysis.
On real-world data from the SNAP collection, the algorithm even slightly outperforms our predictions.
Finally, we observe a close link between the parameters of the temporal Erdős-Renyi model, the temporal connectivity structure of the resulting graphs, and our algorithmic performance, creating a bridge between our theoretical insights on \(\iphase\)-connected components and their empirical behavior.

Yet, the study of spreading processes in temporal graphs is young, and the study of graph discovery on temporal graphs is especially so, thus it is natural that this first work gives rise to a number of interesting open questions.
In particular, future work could aim to completely tighten the lower bound from \Cref{thm:temporally-connected-graph-discovery-lower-bound} and give a lower bound that is tight in \(k\) and \(\iphase\),
Another promising avenue is to investigate \(\defaultdecc\) further. In particular, to prove or disprove the observed threshold in Erdős-Renyi graphs (\Cref{conj:erdos}) and investigate an analog to the static \(\ln n / n\) threshold. Especially, since this is essentially the waiting-time analog of the sharp connectivity thresholds
Casteigs et al.~proved
\cite{casteigts_threshold}.
Finally, future research could explore more variations of graph discovery, such as finding specific nodes or checking for structural properties instead of discovering the whole graph.

\FloatBarrier

\part{Source Detection \label{part:source-detection}}
\chapter{The Source Detection Game \label{sec:game-def}}
\label{game-def}
In our game, two players interact with each other. The \emph{Discoverer} attempts to find a fixed source of infections (i.e.,~a troll in a social network) and the \emph{Adversary} controls the environment as well as the position of that source and attempts to conceal the source from the Discoverer.
The goal of the Discoverer is to find the source while allowing as few successful infections until then as possible.
We refer to this number of infections as the \emph{price of detection}.

\begin{figure}[tbph]
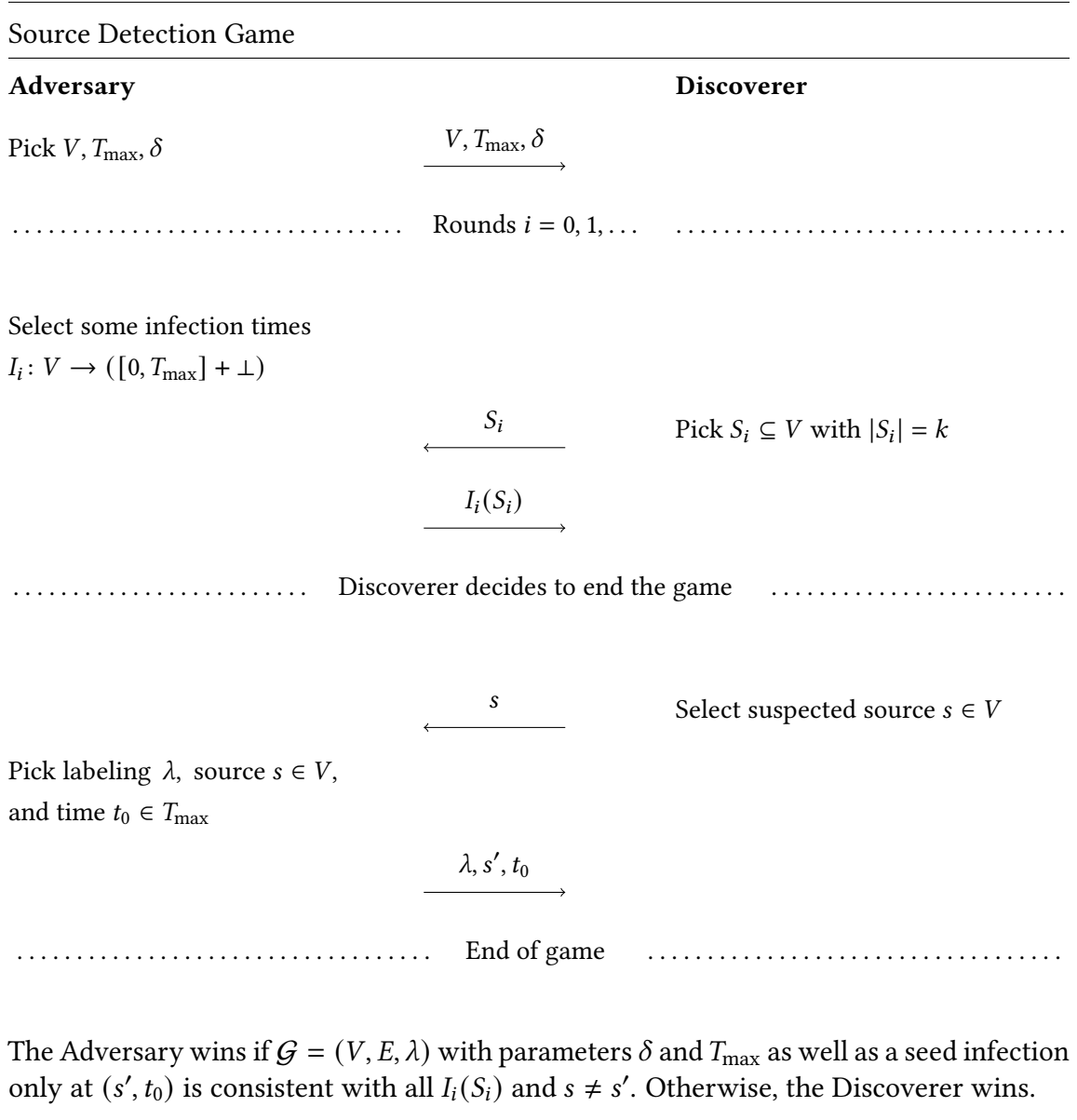

\rule{\textwidth}{0.4pt}\vspace{-0.75\baselineskip} %
\procedureblock[width=\textwidth]{Source Detection Game}{%
\textbf{Adversary} \< \< \textbf{Discoverer} \\
\text{Pick } V, \Tmax, \iphase \< \sendmessageright*{V, \Tmax, \iphase} \< \pclb
\pcintertext[dotted]{\quad Rounds \(i=0,1,\dots\) \quad}  \\
\text{Select some infection times} \\ I_i \colon V \to ([0,\Tmax] + \bot)  \< \< \\
\< \sendmessageleft*{S_i} \< \text{Pick } S_i \subseteq V \text{ with  } |S_i| = k \\
  \< \sendmessageright*{I_i(S_i)} \< \pclb
\pcintertext[dotted]{\quad Discoverer decides to end the game \quad}  \\
\< \sendmessageleft*{s} \< \text{Select suspected source } s \in V \\
\text{Pick labeling } \edgeLabelOp, \text{ source } s \in V, \\ \text{and time } t_0 \in \Tmax \< \< \\
\< \sendmessageright*{\edgeLabelOp, s', t_0} \< \pclb
\pcintertext[dotted]{\quad End of game \quad}
}
The Adversary wins if \(\G=(V, E, \edgeLabelOp)\) with parameters \(\iphase\) and \(\Tmax\) as well as a seed infection only at \((s', t_0)\) is consistent with all \(I_i(S_i)\) and \(s \ne s'\). Otherwise, the Discoverer wins.

\vspace{-0.25\baselineskip} \rule{\textwidth}{0.4pt}
\caption{The source detection game in the variation with consistent source behavior and known static graph. \label{game:source-detection}}
\end{figure}
\todo[color=pink]{Fix that the Discoverer learns the infection source}

We will investigate different restrictions upon the knowledge and power of the Discoverer and Adversary.
The dimensions whose impact on the price of detection we explore are:
\begin{itemize}
\item \emph{Infection behavior}: Does the source have to infect in the same fashion each round, or can it change its behavior between rounds? In any case, the Adversary is oblivious to which nodes the Discoverer watches.
\item \emph{Watched nodes}: How many nodes may the Discoverer watch each round? We investigate the cases where the Discoverer watches one or two nodes.
\item \emph{Discoverer knowledge}: Which information about the graph is available to the Discoverer? We investigate the case where the Discoverer knows the underlying static graph and where it only knows the nodes of the graph but not the edges.
\item \emph{Graph class}: We examine how the structure of the underlying static graph affects the price of detection. Concretely, we look at general graphs, trees, and graphs of bounded treewidth.
\end{itemize}

Furthermore, for many of these problems, we investigate their price of detection where the Discoverer is a randomized algorithm.
There, we aim for bounds that hold with constant probability.
See \Cref{tbl:results} for an overview of our results  \ifpaper \else {for the source detection problem}\fi.
\chapter{Randomized Source Detection \label{sec:rand}}
\label{rand}
Randomizing the behavior of the Discoverer enables us to evade the worst cases and yield good results with high probability.\todo[color=pink]{@Supervisors: Should I include a very bad lower bound on worst-case performance to justify this?}
In fact, in this section, we even show that there is a randomized Discoverer that wins the source detection game with consistent source behavior on unknown static graphs within \(O(n \sqrt n)\) infections and with constant success probability.
See the definition of \cref{alg:aY1}.

\begin{algorithm}[tbhp]
        \DontPrintSemicolon
        \SetKwFunction{FMain}{UnknownSourceDetection}
        \SetKwProg{Fn}{fun}{:}{}
        \Fn{\FMain}{
           1. Pick $\sqrt{n}$ nodes uniformly at random and watch them each for one round.

           2. Now let $a$ be the node among those picked that is infected the earliest.

           3. Recursively, pick the next $a$ as the node that infected the previous $a$.

           4. If there is no neighbor left that gets infected earlier, we must have found the root.
        }
    \caption{The randomized source detection algorithm for unknown static graphs and consistent source behavior.}
    \label{alg:aY1}
\end{algorithm}

Notice that this algorithm always finds the source and has a constant probability of terminating within \(2 \sqrt n\) rounds. Modifying step 3 such that it aborts after \(\sqrt{n}\) rounds yields an algorithm that always terminates within \(2\sqrt n\) rounds and has a constant probability of finding the source.

\begin{theorem}
\cref{alg:aY1} solves the source detection problem with consistent source behavior on unknown static graphs within \(O(n \sqrt n)\) infections with constant probability.
\label{thm:aY1}
\end{theorem}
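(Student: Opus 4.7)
The plan is to bound the price of detection deterministically and then establish the correctness probability separately. For the cost, I would observe that the modified algorithm runs for at most $\sqrt n$ rounds of step 1 plus at most $\sqrt n$ rounds of step 3, and any single round can cause at most $n$ infections (an infection chain in an $n$-vertex graph touches at most all of $V$). Hence the price of detection is at most $2n\sqrt n \in \BigO(n\sqrt n)$, uniformly in the Adversary's strategy.

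For correctness, the crucial structural observation is that consistent source behavior makes the directed tree of successful infections $T$ (rooted at the true source $s'$) essentially the same in every round, so the infection time of each node and the identity of its infecting parent are well-defined quantities the Discoverer can read off from step 1. Ordering the $m$ infected nodes by infection time as $v_1 = s', v_2, \dots, v_m$, I plan to establish the monotonicity property: the depth of $v_k$ in $T$ is at most $k-1$, since every ancestor of $v_k$ in $T$ must be infected strictly earlier than $v_k$ and therefore lies in $\{v_1, \dots, v_{k-1}\}$. Consequently, as soon as step 2 returns a sample of rank at most $\sqrt n$, the trace-back in step 3 reaches $s'$ within $\sqrt n - 1$ rounds and the Discoverer outputs the true source.

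What remains is a probability estimate. Assuming $m \ge \sqrt n$, the probability that all $\sqrt n$ uniformly sampled nodes miss the set $\{v_1, \dots, v_{\sqrt n}\}$ equals $\binom{n-\sqrt n}{\sqrt n}/\binom{n}{\sqrt n}$, which is at most $(1 - 1/\sqrt n)^{\sqrt n} \to 1/e$. Hence the event ``earliest infected sample has rank at most $\sqrt n$'' holds with probability at least $1 - 1/e - \smallO{1}$, yielding the advertised constant success probability for all sufficiently large $n$.

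The step I expect to be the main obstacle is the degenerate regime $m < \sqrt n$, where there are not even $\sqrt n$ infected targets to aim at and the naive calculation above degenerates to a hit probability of order $m/\sqrt n$, which may tend to zero. My plan is to exploit the compensation built into the cost model: in this regime every round costs at most $m$ infections, so a full game costs only $\BigO(m \sqrt n) = \smallO{n \sqrt n}$, which leaves room to run a constant number of independent sampling batches back to back while staying inside the $\BigO(n \sqrt n)$ budget. The delicate bookkeeping of showing that such a boost restores constant success probability uniformly in $m$ is what I would leave to the end.
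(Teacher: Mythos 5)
Your argument mirrors the paper's for the main case: you bound the cost by $2n\sqrt{n}$, use the min-heap property of the infection tree to argue that the earliest-infected sampled node is shallow, and bound the miss probability by $(1-1/\sqrt{n})^{\sqrt n}\to 1/e$. Your phrasing of the structural lemma (``the $k$-th earliest-infected node has depth at most $k-1$'') is a cleaner, more direct way of saying what the paper does with its auxiliary time $t$ and subtree $T'$; both are the same observation, and your hypergeometric bound (sampling without replacement) is slightly tighter than the paper's with-replacement bound but equally serviceable. On this main case you and the paper agree.

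The place where you diverge from the paper is exactly the place where you are right to be uneasy, but your proposed patch does not work. If the infection tree has $m < \sqrt n$ nodes, then the probability that any single round of uniform sampling hits an ever-infected node is $m/n$; after $\sqrt n$ samples this is roughly $m\sqrt n / n$, which vanishes as $n \to \infty$ whenever $m = \smallO{\sqrt n}$. Running a \emph{constant} number of extra batches multiplies this vanishing quantity by a constant and therefore cannot restore a constant success probability. (The paper's proof has the same gap: its $T'$ is only guaranteed to contain $\sqrt n$ nodes when some node of infection distance greater than $\sqrt n$ exists, which presupposes $m > \sqrt n$.) The right way to exploit the compensation you correctly noticed—in this regime each round costs only $m$ infections—is to keep drawing fresh random nodes until one is observed to be infected, with no a priori bound on the number of rounds, exactly as in \Cref{lem:linear-discovery}: that strategy tolerates at most $3n/2$ infections in expectation, hence $\BigO{(n)}$ with constant probability by Markov, and the subsequent trace-back from the hit node costs at most another $m^2 < n$ infections since the tree has fewer than $\sqrt n$ nodes. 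This stays well inside the $\BigO(n\sqrt n)$ budget and yields constant success probability uniformly in $m$; a constant number of batches does not.
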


\begin{proof}
The proof consists of two parts: first, we show that if we pick a good node as \(a\) in step 2, then steps 3 and 4 take \(O(\sqrt{n})\) rounds. Second, we show that picking a good node is sufficiently likely.

Let \(T\) be the tree of the infection behavior.
Observe that when node-labeled with the infection times, \(T\) is a min-heap.
For a node, call the distance from the root its \emph{infection distance}.
For part one, assume that \(a\) has an infection distance at most \(\sqrt{n}\).
Clearly, then step 3 recurses at most \(\sqrt n\) times.

For part two, set \(t\) to be the smallest time step at which any node \(v\) of infection distance greater than \(\sqrt n\)  gets infected, and let \(T'\) be the induced subgraph on \(T\) of the nodes that are infected before \(t\). Since \(T\) is a min-heap with regard to infection times, \(T'\) is also a tree (and not a forest).
By definition, there are at least \(\sqrt{n}\) nodes in \(T'\) (examine the path on which \(v\) lays).
Also observe that if the algorithm picks a node from \(T'\) in step 1, then it also picks a node from \(T'\) in step 2, giving us the result for the first part.
Lastly, let us examine how likely it is that a node from \(T'\) is picked in step 1.
For that, see that the probability of the complementary event is
\begin{align*}
\left(\frac{n - \sqrt{n}}{n} \right)^{\sqrt{n}},
\end{align*}
since there are \(n - \sqrt n\) nodes not in \(T'\) and we pick \(\sqrt n\) nodes. It’s well known that this term approaches \(1/e\) as \(n\)  approaches \(n\).
\end{proof}

\begin{theorem}
Let \(2 < \funnyconstant < 3\). For any algorithm solving the source detection problem with consistent source behavior, unknown static graph and watching a single node with success probability \(p >0\) there is an infinite family of graphs (which are all trees) such that the price of detection under that algorithm is at least \(\sqrt{\ln\left(\frac{1}{(1-p/\funnyconstant)^2}\right)} n \sqrt{n}\).
\label{thm:aY1-lb}
\end{theorem}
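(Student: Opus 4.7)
The plan is to apply Yao's minimax principle, reducing the claim about randomized algorithms to a statement about deterministic algorithms on a hard distribution over instances. For each \(n\), I would construct a ``starfish'' tree \(T_n\): a root \(r\) connected to one endpoint of each of \(N = \lceil \sqrt{n/\funnyconstant}\rceil\) disjoint paths \(P_1,\dots,P_N\), each of common length \(L = \lceil \sqrt{n\funnyconstant}\rceil\), so that \(NL + 1 \approx n\). Edge labels are chosen so that an infection seeded at the far tip \(v_{j,L}\) of any path \(P_j\) propagates along \(P_j\) to the root in \(L\) time steps and then outwards along every other path in another \(L\) steps. The hard input distribution is (a) a uniformly random bijection between node identifiers \(V = \{1,\dots,n\}\) and tree positions, together with (b) a uniformly random source among the \(N\) tips \(v_{j,L}\). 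Each round infects every non-source node, so the price of detection is exactly \((n-1)q\), where \(q\) is the number of rounds.

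The key observation is \emph{information independence}: for a node \(v\) whose tree position lies off \(P_s\) and is not \(r\), the infection log reply depends only on \(v\)'s position \((j,k)\) and not on \(s\), since \(v\) is always infected by its root-side neighbor at a deterministic time. Hence, until the algorithm first ``hits'' \(P_s \cup \{r\}\), its adaptive queries form an \(s\)-oblivious deterministic sequence, and from the randomized adversary's perspective each such query lands on a uniformly random tree position independent of \(s\). In particular, if the algorithm never hits \(P_s \cup \{r\}\), its guess is \(s\)-independent and correct with probability at most \(1/N\). A union bound gives that the hit probability within the first \(q_1\) queries is at most \(q_1 L/n = q_1/N\), and conditional on a hit the position on \(P_s\) is uniformly distributed over \([1,L]\). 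After a hit at position \(k\), at least \(L-k\) further queries along the infector chain are needed to identify \(v_{s,L}\) as the seed, so the trace succeeds within \(q_2\) further queries with probability at most \(q_2/L\).

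To conclude, I would combine these two estimates into \(\Prob{A\text{ succeeds}} \le q_1 q_2/(NL) + 1/N\) and then refine this product-form bound via a Chernoff-style exponential tail argument that treats the hit and trace stages as two Bernoulli-like processes driven by the randomness of \(s\) and the random labeling. Setting \(\Prob{A\text{ succeeds}} \ge p\) and optimizing over the split \(q = q_1 + q_2\), with \(N\) and \(L\) tuned through \(\funnyconstant \in (2,3)\), yields \(q \ge \sqrt{\ln(1/(1-p/\funnyconstant)^2)} \cdot \sqrt{n}\); multiplying by the \(n-1\) infections per round then gives the claimed price of detection.

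The main obstacle will be upgrading the naive AM--GM / first-moment product bound (which yields only \(q \ge 2\sqrt{pn}\)) into the exact logarithmic form stated in the theorem. This step requires a careful exponential tail bound and the precise tuning of \(\funnyconstant\), which balances the probability of missing \(P_s\) against the probability of failing to complete the trace. An additional subtlety is to argue that adaptive algorithms cannot exploit partial information gleaned before the first hit to bias their post-hit trace more effectively than the symmetric analysis allows; the information-independence observation does the heavy lifting here by showing that all pre-hit queries are \(s\)-oblivious.
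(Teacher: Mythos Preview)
Your starfish construction and two-stage analysis differ substantially from the paper's approach, and the gap you identify as your ``main obstacle'' is a symptom of this mismatch rather than a genuine missing step. The paper uses the simplest possible instance: a single path $P_n$ with labels $\edgeLabel{i,i+1}=i$, so the source at one end infects everyone. The logarithmic constant then falls out directly from the limit $\bigl(1 - \funnyconstant c/\sqrt{n}\bigr)^{c\sqrt n}\to e^{-\funnyconstant c^2}$, with no Chernoff machinery needed. The key simplification is an auxiliary algorithm $B$ that watches, in each round, both the node $A$ watches and every node $A$ has previously learned as an infector; this lets $B$ trace back one step per round from every prior query simultaneously, so the analysis collapses to a single event: ``some fresh query by $A$ landed within distance $\approx 2c\sqrt n$ of the source.'' Since fresh queries land on uniformly random unknown positions, the miss probability over $c\sqrt n$ rounds is a clean product of the displayed form.

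Your starfish artificially decouples ``hit the source arm'' from ``trace to the tip,'' which is why you get a product $q_1q_2/(NL)$ rather than an exponential. Ironically, if that product bound were made rigorous it would yield $q\ge 2\sqrt{pn}$, which is \emph{stronger} than the stated constant $\sqrt{-2\ln(1-p/\funnyconstant)}\sqrt n$ for all $p\in(0,1)$ and $\funnyconstant\in(2,3)$ --- so the ``upgrade'' you are searching for is illusory. The real gaps lie elsewhere: you have not handled the strategy that traces from a non-source arm toward the root $r$ (which deterministically reveals the source arm after at most $L$ steps, without ever landing a fresh query on $P_s$), and your split into fixed $q_1,q_2$ ignores that the hit time is random and the algorithm may interleave fresh queries with tracing adaptively. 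On the path these complications vanish because there is no hub to aim for and the only informative move is to step one node toward the source; this is what makes the paper's instance the right one for the exact constant in the statement.
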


Interestingly, this also gives us a bound on the trade-off between the success probability and the price of detection.
For a more intuitive (though worse bound), obverse that by simple analytic tools, if \(2 < \funnyconstant < 3\), then we have that \(\sqrt{\ln\left(\frac{1}{(1-p/\funnyconstant)^2}\right)} > p\) for \(p \in (0,1)\).
Meaning, if we settle for a decrease in the success probability, we get less than a proportional improvement in the number of tolerated infections.
Note that we do not know of an algorithm exploiting the limited slack afforded by this trade-off.

\begin{proof}
First, we describe an infinite family of graphs, then assume there in an algorithm winning the game on these instances in \(o(n \sqrt{n})\) infections with probability at least \(p\). We then derive an algorithm that must solve the problem strictly faster and show that, for all small enough \(c\) (which we will give later), there are problem instances where this improved algorithm has probability less than \(p\) to require less than \(c\sqrt{n}\) rounds.

First of all, let \(n \in \N\). Then set \(P_n\) the path with \(n\) nodes and let the time labeling be \(\edgeLabel{i,i+1} = i\) for \(i \in [n]\). Now let \(A\) be any algorithm that solves the source detection problem on the set \(\{P_n\}_{n \in \N}\). Without loss of generality, we may assume that there is a round in which \(A\) watches the root node (if there is no such round, we can modify \(A\) such that in a single extra round before submitting its answer, \(A\) watches the node it will claim is the source).

From this, we construct an algorithm \(B\) which has strictly more operations available to it. Concretely, \(B\) may watch an arbitrary number of nodes. In any round, \(B\) chooses to watch the same node \(A\) watches, as well as all nodes that \(A\) has observed being infected. \(B\) then terminates once it has watched the source (the algorithm can tell because the node gets infected but not via an edge). Since we assumed \(A\) watches the node it believes to be the source before terminating, if \(A\) wins the source detection game, so does \(B\). Also observe that \(B\) requires at most as many rounds as \(A\).

Observe that, since the players do not know the underlying graph, \(A\) can either select a node about which it has information or one about which it does not have information.
If \(A\) has information about a node, that means it has been picked before or is the source of an infection detected at another node.
Since the behavior of the source is always the same, the first of these two options yields no extra information, and thus we can assume \(A\) never makes such a choice.

By assumption, \(A\) requires less than \(c \sqrt{n}\) rounds (with probability at least \(p\) and for large enough \(n\)). Thus, in order for \(B\) to have picked the source at the end, \(A\) must have picked a node with distance at most \(c \sqrt{n }\) from the source at some time. In order for that to happen, \(A\)  must have picked a node about which it had no information and which has distance at most \(2 c \sqrt{n}\) from the source. Since \(\funnyconstant > 2\), The probability of that is smaller than
$$
1-\left( \frac{n-\funnyconstant c\sqrt{n}}{n}\right)^{c \sqrt{n}}.
$$
If we now pick  \(c< \sqrt{\ln\left(\frac{1}{(1-p/\funnyconstant)^2}\right)}\) and \(n\) large enough such that \(c\sqrt{n} \ge 1\), we have that this success probability is less than p. This follows since, by choice, \(-2c^2 > \ln(1-p/\funnyconstant)\) and thus \(e^{-2c^2} = \left(e^{2c} \right)^{c} \ge 1-p/\funnyconstant\). As \(\left(1-\frac{\funnyconstant c}{\sqrt{n}}\right)^{\sqrt{n}}\) approaches \(e^{\funnyconstant c}\) and \(\frac{n-\funnyconstant c\sqrt{n}}{n} = 1- \frac{\funnyconstant c}{\sqrt{n}}\), we have the desired result \(1-\left( \frac{n-\funnyconstant c\sqrt{n}}{n}\right)^{c \sqrt{n}} < p\) for large enough \(n\). This contradicts the assumption that \(A\) had success probability at least \(p\), proving that no such algorithm may exist.
\end{proof}

Also note that since all the graphs in the described family are trees, this result also holds in the same model but on trees. Similarly, observe that the algorithm \(B\) can simply be extended to support an algorithm \(A\) which may watch multiple nodes at the same time. Now, if \(A\) may watch \(k\) nodes, then \(B\) has a success probability of at most
$$
1-\left( \frac{n-2c\sqrt{n}}{n}\right)^{kc \sqrt{n}},
$$
where a similar result holds.

We now give a useful primitive to use in the construction of more complex algorithms.
We essentially prove that the source can, in expectation, not hide too many infections from the Discoverer.
Or looked at the other way round: if there is a linear number of infections, a Discoverer employing this strategy likely learns of at least one infection.
Interestingly, this result even holds in our settings that are most difficult for the Discoverer.

\begin{lemma}
Consider an instance of the source detection game (either with known or unknown graph and either consistent or obliviously dynamic source behavior).
Then there is a strategy for the Discoverer such that, after termination of the strategy, the Discoverer has observed a node in a round in which it gets infected.
This strategy succeeds after tolerating at most \(3n/2\) infections in expectation.
\label{lem:linear-discovery}
\end{lemma}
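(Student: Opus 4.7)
The plan is for the Discoverer to use the simple strategy of watching, in each round $i$, a single node $w_i \in V$ drawn independently and uniformly at random, and to terminate in the first round $T$ in which the returned infection time is not $\bot$ (i.e., in which the watched node is itself infected). Because the source node is necessarily infected in every round, at least one node is infected per round, so the per-round detection probability is at least $1/n$ and the strategy terminates almost surely in a round in which a watched node gets infected.

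For the expected price, I would fix any realization of the Adversary's random coins, so that the numbers $X_i \ge 1$ of nodes infected in round $i$ become deterministic. Because the Adversary is oblivious to the Discoverer's watches, the sequence $(w_i)_i$ is independent of $(X_i)_i$. Writing $p_i \coloneqq X_i / n$, this yields $\Prob{T \ge i} = \prod_{j<i}(1-p_j)$, and hence
\[
    \E{\sum_{i=1}^{T} X_i} \;=\; \sum_{i \ge 1} X_i \, \Prob{T \ge i} \;=\; n \sum_{i \ge 1} p_i \prod_{j<i}(1-p_j).
\]
Applying the standard telescoping identity $\sum_{i \ge 1} p_i \prod_{j<i}(1-p_j) = 1 - \prod_{j \ge 1}(1-p_j)$, and noting that $p_j \ge 1/n$ forces the infinite product to vanish, gives the conditional bound $\E{\sum_{i=1}^T X_i} \le n$, which is preserved when averaging over the Adversary's randomness. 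Since $n \le 3n/2$, this comfortably establishes the claim.

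The argument never uses whether the static graph is known to the Discoverer, nor whether source behavior is consistent or obliviously dynamic, so the same strategy and analysis cover all four game variants stated in the lemma. The main obstacle, and the step that I would write out carefully, is the independence claim in the obliviously dynamic setting: there the Adversary may adapt $X_i$ based on its own history, yet since the watches $w_1, \dots, w_{i-1}$ are never revealed to the Adversary, that history depends only on the Adversary's own coins, which preserves independence from $(w_i)_i$. Once this is pinned down, the telescoping identity and the $n$ bound are immediate.
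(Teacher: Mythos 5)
Your proof is correct, and it takes a genuinely different route from the paper's. Both proofs start identically: watch a uniformly random node each round, set $X_i$ the number of nodes infected in round $i$ and $p_i = X_i/n$, and write the expected price as $n\sum_{i\ge 1} p_i\prod_{j<i}(1-p_j)$, using obliviousness of the Adversary to justify the product form of $\Prob{T\ge i}$. The difference is in how this sum is bounded. The paper groups consecutive rounds into blocks whose $p_i$ sum to at least $1/2$ (so each block contributes probability $\ge 1/2$ of detection and at most $3n/2$ infections), then dominates the sum by a geometric series over blocks. You instead apply the exact telescoping identity $\sum_{i\ge 1} p_i\prod_{j<i}(1-p_j) = 1 - \prod_{j\ge 1}(1-p_j)$, note that the source is seed-infected in every round so $p_j \ge 1/n$, and conclude that the infinite product vanishes, giving expected price \emph{exactly} $n$. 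Your argument is shorter, avoids any fuss about the sign of $1 - \sum_{i\in\mathrm{block}} p_i$ or the exact index of the geometric series (which is slightly delicate in the paper's write-up), and yields the strictly tighter constant: $n$ rather than $3n/2$. The lemma as stated is of course implied by either bound, and the downstream applications (Markov to obtain a constant-probability $\BigO(n)$ guarantee per separator-stage) only need the order of growth, so the improvement in the constant does not change any theorem, but it does simplify the proof.
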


\begin{proof}
In each round, the strategy simply picks a node to watch uniformly at random.
It terminates when it has observed a that which gets infected while being watched.
Let \(T\) be the random variable that takes the number of the round in which this happens.
Let \(a_1, \dots\) be the nodes infected in the respective rounds.
Then the expected number of tolerated infections is
\begin{align*}
&\E{\sum_{i=1}^T a_i}
= \sum_{i = 0}^\infty a_i \cdot \Prob{i \le T}
= \sum_{i = 0}^\infty a_i \cdot \prod_{j=1}^{i-1} \left(\frac{n-a_j}{n}\right),
\intertext{where the equalities follow from the alternative definition of the expectation and since we require at least \(i\) rounds iff the first \(i-1\), rounds are unsuccessful. Now, substitute \(p_i = a_i / n\) for all \(i\). Then \(p_i\) is the probability of finishing in round \(i\) assuming it is reached. We may also pull \(n\) outside the sum to view the number of tolerated infections in each round as a multiple of \(n\). This leaves us with}
=~ &n \sum_{i = 0}^\infty p_i \cdot \prod_{j=1}^{i-1} (1-p_j).
\intertext{Now group the rounds of our strategy and thus the \(p_i\) by taking consecutive elements until their sum is at least \(1/2\) and then start the next group. Formally, let \(\{k_j\}_{j \in \N}\) be the (possibly finite) sequence such that \(k_j\) is the smallest integer such that \(\sum_{i = k_{j}}^{k_{j+1}-1} p_i \ge 1/2\). Then, by union bound, the probability of finishing within a given group, assuming we reach it, is at least \(1/2\). Also, we have that \(\sum_{i = k_j}^{k_{j+1}-1} \le 3/2\) since all but the last elements together are less than \(1/2\) and the last element is at most 1. This is a bound on the tolerated infections (as a multiple of \(n\)). Therefore, we can bound the above expectation from above by the geometric process that tolerates \(3n/2\) infections to perform the rounds in a given group and has probability \(1/2\) of finishing within that block. Thus, we can bound above by}
\le~ &n \sum_{j=1}^\infty \left( \sum_{i=k_j}^{i=k_{j+1}-1} p_i \right) \cdot \left(\prod_{\ell=1}^{j-1}\left( 1 - \sum_{i=k_\ell}^{i=k_{\ell+1}-1} p_i \right) \right)  \le n \sum_{j=1}^\infty 3/2 \cdot 1/2^j = 3n/2.
\end{align*}
\end{proof}

By Markov's inequality, the probability that this strategy takes at most \(3n\) infections is at most \(1/2\)—which famously is a constant.
\chapter{When Knowing the Static Graph Helps \label{sec:known}}
\label{known}
We would expect that knowing the static graph where the infections take place puts the Discoverer at a significant advantage.
Surprisingly, we prove that generally, this is not the case.
Motivated by this negative result, we investigate which knowledge about the static graph has value for the Discoverer.
We show that if the static graph has treewidth \(\tw{}\), the price of detection is in \(\BigO(\tw\cdot n \log n)\).
Finally, we argue that the central property that allows this is that the graph then recursively has small separators if it has small treewidth.

\begin{theorem}
Let \(A\) be a Discoverer algorithm for the source detection game with consistent source behavior on known static graphs. Then there is an algorithm \(A^u\) for the source detection game with consistent source behavior on unknown static graphs such that if \(A\) wins the game within \(O(f(n))\) infections with probability \(g(n)\), then so does \(A^u\) for its game.
\label{thm:consistent-known-to-unknown}
\end{theorem}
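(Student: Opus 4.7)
The plan is to construct $A^u$ by simulating $A$ internally on a canonical ``generic'' static graph, and to show that any adversary strategy in the unknown-graph game can be embedded into the known-graph game on that canonical graph without changing the Discoverer's observations. A natural choice is the complete graph $K_n$, which is rich enough to realize any infection chain that can arise from a subgraph.

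First, I would define $A^u$ as follows. On input $(V, \Tmax, \iphase)$, $A^u$ initializes a copy of $A$ with input $(V, \Tmax, \iphase, K_n)$ and plays the role of $A$'s environment. In each round $i$, when $A$ requests to watch a set $S_i$, $A^u$ watches the same set and relays the observation $I_i(S_i)$ it receives from its own adversary back to $A$. When $A$ finally submits a guess $s$, $A^u$ submits the same guess. The algorithm $A^u$ is randomized exactly when $A$ is, so it inherits $A$'s probability of success verbatim.

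The heart of the argument is an embedding lemma: for every adversary strategy $(G, \edgeLabelOp, s, t_0)$ in the unknown game, there exists a labeling $\edgeLabelOp'$ on $E(K_n)$ such that the infection chain produced by $(K_n, \edgeLabelOp', s, t_0)$ is identical to the chain produced by $(G, \edgeLabelOp, s, t_0)$. I would construct $\edgeLabelOp'$ by setting $\edgeLabelOp'(e) \coloneqq \edgeLabelOp(e)$ for $e \in E(G)$ and, for each extra edge $e = uv \in E(K_n) \setminus E(G)$, assigning a \emph{safe} label $\tau$, meaning a time at which firing $uv$ does not propagate a new infection in the chain of $(G, \edgeLabelOp)$. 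Since each node is infectious for only $\iphase$ consecutive time steps, the unsafe labels for a single extra edge are contained in the symmetric difference of the infectious intervals of its endpoints, a set of size at most $2\iphase$. So whenever the horizon $[\Tmax]$ is not tiny compared to $\iphase$, safe labels are abundant. Consequently, the observations $A^u$ receives in the unknown game are indistinguishable from those $A$ would see on $(K_n, \edgeLabelOp', s, t_0)$ in the known game, and both games share the same source $s$, so $A^u$'s guess is correct exactly when $A$'s is.

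Combining these ingredients, $A^u$'s worst-case cost over unknown-graph adversaries is bounded by $A$'s worst-case cost over those known-graph adversaries that choose $K_n$ together with an arbitrary labeling, which is in turn bounded by $A$'s overall worst case of $\bigO{f(n)}$ infections with probability $g(n)$. The main obstacle will be proving the embedding lemma in the corner case where $\Tmax$ is not much larger than $\iphase$ and the chain fills a large fraction of the time horizon: here one may either argue that for such parameters the chain is necessarily short enough that most times are still safe (since very few nodes can be infectious simultaneously), or replace the canonical graph $K_n$ by a sparser graph in which extra edges can always be given an inert label. In either case the conclusion is the same: since knowing the static graph does not allow the known-graph adversary to restrict the set of possible observations, it does not help the Discoverer either.
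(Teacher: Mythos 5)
Your core construction—simulate $A$ on $K_n$ by giving the non-edges of the true graph inert labels—is exactly the paper's idea, so the strategy is sound. The gap is precisely the corner case you flag as "the main obstacle," and neither of your two proposed repairs works. Your first fix, "the chain is necessarily short enough that most times are still safe," fails because the problem is not chain length but infectious duration: take $t_0 = 0$, $\iphase = \Tmax$, source $s$, and $v$ not adjacent to $s$ in $G$ and never infected. The chain has length one, yet $s$ is infectious at every step in $[1,\Tmax]$, so \emph{no} label in $[\Tmax]$ is safe for the added edge $sv$—assigning any of them would force an infection of $v$ and make $A$'s observations inconsistent with $K_n$. Your second fix, using a sparser canonical graph, cannot work either: $A^u$ must commit to the known graph from $(V,\Tmax,\iphase)$ alone, before seeing any observations, and the committed graph must contain every possible $E(G)$ as a subgraph, which forces $K_n$.

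The paper's repair is simpler and avoids the case analysis entirely: rather than squeezing inert labels into $[\Tmax]$, extend the lifetime and tell $A$ that the horizon is $\Tmax' = \Tmax + \iphase + 1$. Any node infected in the real process becomes infected no later than $\Tmax$ and is therefore infectious only up to $\Tmax + \iphase < \Tmax'$; hence assigning label $\Tmax'$ to every edge in $\binom{V}{2}\setminus E(G)$ makes those edges provably inert for \emph{all} adversary choices, with no restriction on $\Tmax$ versus $\iphase$. One further point worth being careful about: winning the game is not merely "the guess equals the source," but "no labeling/source consistent with the observations and disagreeing with the guess exists." One needs to argue, as the paper does, that any $(E,\edgeLabelOp,s',t_0)$ the Adversary could produce to beat $A^u$ extends to a $(\binom{V}{2},\edgeLabelOp',s',t_0)$ (via the same inert-label device) that would beat $A$ in the known-graph game; the theorem then follows by contraposition. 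Your write-up asserts correctness only via "both games share the same source," which does not address this consistency condition.
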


\begin{figure}
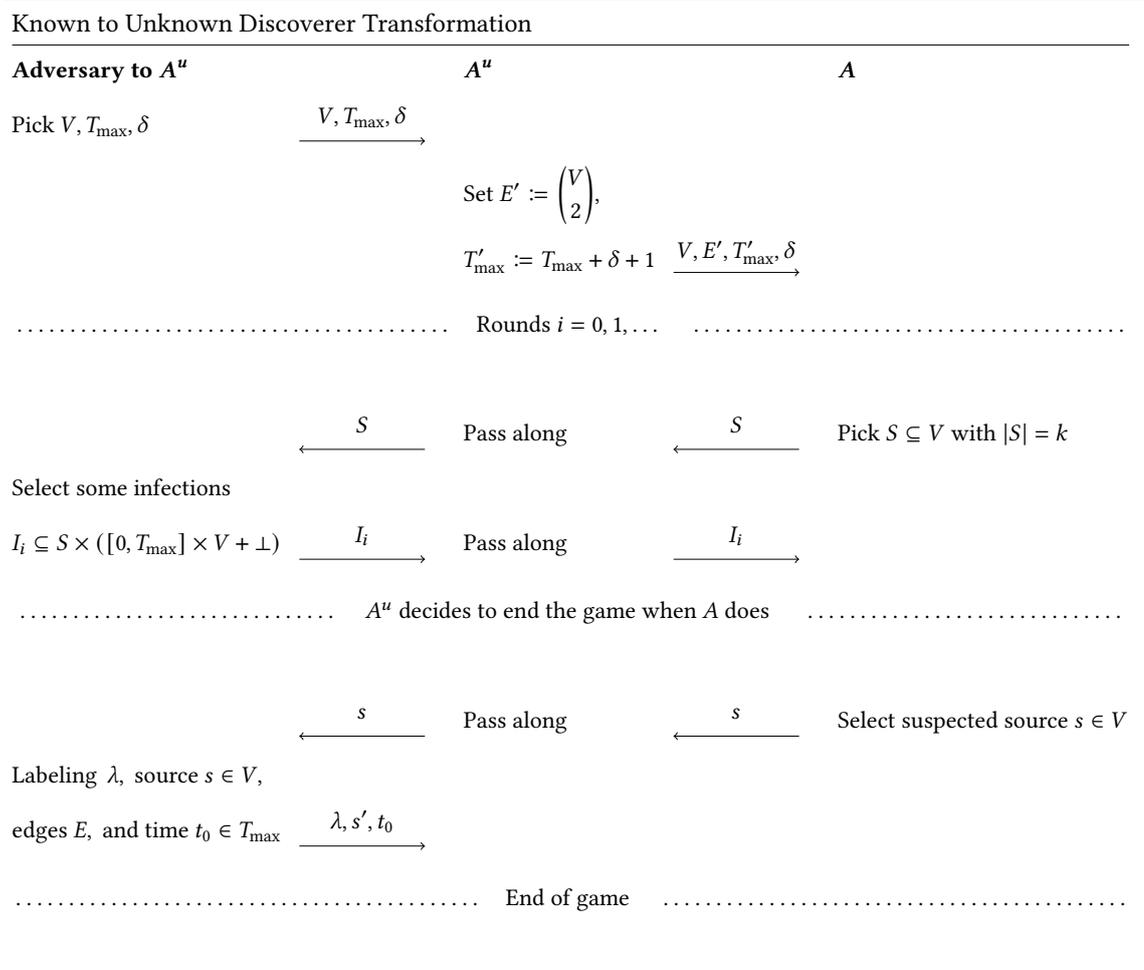

\rule{\textwidth}{0.4pt}
\resizebox{\textwidth}{!}{%
\begin{minipage}{1.2037\textwidth}
\procedureblock{Known to Unknown Discoverer Transformation}{%
\textbf{Adversary to \(A^u\)} \> \> \bm{A^u} \> \> \bm{A} \\
\text{Pick } V, \Tmax, \iphase \> \sendmessageright*{V, \Tmax, \iphase} \>  \> \> \\
\> \> \text{Set } E' \coloneqq \binom{V}{2}, \\
\> \> \Tmax' \coloneqq \Tmax + \iphase + 1 \> \sendmessage{->}{top={$V, E', \Tmax', \iphase$}} \> \> \pclb
\pcintertext[dotted]{\quad Rounds \(i=0,1,\dots\) \quad}  \\
\> \sendmessageleft*{S} \> \text{Pass along} \> \sendmessageleft*{S} \> \text{Pick } S \subseteq V \text{ with  } |S| = k \\
\text{Select some infections} \\
I_i \subseteq S \times ([0,\Tmax] \times V + \bot)  \> \sendmessageright*{I_i} \> \text{Pass along } \> \sendmessageright*{I_i} \> \pclb
\pcintertext[dotted]{\quad \(A^u\) decides to end the game when \(A\) does \quad}  \\
\> \sendmessageleft*{s} \> \text{Pass along} \> \sendmessageleft*{s} \> \text{Select suspected source } s \in V \\
\text{Labeling } \edgeLabelOp, \text{ source } s \in V, \\
\text{edges } E, \text{ and time } t_0 \in \Tmax \> \sendmessageright*{\edgeLabelOp, s', t_0} \> \> \> \pclb
\pcintertext[dotted]{\quad End of game \quad}%
}
\end{minipage}
}
\vspace{-0.25\baselineskip} \rule{\textwidth}{0.4pt}
\caption{Deriving an algorithm \(A^u\) for unknown static graphs from an algorithm \(A\) for known static graphs. \(A^u\) acts as the adversary to \(A\) and uses its behavior to determine its behavior towards its own Adversary. \label{alg:known-to-unknown}}
\end{figure}

\begin{proof}
We construct \(A^u\) from \(A\) by simulating the Adversary for \(A\) and using \(A\)'s responses to talk to \(A^u\)'s Adversary.
Let \(V, \Tmax, \iphase\) be the parameters that \(A^u\) receives from the Adversary in the first step of the game.
In the beginning, we pick \(E' \coloneqq \binom{V}{2}\), that is, we report the complete graph to \(A\).
Similarly, we pick \(\Tmax' \coloneqq \Tmax = \iphase + 1\).
We then report these modified parameters to \(A\).
In the rounds phase of the game, we simply relay watching queries and responses between the Adversary and \(A\).
Finally, we simply take \(A\)'s guess of the source and use it as \(A^u\)'s guess.
See \Cref{alg:known-to-unknown} for a diagram of this construction.

Observe that there is a one-to-one correspondence between the infections in the two games.
Thus, if \(A\) finishes within \(O(f(|V|))\) infections for some function \(f \colon \N \to \N^+\), so must \(A^u\).

Define
\begin{align*}
\edgeLabelOp' \coloneqq u,v \mapsto \begin{cases}
\edgeLabel{u,v}, &\text{if } uv \in E, \\
\Tmax', &\text{otherwise.}
\end{cases}
\end{align*}
Consider \Cref{fig:consistent-known-to-unknown} for an illustration of this construction.
Observe that if \(E, \edgeLabelOp, s', t_0\) is consistent with all \(I_i\), then so is \(E', \edgeLabelOp, s', t_0\).
Thus, if the Adversary wins against \(A^u\), the simulated Adversary wins against \(A\).
By contraposition, we may assume that if \(A\) wins against the simulated Adversary, so does \(A^u\) win against its Adversary, yielding the desired translation of success probabilities from \(A\) to \(A^u\).
\end{proof}

\begin{figure}[t]
\centering
\includegraphics[width=5cm]{./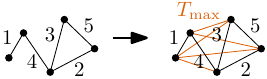}
\caption{\label{fig:consistent-known-to-unknown}The construction from the proof of \Cref{thm:consistent-known-to-unknown}. Discoverer for known graphs \(A\) is used on the completed graph on the right to win the game in the setting of unknown graphs.}
\end{figure}

After this negative result, we explore when knowing the static graph does help.
In particular, this also gives us a structural insight into which kinds of static graphs are easy to discover the time labels on.

\begin{theorem}
There is an algorithm that wins the source detection game with consistent source behavior on known graphs while only tolerating \(\BigO(\tw \cdot n \log n)\) infections with constant probability, where \(\tw\) is the treewidth of the graph.
\label{thm:aX1-tw}
\end{theorem}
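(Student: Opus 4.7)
The plan is to combine \Cref{lem:linear-discovery} with a recursive balanced-separator search, exploiting the fact that any graph of treewidth $\tw$ admits $W$-balanced separators of size $O(\tw)$ for every $W \subseteq V$. A single randomized probe produces an anchor node inside the (fixed) infection tree $T$; afterwards, the search is purely deterministic and shrinks the candidate set by a constant factor per level.

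First, I run the strategy of \Cref{lem:linear-discovery}, aborting after $3n$ tolerated infections. By Markov's inequality this yields, with probability at least $1/2$, a node $v \in T$ together with the neighbor that infected it, at a cost of $\BigO(n)$ infections. (If the probe happens to observe a seed infection directly, we return the source immediately.) I then maintain a candidate set $C \subseteq V$ known to contain the source, initially $C = V$. While $|C| > 1$, I use a tree decomposition of $G$ to find a $C$-balanced separator $S \subseteq V$ of size $O(\tw)$ --- i.e., every component of $G - S$ contains at most $|C|/2$ vertices of $C$ --- and watch each $s \in S$ in its own round. Because the source is consistent, each such round deterministically reveals whether $s \in T$ and, if so, which neighbor infected $s$.

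The update rule is as follows. If some watched $s$ is observed as a seed infection, return $s$. Otherwise, let $P$ be the set of $T$-parents of $S \cap T$ that lie outside $S$. The source lies in a single component $\tilde C$ of $G - S$, and every infection chain originating at the source must traverse $\tilde C$ before entering $S$, so every element of $P$ lies in $\tilde C$. If $P \neq \emptyset$, $\tilde C$ is thereby identified and we set $C := C \cap \tilde C$. If $P = \emptyset$, no infection crosses $S$, so source and the anchor $v$ must lie in the same component of $G - S$, and $v$ selects that component.

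Since $|C|$ at least halves per iteration, $\BigO(\log n)$ iterations suffice, each using $\BigO(\tw)$ rounds at a cost of at most the per-round infection count $\alpha \le n$. Together with the $\BigO(n)$ infections from the anchor step, this gives a total of $\BigO(\tw \cdot n \log n)$ tolerated infections with constant probability. The main obstacle is the case $P = \emptyset$: without any signal at $S$, a purely separator-based search would have to branch into every component of $G - S$ and would not meet the round budget. The randomized anchor from \Cref{lem:linear-discovery} is exactly what resolves this ambiguity, which is why the $\BigO(n)$ overhead from that step is essential to the construction.
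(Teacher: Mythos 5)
Your overall architecture mirrors the paper's: use $O(\tw)$-sized balanced separators to halve the candidate set each level, watch each separator node for one round, and use a randomized linear-cost probe (Lemma~\ref{lem:linear-discovery}) to break ties when the separator gives no signal. Doing the randomized probe once up front as a persistent anchor, rather than re-running it whenever the separator is missed (as the paper does inside each iteration), is a legitimate variant and costs you only an additive $O(n)$. Your analysis of the case $P = \emptyset$ (the entire infection tree avoids $S$, so source and anchor share a component) is also sound, using that the infection timetable is deterministic under consistent source behavior.

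However, the update rule in the case $P \neq \emptyset$ has a genuine gap. You claim that every $T$-parent of a node in $S \cap T$ that lies outside $S$ must lie in the source's component $\tilde C$. That is false: after the infection first enters $S$ from $\tilde C$, it can continue into some other component $\hat C$ of $G - S$ and later re-enter $S$ from $\hat C$, producing a separator node whose parent lies in $\hat C \neq \tilde C$. For a concrete sketch, take a $5$-cycle $v_1 v_2 v_3 v_4 v_5$ with $S = \{v_2, v_5\}$ and source $v_1$; with suitable labels, $v_2$ is infected by $v_1 \in \tilde C$, but $v_5$ is infected by $v_4$ in the other component, so $P$ straddles both components and your rule $C := C \cap \tilde C$ cannot decide which one to take. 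The fix --- which the paper adopts --- is to single out the \emph{earliest}-infected node $s^* \in S \cap T$ and use only its parent $p^*$: since no separator node is infected before $s^*$, the path from the source to $p^*$ in the infection tree never meets $S$, so $p^*$ lies in the source's component. With this one-word change (from ``$P$'' to ``the parent of the first-infected separator node''), the rest of your argument goes through.
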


\begin{figure}[t]
\centering
\includegraphics[width=4cm]{./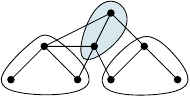}
\caption{\label{fig:aX1-tw}A graph with a balanced separator of size 2 (blue). An infection chain that includes nodes from both the left and right partitions must pass through the separator. Thus, in these cases, watching the separator reveals which of the partitions the source is in.}
\end{figure}

This result is achieved by \Cref{alg:aX1-tw}.
It crucially depends on the existence of small separators.
Specifically, in a static graph \(G\) with treewidth \(\tw{}\), there is a \(1/2\)-balanced separator \(S\) (i.e.,~every connected component in \(G-S\) has size at most \(1/2 \nodesetsize{G}\)) of size at most \(\tw + 1\) \cite{cygan2015parameterized}.
See \Cref{fig:aX1-tw} for an illustration.
This result also holds if the nodes are weighted, and we ask for a separator to ensure each component has at most half of the weight.
Therefore, if we assign either \(0\) or \(1\) as weights, we may pick which nodes we want to evenly distribute on the sides of the separator.

.
\begin{algorithm}[tbhp]
    \DontPrintSemicolon
    \SetKwFunction{FMain}{TreewidthSourceDetection}
    \SetKwProg{Fn}{fun}{:}{}
    \Fn{\FMain}{
       Maintain a set of nodes that could still be the source.

       Until the watched node is the source, repeat:

       1. Compute a balanced \( (\tw+1)\)-size separator of the candidate nodes.

       2. Watch each of the nodes in the separator for one round.

       3. If none of them get infected, watch one of the candidate nodes uniformly at random until one is infected. \label{alg:aX1-tw:random-search}

       4. Update the candidate nodes to only include nodes on the component induced by removing the separator that first infected a node in the separator.

    }

\caption{The source detection algorithm for static graphs with bounded treewidth and consistent source behavior.}
\label{alg:aX1-tw}
\end{algorithm}

\begin{proof}
The claimed properties of the algorithm follow, since the following three properties hold after each iteration of the main loop:
\begin{enumerate}
\item We correctly track the candidate nodes (i.e.,~the source must always be one of the remaining candidates).
\item After each iteration of the main loop of the algorithm, the number of nodes halves.
\item With constant probability, we only tolerate a linear number of infections until detection.
\end{enumerate}

The first property holds inductively.
At the beginning, clearly, the source is one of the candidates since all nodes are candidates.
Then, after each loop, exactly one of three things must have happened: (a) the source was one of the separator nodes; (b) the source is not part of the separator but infected at least one node in the separator; or (c) the source infected no node in the separator, and we found an infected node via the search in step 3.
In case (a), we will have found the source, and the Discoverer wins the game.
In case (b), the source must be in the component of the separated graph from which the infection first entered the separator.
In case (c), the source must be in the same component as the infected node, since if it were not, the infection chain must have included at least one node of the separator.
Thus, the source is always one of the candidate nodes.

The second property simply follows since the separator balances the candidate nodes.
That is, every component induced by removing the separator has at most half of the candidate nodes.
As we have seen, after each round, we select one of these components to restrict the candidate nodes to.

The third property follows from applying \Cref{lem:linear-discovery} to each iteration of the loop, then summing the costs, applying the linearity of expectation and finally Markov's inequality.

From these three properties, we see that the main loop runs at most \(\log n\) times and that each of its iterations incurs \(O(\tw \cdot n)\) infections wcp.
\end{proof}

Intuitively, this dependence on the existence of separators makes sense.
Looking for the source means that in each round, we have to decide where to look next based on the information gleamed so far.
Together, \Cref{thm:aY1} and \Cref{thm:aY1-lb} show that in the case of unknown graphs, we cannot do much better than testing a few nodes and then retracing their infection path node-by-node.
In some cases, we can do better by performing a binary search for the source, but in order to decide which half of our candidate nodes infections stem from, we have to separate them.
Such separators may not be too large since we need to spend rounds on watching them, thus this technique is only useful if small separators exist recursively.
This explains the connection between small treewidth and a smaller price of detection.
\chapter{The Special Case of Trees \label{sec:trees}}
\label{trees}
First, observe that our algorithm exploiting the treewidth of a known graph (see \Cref{thm:aX1-tw}) directly translates to trees.

\begin{corollary}
There is an algorithm that wins the source detection game with consistent source behavior on known trees while only tolerating \(\BigO(n \log n)\) infections with constant probability.
\label{cor:aX1-tree}
\end{corollary}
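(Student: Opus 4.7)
The plan is to derive this corollary as an immediate specialization of \Cref{thm:aX1-tw}. The key observation is that every tree has treewidth at most one: a tree admits a tree decomposition whose bags are exactly the endpoints of each edge, giving bag size two and hence $\tw \le 1$. Substituting $\tw = 1$ into the bound $\BigO(\tw \cdot n \log n)$ from \Cref{thm:aX1-tw} yields the claimed $\BigO(n \log n)$ price of detection with constant probability.

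To make this rigorous, I would first invoke the separator result cited inside the proof of \Cref{thm:aX1-tw}: on a graph of treewidth $\tw$, there always exists a $1/2$-balanced (possibly weighted) separator of size at most $\tw + 1$. Applied to trees, this guarantees that step 1 of \Cref{alg:aX1-tw} can always compute a balanced separator of size at most two among the current candidate nodes. All three invariants established in the proof of \Cref{thm:aX1-tw}---(i)~correct tracking of candidates, (ii)~halving of the candidate set per iteration, and (iii)~an expected $\BigO(n)$ infections per iteration via \Cref{lem:linear-discovery}---then transfer verbatim to the tree setting.

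Summing over the at most $\lceil \log_2 n \rceil$ iterations of the main loop, using linearity of expectation, and applying Markov's inequality exactly as in the proof of \Cref{thm:aX1-tw} produces a price of detection of $\BigO(n \log n)$ with constant probability. There is no real obstacle: the entire content of the corollary is the bound $\tw \le 1$ for trees. As a small aside, one could alternatively appeal to the existence of a centroid node in every tree---a single vertex whose removal leaves components of size at most $n/2$---which would let \Cref{alg:aX1-tw} use separators of size one instead of two, improving the hidden constant but not the asymptotic bound.
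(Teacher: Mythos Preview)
Your proposal is correct and matches the paper's proof exactly: the paper simply notes that trees have treewidth~1 and applies \Cref{thm:aX1-tw}. Your additional elaboration about the separator size and the centroid alternative is correct but goes beyond what the paper supplies.
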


\begin{proof}
This follows from the \Cref{thm:aX1-tw} since trees have treewidth 1.
\end{proof}

In the case of trees, we can also show a strong matching lower bound.

\begin{theorem}
There is an infinite family of trees \(\{P_i\}_{i \in \N}\) such that, for any algorithm solving the source detection problem with consistent source behavior and known static graph on trees graphs with success probability \(p >0\),  the price of detection under that algorithm approaches \(n \log n\) as \(n\) approaches \(\infty\).
\label{thm:aX1-lb-tree}
\end{theorem}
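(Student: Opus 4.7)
The plan is to take $\{P_i\}_{i \in \N}$ to be the family of complete binary trees of depth $i$, so $P_i$ has $n = \Theta(2^i)$ nodes arranged in a balanced binary tree with $L = \Theta(\log n)$ levels, and to apply Yao's minimax principle. Fix the distribution $\pi$ over Adversary inputs: draw the source $s$ uniformly from the $2^i$ leaves of $P_i$, set $t_0 = 0$, and use the labeling $\lambda_s(uv) = \max(d(s, u), d(s, v))$. Every input in the support of $\pi$ causes the infection from $s$ to reach every node $v$ at time exactly $d(s, v)$, so the full tree is infected each round and exactly $n - 1$ infections occur per round. Thus it suffices to show that any deterministic Discoverer achieving success probability $p$ against $\pi$ must play $\Omega(\log n)$ rounds in expectation.

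The heart of the proof is an information-theoretic argument. When the algorithm watches a node $v$ at depth $k$, the observation is the pair $(t_v, D_v)$ consisting of the infection time $t_v = d(s, v)$ and the direction of the neighbor that infected $v$. Grouping the candidate leaves by the depth $j$ of their least common ancestor with $v$, the observation distribution under $\pi$ is supported on at most $k + 2$ values with probabilities $\tfrac{1}{2^{k+1}}, \tfrac{1}{2^{k+1}}, \tfrac{1}{2}, \tfrac{1}{4}, \dots, \tfrac{1}{2^{k}}$. Using $\sum_{j \ge 1} j \cdot 2^{-j} = 2$, a direct computation bounds the Shannon entropy by
\[
H(O_v) \;=\; \frac{k+1}{2^k} + \sum_{j=1}^{k} \frac{j}{2^j} \;\le\; 3,
\]
uniformly in $k$. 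Because a complete binary tree is self-similar and each observation restricts the still-consistent sources to the leaves of a smaller complete binary subtree, the same bound $H(O_i \mid O_{<i}) = \BigO(1)$ applies after conditioning on the history. Fano's inequality then implies that success probability at least $p$ forces $I(\text{observations}; s) \ge p \log_2 n - \BigO(1)$, so any algorithm achieving constant success probability plays $\Omega(\log n)$ rounds in expectation. Multiplying by the $n - 1$ infections per round yields the stated lower bound, and Yao's principle transfers it to every randomized algorithm.

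The main obstacle is to formalize the self-similarity claim rigorously when conditioning on arbitrary sequences of observations. When the algorithm watches a node outside the current candidate subtree, all still-possible sources are equidistant from that node and share the same direction, so the observation is degenerate under the posterior and contributes zero information, which only helps the bound. The subtle case is when the watched node lies inside the candidate subtree; here I have to verify that whatever the Adversary response, the posterior is again uniform on the leaves of a smaller complete binary subtree, so the entropy calculation applies inductively. This bookkeeping is made routine by observing that any two complete binary subtrees of a complete binary tree are either disjoint or nested, so the intersection of the subtrees selected in successive rounds is again a complete binary subtree containing the true source; this is what prevents cleverly combined observations across rounds from leaking more than the $\BigO(1)$ per-round budget.
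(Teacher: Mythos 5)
Your instance family and your information-theoretic argument are both sound as far as they go, but they prove only $\Omega_p(n \log n)$, which is strictly weaker than what the theorem asserts. The paper claims the price of detection \emph{approaches} $n\log n$, with a leading constant of $1$ independent of the success probability $p$, and the paper's own discussion immediately after the theorem emphasizes exactly this point: ``Surprisingly, the concrete success probability does not play a role for the result,'' contrasting it with \Cref{thm:aY1-lb}. Your Fano step yields $I(O; s) \ge p\log n - \BigO(1)$ and hence at most $r \ge (p\log n - \BigO(1))/c$ rounds, so the constant in front of $n\log n$ degrades proportionally to $p$. That is precisely the trade-off the theorem is asserting does \emph{not} exist.

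A second, compounding source of slack is your choice of instance. On the complete binary tree, watching a node at depth $k$ has observation entropy $\tfrac{k+1}{2^k} + \sum_{j=1}^{k}\tfrac{j}{2^j}$, which tends to $2$ (not just ``$\le 3$''), so a single round can extract close to $2$ bits; this gives another factor-of-$2$ loss relative to the initial entropy $\approx \log n$. The paper instead uses a path with labels chosen so that the infection time of the watched node is a function of the node alone, leaking nothing; the only useful feedback is the one-bit direction of infection. This is what lets the paper's counting argument work cleanly: a deterministic Discoverer is an execution tree with binary branching at each round, so after $r$ rounds it can have visited at most $\BigO(2^{r})$ distinct nodes, and since the uniformly random source must be among them with probability $\ge p$, one gets $2^{r}/n \ge p$, hence $r \ge \log n + \log p$, which divided by $\log n$ tends to $1$. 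To repair your proof you would need both to switch to an instance where each round is genuinely one bit and to replace Fano by the direct $2^r/n \ge p$ counting argument (or an equivalent tight bound), at which point you would essentially have reconstructed the paper's proof.

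One minor additional remark: the paper draws the source uniformly over all $n$ path vertices, while you draw it over $\Theta(n)$ leaves; this by itself is harmless, but it is worth noting that your self-similarity bookkeeping, while correct, is doing work the paper's path construction makes unnecessary.
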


Note that this implies that there is no algorithm winning this game setup in \(o(n \log n)\) with constant probability.
Again, the proof can be adjusted to instead hold for algorithms that always win the game and have at least a constant probability to finish with less than \(o(n \log n)\) infections.

Surprisingly, the concrete success probability does not play a role for the result (as long as it is positive and constant). Meaning that, for large enough graphs, we cannot trade a lower success probability for a multiplicative improvement over the \(n \log n\) number of tolerated infections.
Compare that to the weaker lower bound from  \Cref{thm:aY1-lb}, where this door is left open (though there is no known algorithm to exploit it).

\begin{proof}
First of all, let \(n \in \N\). Then set \(P_n\) as the path with \(n\) nodes. Set \(\iphase = n\). Now let \(A\) be any algorithm that wins the source detection game  restriction with consistent source behavior on known trees in \(o(n \log n)\) rounds with a constant success probability \(p\). The adversary picks the source node \(s\) uniformly at random. Then, let the adversary set the edge labels for all \(i < s\) as \(\edgeLabel{i,i+1} = n - i\) and for all \(i \ge s\) as \(\edgeLabel{i,i+1} = i\).

Note that we require the algorithm to solve the problem on any tree and with any source with constant probability. We now model all possible randomized algorithms as a distribution over what we call execution trees. These execution trees encode the reactions the algorithm may have to all possible responses by the adversary.  An instance of the game is then equivalent to picking an execution tree at the beginning and evaluating the responses of the adversary against the execution tree, yielding a path through the execution tree.

The behavior of the Discoverer can be described as the series of nodes it picks to watch. We may assume without loss of generality that the last node the Discoverer algorithm picks is the one it believes is the root (this is similar to the proof of \Cref{thm:aY1-lb}). Now, by the definition of the source detection game, in each round, the Discoverer submits a node to watch and receives information about when and from which direction it was infected (if at all). By definition of our instances, every node is infected in every round. Thus, the only information the Discoverer receives is from which direction a node was infected and at which time step.

Observe that the edge labels only encode whether the edge is on the left or right side of the source, which the Discoverer also learns because it learns the direction along which the infection travels over the edge.
Thus, we may disregard the effect that this knowledge has on the behavior of the Discoverer.

We construct this (binary) tree as follows. As the Discoverer algorithm is randomized, which node it picks in the next rounds, is dependent on previous information and random choices. We can thus assume the Discoverer makes all its random choices at the start of the execution and then builds a tree of possible outcomes. Each tree node is labeled with the node the Discoverer will watch in a certain round. Then, the left and right children are the choices, the Discoverer algorithm will make if the watched node is infected by its left and right neighbor respectively.

Thus, after picking an execution tree, the behavior of the Discoverer algorithm is deterministic. To complete the result, we show that for any possible execution tree with height in \(o(\log n)\), there are a sufficient amount of source nodes for which the algorithm would not win.

Clearly, an execution of the game is now associated with the execution tree the Discoverer algorithm picks at the start and the path through that tree. Since we assumed the algorithm always watches the source node in at least one round, an execution is only winning if the path taken includes the source nodes.

Assume the algorithm terminates after \(r\) rounds and let \(c \in \R^+\) such that \(r = c \log n\).
We may assume that the execution path has length at most \(r\).
Since the Discoverer can now only win if the source is in the \(r\) first layers of the execution tree (in which there are \(2^r\) nodes) and the source was picked uniformly at random, we have that \(2^r / n \ge p\).
Thus, \(2^{c \log n} / n \ge p\), which rearranges to \(n^c/n \ge p\).
Applying the logarithm and rearranging leaves us with
\begin{align*}
c \ge 1 + \log p / \log n.
\end{align*}

Observe that the right-hand side approaches \(1\) as \(n\) approaches \(\infty\).
Thus, \(r\) approaches \(\log n\) and since on our family of graphs, all nodes are infected in each round, the number of tolerated infections approaches \(n \log n\).
\end{proof}
\chapter{Dynamic Infection Behavior \label{sec:dyn}}
\label{dyn}
In this variation of the game, the Adversary may vary the time at which the source is seed-infected, but not the source itself.
The Discoverer still learns when the watched node becomes infected each round.
Note that the Adversary does not learn which nodes the Discoverer watches, and so its behavior may not depend on the choices of the Discoverer.
We will see that for this variation, the simple deterministic brute-force algorithm is asymptotically optimal.

\begin{theorem}
There is a Discoverer algorithm that wins any instance of the source detection game on unknown graphs and with obliviously dynamic source behavior that tolerates at most \(n^2\) infections.
\label{thm:dY1}
\end{theorem}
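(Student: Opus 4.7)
The plan is to exhibit a simple deterministic brute-force strategy: the Discoverer fixes an arbitrary enumeration $v_1, \ldots, v_n$ of the vertices and watches $v_i$ during the $i$-th round. As soon as the feedback for some round indicates that the watched node was seed-infected (rather than infected by one of its neighbors), the Discoverer outputs that node as the source and stops.

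The key point is that under the obliviously dynamic model the Adversary may vary the seed time $t_0^i$ but not the identity of the source itself, so the source $s'$ is seed-infected in every round. Hence, whenever the Discoverer watches $s'$, the infection log restricted to $s'$ contains a self-loop tuple $(s',s',t_0^i)$, which, as recorded in the preliminaries, is distinguishable from any edge-mediated infection $(u,s',t)$ with $u \neq s'$. Consequently the Discoverer unambiguously identifies the source the first time it watches it, and the algorithm terminates after at most $n$ rounds.

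For the infection bound, note that within a single round the SIR dynamics forbid reinfections, so at most $n$ nodes can be infected per round. Summing over the at most $n$ rounds yields a price of detection of at most $n^2$, as claimed. I expect no substantial obstacle: the result essentially serves as the trivial matching upper bound to the $\Omega(n^2)$ lower bound from \Cref{thm:dX1-lb}, showing that once source behavior is obliviously dynamic and the static graph is unknown, the naive enumeration is already asymptotically optimal.
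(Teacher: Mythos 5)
Your proposal is correct and matches the paper's proof: both use the deterministic brute-force strategy of watching each node once, observing that the source is revealed when watched (since its infection appears as a seed event rather than an edge-mediated one), and bounding the cost by $n$ infections per round over at most $n$ rounds. You merely spell out the seed-infection distinguishability argument that the paper leaves implicit.
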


Notice that this result is not randomized; this algorithm is always successful.
The game is also the most difficult (for the Discoverer) we discuss.
Thus, this result directly translates to all the other game variations we looked at.

\begin{proof}
The algorithm simply spends one round watching each node.
Since watching the source reveals it, we definitely find the source.
In each round, at most \(n\) nodes become infected, and we require \(n\) rounds.
Thus, we tolerate at most \(n^2\) infections.
\end{proof}

On this difficult model, the result is indeed tight, as we see in the following theorem.
In fact, we cannot even improve upon the quadratic cost of detection if we only ask for an algorithm with a constant probability of success.

\begin{theorem}
For any algorithm solving the source detection problem with obliviously dynamic behavior, known static graph, and watching a single node with success probability \(p >0\), there is an infinite family of graphs such that the price of detection under that algorithm is at least \(\min(1/3, p/2) n^2\).
\label{thm:dX1-lb}
\end{theorem}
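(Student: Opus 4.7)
The plan is to combine two independent lower-bound ingredients into a single construction. First I would build a family of graphs \(\{G_n\}\) together with an edge labeling and infectious period \(\iphase\) such that every round of the game—irrespective of the source's identity and of the start time \(t_0^{(r)}\)—causes the entire vertex set to become infected. A natural candidate is a highly symmetric graph such as a suitable circulant graph on \(n\) nodes whose labeling is invariant under cyclic rotation of the vertices, possibly together with \(\iphase\) chosen large enough that a single infection chain sweeps through the whole graph. This ingredient guarantees exactly \(n\) infections per round, so the price of detection equals the number of rounds \(r\) times \(n\); it remains only to lower bound \(r\).

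Second, I would appeal to Yao's minimax principle to restrict attention to deterministic Discoverers facing a randomized Adversary. The Adversary draws the source \(s\) uniformly from a symmetry class \(C \subseteq V(G_n)\) of size \(\Omega(n)\), and, in each round independently, draws the start time \(t_0^{(r)}\) uniformly from a sufficiently large range. The key technical ingredient is a \emph{symmetry lemma} asserting that, as long as the Discoverer has not yet watched \(s\) in any round, its transcript of per-round observations is identically distributed across every candidate in \(C\). The obliviously dynamic regime is crucial: because \(t_0^{(r)}\) is freshly randomized each round, absolute infection times at a watched \(v \ne s\) reveal nothing about \(s\); combined with the vertex symmetry of the labeled graph, the infecting neighbor of the watched node is also symmetry-distributed under the candidate sources.

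Granting the symmetry lemma, a standard analysis shows that a deterministic Discoverer that has used \(r\) rounds wins with probability at most \((\text{prob.\ it watches }s)+(\text{prob.\ it guesses }s\text{ from remaining candidates})\le r/|C| + 1/(|C|-r)\), which is \(O(r/n)\). Setting this to \(p\) yields \(r \ge pn/2\) and therefore at least \(pn^2/2\) infections. The floor of \(n/3\) in the \(\min\) would follow from a separate deterministic counting argument: even ignoring probabilities, one can force \(r \ge n/3\) rounds before the Adversary is committed to a unique source, for instance by the fact that the Discoverer's transcript after \(r\) rounds is consistent with at least \(|C|-r\) candidate sources, so ruling all but one out demands \(r \ge |C|-1 = \Omega(n)\); choosing the constant in \(|C|\) carefully produces exactly the threshold \(n/3\) at which the bound \(pn^2/2\) saturates.

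The main obstacle will be the symmetry lemma, because the feedback in each round includes \emph{which} neighbor infected the watched node. A careless choice such as \(K_n\) with direct source-to-neighbor edges would leak the source identity in any round in which the Discoverer watches a neighbor of \(s\). I would therefore pick \(G_n\) so that every infection chain from \(s\) to a typical watched vertex passes through several intermediate vertices, and let the Adversary commit to a uniformly random graph automorphism of \(G_n\) as its source permutation. The delicate step is then showing, via a coupling between the infection trees induced by different candidate sources after aligning their randomized start times, that the observed predecessor at any non-source watched vertex has the same distribution across all candidates. Once this coupling is in place, the two lower bounds on \(r\) combine with the \(n\)-infections-per-round bound to give precisely \(\min(1/3,\,p/2)\,n^2\).
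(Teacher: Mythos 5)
Your high\-level plan shares the paper's outline: build instances where every round infects all \(n\) nodes, so the price of detection reduces to a round count, and then argue that with probability \(p\) of success the Discoverer needs \(\Omega(pn)\) rounds. However, the heart of your proposal---the ``symmetry lemma'' that the transcript is identically distributed across all candidate sources until the source itself is watched---is exactly the hard part, and you leave it as a sketch. Your concrete suggestion for realizing it (a circulant graph with an automorphism\-invariant labeling, plus a random automorphism committed up front) does not obviously work: a single\-label temporal graph cannot have a nontrivial vertex\-transitive labeling (orbit\-constant labels degenerate the infection dynamics), and if instead the Adversary applies a random rotation to a fixed labeling, then the \emph{time} at which a watched node is observed to become infected leaks information about the rotation even when the start time is re\-randomized, because the observed time is a function of both the rotation offset and the start offset but the two are not jointly unidentifiable over several rounds. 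Your coupling argument would have to untangle this, and you give no indication of how.

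Your instinct to avoid \(K_n\) is also off: the paper's construction \emph{is} \(K_n\). What makes it work is not sparsity or long chains through intermediaries, but a labeling built from \((n-1)/2\) edge\-disjoint Hamiltonian paths all rooted at the source \(s\), with labels \(\lambda(e) = p(e)\,n + q(e)\) where \(p(e)\) is the path index and \(q(e)\) the position on that path, and \(\iphase = 1\). Each round the Adversary picks a start time that selects a single Hamiltonian path; the infection then sweeps that path and nothing else. Because the paths are edge\-disjoint and the node order on each is uniformly random, an observation on one path constrains positions on other paths only by a single non\-adjacency, so the Discoverer has probability at most \(2/n\) per round of watching one of the two nodes adjacent to \(s\) on the current path. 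A union bound then gives \(r \ge pn/2\) directly---no minimax, no coupling. Finally, the \(1/3\) in the bound is not a separate unconditional ``you must spend \(n/3\) rounds'' counting argument as you propose; it is a case split: if the Discoverer already used at least \((n-1)/2 - 1\) rounds, it has trivially tolerated more than \(n^2/3\) infections, and otherwise the \(2/n\)\-per\-round probability argument applies to those rounds. Your counting argument would actually prove something stronger (\(\Omega(n^2)\) infections regardless of \(p\)), which should make you suspicious; in fact it fails because the Discoverer only needs to succeed with probability \(p\), and when \(p\) is small it may guess from a large remaining candidate pool rather than exhaust it, so \(r \ge |C|-1\) is not forced.
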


\begin{figure}[t]
\centering
\includegraphics[width=4cm]{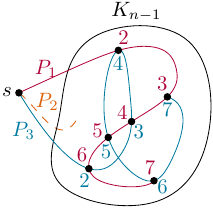}
\caption{\label{fig:dX1-lb}The construction from the proof of \Cref{thm:dX1-lb}. \(s\) is the source of the graph and \(P_1, P_2\) and \(P_3\) are Hamiltonian paths. For simplicity, \(P_2\) is omitted. Nodes are labeled by their position on \(P_1\) and \(P_3\).}
\end{figure}

\begin{proof}
Let \(n \in 2\N+1\).
Then set \(\G\) to be the complete graph on \(n\) vertices.
Let \(P_1, \dots, P_{(n-1)/2}\) be edge disjoint Hamiltonian paths on \(\G\), all starting at some fixed node \(s\).

We may construct them as follows: take \((n-1)/2\) Hamiltonian paths on the \(K_{n-1}\) (which exist by \cite{axiotis_approx}), add a node \(s\) and connect \(s\) to all other nodes.
Take that node as the first and last node on every path to form \((n-1)/2\) Hamiltonian cycles.
Turn them into paths starting at \(s\) arbitrarily.

Order the nodes and the Hamiltonian paths uniformly at random.
Note, these cover all edges in \(\G\).
Write \(p \colon \edgeset{\G} \to [n/2]\) be the function that maps an edge to the Hamiltonian path it belongs to and write \(q \colon \edgeset{\G} \to [n]\) for the function that assigns an edge its index on that path.
Then let the Adversary act according to \(\edgeLabelOp \coloneqq e \mapsto p(e)n + q(e),\) and pick \(\iphase = 1\).
See \Cref{fig:dX1-lb} for an illustration of this construction.
In each round \(i\), the Adversary lets the source infect the whole graph via \(P_{i \modOp (n-1)/2}\).
For the sake of this proof, we may assume the Discoverer knows the scheme according to which these paths are picked (but not in which order the nodes appear on a specific path).
Thus, if the Discoverer knows the label of an edge \(e\), it may deduce which path it belongs to (by calculating \(\edgeLabel{e} \divOp n\)) and what the index of the edge has on the path (by calculating \(\edgeLabel{e} \modOp n\)).

We only consider the first \((n-1)/2 - 1\) rounds, since after these, there will have been \(n((n-1)/2-1) \in \Omega(n^2)\) infections.
This is where the \(1/3\) in the minimum in this theorem comes from, since for large enough \(n\), we have \(n((n-1)/2-1) > 1/3 \cdot n^2\).
This edge case handled, we assume for the rest of this proof, that during our game there is exactly one infection chain per path.

Intuitively, the Discoverer has information about some of the edges for each path, and the goal of the Discoverer is to find the start of any of the paths.
Formally, we prove that the Discoverer must have observed an infection involving the source (i.e., the first node on all paths), that is, it must have watched the first or second node on a path.
By construction, an observed infection on one path reveals very limited information about the other paths.
An observed infection reveals the positions of the two involved nodes on the current path.
This only tells the Discoverer that they may not be adjacent on any other path.
Within the first \((n-1)/2-1\) rounds, the Discoverer learns precisely \((n-1)/2 - 1\) of these pieces of information, thus for each node, there are at least \(n/2\) positions on the current path that may be consistent with this information.
Also, after \((n-1)/2-1\) rounds, there are at least two nodes on which the Discoverer has no information.
Thus, if the Discoverer never observes the first or second node on a path, the adversary may rearrange the paths in such a way, that one of the unobserved nodes becomes the source (it, of course, picks whichever node the Discoverer does not claim is the source).

Since the positions are distributed uniformly at random, no strategy that the Discoverer may use has a probability larger than \(2/n\) of picking the first or second node on the current path, thus advancing the requirements outlined above.
Therefore, any algorithm that has success probability at least \(p\), must have at least probability \(p\) of picking the first or second node on a path.
Thus, by union bound, the Discoverer must have spent at least \(np/2\) rounds, that is, tolerated \(n^2 p/2\) infections.
\end{proof}

\begin{theorem}
There is a Discoverer algorithm that wins any instance of the source detection game on known tree graphs under obliviously dynamic source behavior while watching at most two nodes per round. This algorithm tolerates \(\BigO(n \log n)\) infections with constant probability.
\label{thm:dX2}
\end{theorem}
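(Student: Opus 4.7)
My plan is to generalize the centroid-decomposition algorithm from \Cref{thm:aX1-tw} and \Cref{cor:aX1-tree} to the obliviously dynamic setting by exploiting both available watching slots. I maintain a set $C$ of candidate source nodes, always a connected subtree $T_C$ of the tree $T$ that contains the true source. Starting with $C = V$, in each iteration I compute the centroid $c$ of $T_C$ and perform a phase that shrinks $|C|$ by a constant factor. Within a phase, I repeatedly watch $c$ together with a second node $r$, chosen carefully (for example, as the centroid of the largest subtree of $T_C - c$, or uniformly at random from $C \setminus \{c\}$), until I observe an infection at one of the two watched nodes. In either case, the direction from which the infection arrives at the observed node identifies a subtree of $T_C$ in which the source must lie, which I use to update $C$.

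The key insight is that, in the dynamic setting, the Adversary can arrange for $c$ to be temporally unreachable from the source, making watching $c$ alone potentially uninformative. The second watched node circumvents this by acting as a random probe in the spirit of \Cref{lem:linear-discovery}. In each phase, the same kind of argument as in \Cref{lem:linear-discovery}, adapted to two watched nodes drawn from the current candidate subtree of size $|C|$, will show that I observe an infection at a watched node after tolerating at most $\BigO(n)$ infections in expectation. Applying Markov's inequality to each phase and a standard amplification across the $\BigO(\log n)$ phases, the algorithm then detects the source within $\BigO(n \log n)$ infections with constant probability.

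The main obstacle will be the case analysis showing that the second watched node always induces a constant-factor shrinkage of $C$, regardless of the Adversary's behavior. When I observe an infection at $r$ but not at $c$, the direction at $r$ partitions $T_C$ at $r$, but a uniformly random $r$ need not produce a balanced split—for instance, on a star graph one side is of size $1$ and the other is almost all of $C$. The tree-centroid structure must therefore be used to choose $r$ so that both possible direction outcomes leave at most a constant fraction of $|C|$ candidates; I expect that selecting $r$ as the centroid of the largest subtree of $T_C - c$ yields this guarantee, since both the split induced at $c$ and the split induced at $r$ are then balanced by construction. Making this choice work rigorously, together with combining the per-phase success probabilities into a global constant probability of success, is the technical core of the argument.
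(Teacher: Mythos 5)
Your high-level plan matches the paper's — centroid decomposition of the candidate subtree, watch the centroid plus a second node, use \Cref{lem:linear-discovery} per phase, $\BigO(\log n)$ phases — but you misidentify what the second node is for, and your proposed fix would actually break the argument. The paper never partitions $T_C$ at $r$. When $r$ is infected but the centroid $c$ is not, the conclusion is simply that the source lies in the connected component of $T_C - c$ that contains $r$: since $T_C$ is a subtree of a tree, the unique path from the source to $r$ lies inside $T_C$, and if it passed through $c$ then $c$ would have been infected too. So the recursion always splits at $c$, which is balanced by construction, and the direction at $r$ (and any balance property of the split at $r$) is irrelevant. This dissolves your ``main obstacle'': you do not need both outcomes at $r$ to be balanced, because you never branch on $r$.

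Consequently, choosing $r$ deterministically as the centroid of the largest subtree is the wrong move and would invalidate the cost bound. \Cref{lem:linear-discovery} requires $r$ to be uniformly random precisely so that an obliviously dynamic Adversary — which can simulate any deterministic Discoverer from the shared transcript and hence effectively knows a deterministic algorithm's choices — cannot arrange round after round of infections that systematically avoid both $c$ and $r$. With deterministic $c$ and $r$ the Adversary can rack up $\Omega(n)$ infections per round while never touching a watched node, and the $\BigO(n)$-per-phase bound collapses. The paper's resolution is exactly what you need: $r$ is uniformly random (so \Cref{lem:linear-discovery} applies and gives $\BigO(n)$ tolerated infections in expectation per phase), while the balanced split always happens at the deterministic centroid $c$. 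With that correction, the remainder of your argument (linearity of expectation across $\BigO(\log n)$ phases, then Markov's inequality) goes through as in the paper.
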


\begin{proof}
Similar to \Cref{alg:aX1-tw}, we introduce an algorithm that relies on the existence of small spanners.
Concretely, it is a well known result that a tree has a \emph{centroid decomposition} \cite{centroids}.
A centroid is a node such that its removal disconnects the tree into two components at most half the size of the original tree.
Since, any tree has a centroid and the two components are again trees, we can decompose a tree into a series of these centroids and the respective split trees.

The algorithm now proceeds as follows: \todo[color=pink]{@supervisiors: move to figure?}
\begin{enumerate}
\item Maintain a subtree of candidate nodes, and always compute a centroid as a balanced separator. Start with the whole graph as the candidates.
\item In each round: watch the current separator and one node in the current subtree picked uniformly at random.
\item If you receive information about the subtree to go into (either because the separator or the other node was infected), do so.
\begin{enumerate}
\item If the separator is infected, recurse into the side of the separation from which the infection originated.
\item If the randomly picked node is infected but not the separator, recurse into the side of the separation this node is a part of.
\end{enumerate}
\end{enumerate}

For the correctness, note that infection chains always form a directed, connected subtree of the entire graph.
Thus, if the infection stems from one side of the separation, then the source must be part of that subgraph.
Similarly, if the separator is not infected but a node in one of the subtrees is, then the source must be in that subtree.

For the price of discovery, observe that we have to recurse \(\log \nodesetsize{\G}\) times as the centroid is a balanced separator, and thus each time we do so, the subtree containing the remaining candidate nodes halves in size.
Also, by \Cref{lem:linear-discovery}, at most \(n\) nodes are infected in expectation until we observe an infection at the randomly picked node.
Then we may recurse in step 4.

Thus, in expectation, we tolerate \(\BigO(n \log n)\) infections.
Apply Markov's inequality to get the desired result.
\end{proof}

Note that this is the only result in this paper that depends on the Discoverer's ability to watch more than one node at a time.
As we will see in the next theorem, this can only be necessary if the source has obliviously dynamic behavior.
We currently do not know of a lower bound proving that watching more than one node in these scenarios provides an asymptotic advantage.
Intuitively, the above algorithm exploits this capability to mitigate one of the problems with dynamic source behavior: we cannot trace back along an infection chain, as it is unclear if in a given round we see similar behavior to the previous one.
By watching both the next and previous node to be tested, we can extenuate this issue.

\begin{theorem}
Let \(A\) be a Discoverer algorithm for the source detection game with consistent source behavior (with either known or unknown static graph) while watching \(k\) nodes per round.
Then there is a Discoverer algorithm \(A^1\) such that if \(A\) tolerates at most \(x\) successful infections in some instance of the game, \(A^1\) tolerates at most \(k x\) infections in the same instance.
\label{thm:aXYk-aXY1}
\end{theorem}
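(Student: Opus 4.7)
The plan is to build $A^1$ as a simple simulator that runs $A$ internally and replays each of $A$'s rounds as $k$ single-node rounds in its own game, exploiting the fact that under consistent source behavior every round produces identical infection activity.

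More concretely, $A^1$ maintains a copy of $A$ and translates between the two games as follows. Whenever $A$ requests to watch a set $S = \{v_1, \dots, v_k\}$ in some round, $A^1$ plays $k$ successive rounds in its own game, watching $v_j$ in the $j$-th of these rounds. It records the response (infection time and, in the variants where this is revealed, the infecting neighbor) for each $v_j$, assembles them into the vector of responses $A$ would have received in its single round, and feeds that back to $A$. When $A$ eventually outputs a suspected source $s$, $A^1$ outputs the same $s$.

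Correctness is immediate from the consistency of the source: the Adversary in the source detection game with consistent source behavior is required to produce the same set of infection times $I_i$ in every round (up to the tie-breaking permitted when a susceptible node has several infectious neighbors, but this does not affect \emph{when} a node is infected by \Cref{lem:infection-log-to-time-table}). Hence the $k$ simulated rounds form a response that $A$ could legitimately have received from its own Adversary in a single round, so any execution of $A$ against the simulated transcript corresponds to a valid execution of $A$ in its game, and $A^1$ wins its game exactly when the internally simulated $A$ would have won.

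For the cost, observe that by consistency the set of nodes infected in each of the $k$ replayed rounds is the same; if $A$'s round would have contributed $y$ successful infections, the $k$ replayed rounds contribute exactly $k y$ infections in $A^1$'s game. Summing over all rounds $A$ performs, $A^1$ tolerates at most $k x$ infections whenever $A$ tolerates at most $x$, as claimed. The only subtlety is to verify that the ``same instance'' wording of the theorem is respected: the graph, the source, the source-start time, and the edge labeling chosen by the Adversary are identical in both games, which is why consistency of the source's infection times across rounds is all we need. The main (minor) obstacle is just stating this transfer precisely in the two feedback models (infection time only versus infection time together with infecting neighbor) and observing that in both cases the simulation passes through the relevant information unchanged.
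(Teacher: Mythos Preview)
Your proposal is correct and follows essentially the same approach as the paper: simulate $A$ by splitting each of its rounds into $k$ single-node rounds, using consistency of the source behavior to argue that the same information is obtained and the same number of infections occur in each replayed round. Your write-up is somewhat more detailed (e.g., invoking \Cref{lem:infection-log-to-time-table} and discussing the two feedback models), but the underlying argument is identical.
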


This proves that, asymptotically, the ability to watch two nodes does not help the Discoverer in the setting with consistent source behavior.
Note, this theorem and its proof can trivially be extended to randomized bounds.

\begin{proof}
\(A^1\) simply simulates \(A\) in the following fashion: if \(A\) watches \(\ell \le k\) nodes \(v_1, \dots, v_\ell\) in some round \(i\), then \(A^1\) spends \(\ell\) rounds watching \(v_1, \dots, v_\ell\) individually.
Both the correctness and the price of detection bound follow since the behavior of the source is consistent.
Because of the consistency, \(A^1\) gains precisely the same information in the individual rounds for \(v_1, \dots, v_\ell\) as \(A\) does in the one round for \(v_1, \dots, v_\ell\).
Also, because the source behaves consistently, it infects the same number of nodes in each round.
Since \(A^1\) splits each round of \(A\) into at most \(k\) rounds, the number of tolerated infections increases at most by a factor of \(k\).
\end{proof}

We do not know if this natural relationship holds when the source behavior is obliviously dynamic.
Intuitively, our argument does not translate, since we cannot rely on the argument that splitting one round with many watched nodes into many rounds with a single watched node yields the same information and tolerates the same number of infections.
\chapter{Summary \& Future Work}
\label{sec:org570ff46}
Our work provides an extensive theoretical study of source detection under the SIR model in temporal graphs.
We give a precise definition of source detection in this setting via an interactive two-player game.
Using randomization, we overcome the challenging worst-case behavior and offer efficient algorithms for many settings, such as on trees and general graphs, under consistent and obliviously dynamic source behavior as well as for known and unknown static graphs.
For all but one of the many settings we investigated, we have matching lower bounds proving our algorithms are asymptotically optimal among all algorithms winning the source detection game with constant probability, allowing us to precisely characterize the respective difficulty of these settings.

Our work could naturally be extended by studying source detection under the susceptible-infected-susceptible model.
As many of our lower bounds rely on the resistance of recovered nodes, seeing which results translate promises to be insightful.
Closing the remaining gap in \Cref{tbl:results} would require investigating source detection on unknown graphs with obliviously dynamic source behavior where the underlying static graph is guaranteed to be a tree.
And finally, the most compelling unanswered question is whether allowing the Discoverer to watch multiple nodes in the setting with obliviously dynamic source behavior allows for asymptotically fast algorithms.
We know this is not the case in the setting with consistent source behavior by \Cref{thm:aXYk-aXY1}.
Yet, while we use the additional capability in \Cref{thm:dX2}, we have no proof it is fundamentally necessary to achieve the better running time.

\FloatBarrier

\chapter{Conclusion \label{sec:thesis-conclusion}}
\label{sec:orgd6943b4}
While we give the first formal models for infections in temporal graphs and thoroughly investigate their complexity in a number of different settings, the field has many compelling open questions to offer.
In addition to the research questions outlined at the end of \cref{part:graph-discovery,part:source-detection}, there are a number of more general, but just as exciting, questions for the study of spreading processes in temporal graphs.

Extending our work to other infection models (such as the linear threshold or SIS models) is sure to yield interesting insights into how these different spreading behaviors interact with changing graphs.
In particular, seeing which of our results translate to different models and which do not promises to shed light on the complex interaction between infections and their underlying network.

In \cref{part:source-detection} we have seen the power of randomized algorithms to mitigate worst-case behavior.
Whether similarly faster randomized algorithms can be obtained for temporal graph discovery is an open question.

Finally, our model of infections itself is deterministic, that is, the spread of an infection in a temporal graph is entirely determined by the seed nodes and times.
This simplification is not fully realistic, which leads to the natural question of how our graph discovery and source detection approaches could be adapted to a model with probabilistic infections.
Such probabilistic models have been studied extensively for static graphs.
Here, the independent cascade \cite{independent-cascade} is particularly influential.

Temporal graphs promise to be a fruitful research framework for studying spreading processes, allowing us to more accurately model and analyze these complex real-world phenomena. Theoretical insights in this area stand to benefit applications from the prediction of diseases to the combating of misinformation.

\iffalse
\bibliography{papers}

@article{influencers,
  title           = {Being an influencer is hard: The complexity of influence
                  maximization in temporal graphs with a fixed source},
  author          = {Deligkas, Argyrios and D{\"o}ring, Michelle and Eiben,
                  Eduard and Goldsmith, Tiger-Lily and Skretas, George},
  journal         = {Information and Computation},
  pages           = 105171,
  year            = 2024,
  publisher       = {Elsevier}
}

@inproceedings{casteigts_threshold,
  author          = {Casteigts, Arnaud and Raskin, Michael and Renken, Malte and
                  Zamaraev , Viktor},
  booktitle       = {Symposium on Foundations of Computer Science (FOCS 2022)},
  title           = {Sharp Thresholds in Random Simple Temporal Graphs},
  year            = 2022,
  pages           = {319-326},
  doi             = {10.1109/FOCS52979.2021.00040},
}

@InProceedings{axiotis_approx,
  author          = {Axiotis, Kyriakos and Fotakis, Dimitris},
  title           = {{On the Size and the Approximability of Minimum Temporally
                  Connected Subgraphs}},
  booktitle       = {43rd International Colloquium on Automata, Languages, and
                  Programming (ICALP 2016)},
  pages           = {149:1--149:14},
  year            = 2016,
  volume          = 55,
  doi             = {10.4230/LIPIcs.ICALP.2016.149},
  annote          = {Keywords: Temporal Graphs, Temporal Connectivity,
                  Approximation Algorithms}
}

@inproceedings{kumar2021deception,
  title           = {Deception Detection in Group Video Conversations using
                  Dynamic Interaction Networks},
  author          = {Kumar, Srijan and Bai, Chongyang and Subrahmanian, VS and
                  Leskovec, Jure},
  booktitle       = {ICWSM 2021},
  year            = 2021,
  organization    = {International AAAI Conference on Web and Social Media}
}

@article{centroids,
  title           = {Sur les assemblages de lignes.},
  title           = {},
  author          = {Camille Jordan},
  pages           = {185--190},
  volume          = 1869,
  number          = 70,
  journal         = {Journal für die reine und angewandte Mathematik},
  doi             = {doi:10.1515/crll.1869.70.185},
  year            = 1869,
  lastchecked     = {2024-08-21}
}

@article{a-contribution,
  author          = {Kermack, William Ogilvy and McKendrick, A. G. and Walker,
                  Gilbert Thomas },
  title           = {A contribution to the mathematical theory of epidemics},
  journal         = {Proceedings of the Royal Society of London. Series A,
                  Containing Papers of a Mathematical and Physical Character},
  volume          = 115,
  number          = 772,
  pages           = {700-721},
  year            = 1927,
  doi             = {10.1098/rspa.1927.0118},
  abstract        = { (1) One of the most striking features in the study of
                  epidemics is the difficulty of finding a causal factor which
                  appears to be adequate to account for the magnitude of the
                  frequent epidemics of disease which visit almost every
                  population. It was with a view to obtaining more insight
                  regarding the effects of the various factors which govern the
                  spread of contagious epidemics that the present investigation
                  was undertaken. Reference may here be made to the work of Ross
                  and Hudson (1915-17) in which the same problem is attacked.
                  The problem is here carried to a further stage, and it is
                  considered from a point of view which is in one sense more
                  general. The problem may be summarised as follows: One (or
                  more) infected person is introduced into a community of
                  individuals, more or less susceptible to the disease in
                  question. The disease spreads from the affected to the
                  unaffected by contact infection. Each infected person runs
                  through the course of his sickness, and finally is removed
                  from the number of those who are sick, by recovery or by
                  death. The chances of recovery or death vary from day to day
                  during the course of his illness. The chances that the
                  affected may convey infection to the unaffected are likewise
                  dependent upon the stage of the sickness. As the epidemic
                  spreads, the number of unaffected members of the community
                  becomes reduced. Since the course of an epidemic is short
                  compared with the life of an individual, the population may be
                  considered as remaining constant, except in as far as it is
                  modified by deaths due to the epidemic disease itself. In the
                  course of time the epidemic may come to an end. One of the
                  most important probems in epidemiology is to ascertain whether
                  this termination occurs only when no susceptible individuals
                  are left, or whether the interplay of the various factors of
                  infectivity, recovery and mortality, may result in
                  termination, whilst many susceptible individuals are still
                  present in the unaffected population. It is difficult to treat
                  this problem in its most general aspect. In the present
                  communication discussion will be limited to the case in which
                  all members of the community are initially equally susceptible
                  to the disease, and it will be further assumed that complete
                  immunity is conferred by a single infection. }
}

@article{pastor2015epidemic,
  title           = {Epidemic processes in complex networks},
  author          = {Pastor-Satorras, Romualdo and Castellano, Claudio and Van
                  Mieghem, Piet and Vespignani, Alessandro},
  journal         = {Reviews of modern physics},
  volume          = 87,
  number          = 3,
  pages           = {925--979},
  year            = 2015,
  publisher       = {APS}
}

@article{independent-cascade,
  author          = {Kempe, David and Kleinberg, Jon and Tardos, \'{E}va},
  title           = {Maximizing the spread of influence through a social
                  network},
  year            = 2003,
  isbn            = 1581137370,
  publisher       = {Association for Computing Machinery},
  address         = {New York, NY, USA},
  doi             = {10.1145/956750.956769},
  abstract        = {Models for the processes by which ideas and influence
                  propagate through a social network have been studied in a
                  number of domains, including the diffusion of medical and
                  technological innovations, the sudden and widespread adoption
                  of various strategies in game-theoretic settings, and the
                  effects of "word of mouth" in the promotion of new products.
                  Recently, motivated by the design of viral marketing
                  strategies, Domingos and Richardson posed a fundamental
                  algorithmic problem for such social network processes: if we
                  can try to convince a subset of individuals to adopt a new
                  product or innovation, and the goal is to trigger a large
                  cascade of further adoptions, which set of individuals should
                  we target?We consider this problem in several of the most
                  widely studied models in social network analysis. The
                  optimization problem of selecting the most influential nodes
                  is NP-hard here, and we provide the first provable
                  approximation guarantees for efficient algorithms. Using an
                  analysis framework based on submodular functions, we show that
                  a natural greedy strategy obtains a solution that is provably
                  within 63\% of optimal for several classes of models; our
                  framework suggests a general approach for reasoning about the
                  performance guarantees of algorithms for these types of
                  influence problems in social networks.We also provide
                  computational experiments on large collaboration networks,
                  showing that in addition to their provable guarantees, our
                  approximation algorithms significantly out-perform
                  node-selection heuristics based on the well-studied notions of
                  degree centrality and distance centrality from the field of
                  social networks.},
  booktitle       = {Proceedings of the Ninth ACM SIGKDD International
                  Conference on Knowledge Discovery and Data Mining},
  pages           = {137–146},
  numpages        = 10,
  keywords        = {viral marketing, social networks, diffusion of innovations,
                  approximation algorithms},
  location        = {Washington, D.C.},
  series          = {KDD '03}
}

@article{budak-limiting,
  title           = {Limiting the spread of misinformation in social networks},
  author          = {Budak, Ceren and Agrawal, Divyakant and El Abbadi, Amr},
  booktitle       = {Proceedings of the 20th international conference on World
                  wide web},
  pages           = {665--674},
  year            = 2011
}

@inbook{Hethcote1989,
  abstract        = {There are three basic types of deterministic models for
                  infectious diseases which are spread by direct
                  person-to-person contact in a population. Here these simplest
                  models are formulated as initial value problems for systems of
                  ordinary differential equations and are analysed
                  mathematically. Theorems are stated regarding the asymptotic
                  stability regions for the equilibrium points and phase plane
                  portraits of solution paths are presented. Parameters are
                  estimated for various diseases and are used to compare the
                  vaccination levels necessary for herd immunity for these
                  diseases. Although the three models presented are simple and
                  their mathematical analyses are elementary, these models
                  provide notation, concepts, intuition and foundation for
                  considering more refined models. Some possible refinements are
                  disease-related factors such as the infectious agent, mode of
                  transmission, latent period, infectious period, susceptibility
                  and resistance, but also social, cultural, Ecology by
                  providing a sound intuitive understanding and complete proofs
                  for the three most basic epidemiological models for
                  microparasitic infections.},
  address         = {Berlin, Heidelberg},
  author          = {Hethcote, Herbert W.},
  booktitle       = {Applied Mathematical Ecology},
  doi             = {10.1007/978-3-642-61317-3_5},
  editor          = {Levin, Simon A. and Hallam, Thomas G. and Gross, Louis J.},
  isbn            = {978-3-642-61317-3},
  pages           = {119--144},
  publisher       = {Springer Berlin Heidelberg},
  title           = {Three Basic Epidemiological Models},
  url             = {https://doi.org/10.1007/978-3-642-61317-3_5},
  year            = 1989,
  bdsk-url-1      = {https://doi.org/10.1007/978-3-642-61317-3_5}
}

@article{BRITTON201024,
  abstract        = {This paper is a survey paper on stochastic epidemic models.
                  A simple stochastic epidemic model is defined and exact and
                  asymptotic (relying on a large community) properties are
                  presented. The purpose of modelling is illustrated by studying
                  effects of vaccination and also in terms of inference
                  procedures for important parameters, such as the basic
                  reproduction number and the critical vaccination coverage.
                  Several generalizations towards realism, e.g. multitype and
                  household epidemic models, are also presented, as is a model
                  for endemic diseases.},
  author          = {Tom Britton},
  doi             = {https://doi.org/10.1016/j.mbs.2010.01.006},
  issn            = {0025-5564},
  journal         = {Mathematical Biosciences},
  keywords        = {Basic reproduction number, Critical vaccination coverage,
                  Household epidemic, Multitype epidemic, Stochastic epidemic,
                  Threshold theorem},
  number          = 1,
  pages           = {24-35},
  title           = {Stochastic epidemic models: A survey},
  url             =
                  {https://www.sciencedirect.com/science/article/pii/S0025556410000143},
  volume          = 225,
  year            = 2010,
  bdsk-url-1      =
                  {https://www.sciencedirect.com/science/article/pii/S0025556410000143},
  bdsk-url-2      = {https://doi.org/10.1016/j.mbs.2010.01.006}
}

@ARTICLE{sir-source-detection,
  author          = {Zhu, Kai and Ying, Lei},
  journal         = {IEEE/ACM Transactions on Networking},
  title           = {Information Source Detection in the SIR Model: A
                  Sample-Path-Based Approach},
  year            = 2016,
  volume          = 24,
  number          = 1,
  pages           = {408-421},
  keywords        = {Silicon;Computational modeling;Maximum likelihood
                  estimation;Network topology;Computers;IEEE
                  transactions;Topology;Information source
                  detection;sample-path-based
                  approach;Susceptible-Infected-Recovered (SIR) model},
  doi             = {10.1109/TNET.2014.2364972}
}

@article{shah2011rumors,
  title           = {Rumors in a network: Who's the culprit?},
  author          = {Shah, Devavrat and Zaman, Tauhid},
  journal         = {IEEE Transactions on information theory},
  volume          = 57,
  number          = 8,
  pages           = {5163--5181},
  year            = 2011,
  publisher       = {IEEE}
}

@inproceedings{gayraud-evolving-social-networks,
  author          = {Gayraud, Nathalie T.H. and Pitoura, Evaggelia and Tsaparas,
                  Panayiotis},
  title           = {Diffusion Maximization in Evolving Social Networks},
  year            = 2015,
  isbn            = 9781450339513,
  publisher       = {Association for Computing Machinery},
  address         = {New York, NY, USA},
  url             = {https://doi.org/10.1145/2817946.2817965},
  doi             = {10.1145/2817946.2817965},
  abstract        = {Diffusion in social networks has been studied extensively
                  in the past few years. Most previous work assumes that the
                  underlying network is a static object that remains unchanged
                  as the diffusion process progresses. However, there are
                  several real-life networks that change dynamically over time.
                  In this paper, we study diffusion on such evolving networks
                  and extend the popular Independent Cascade and Linear
                  Threshold models to account for network evolution. In
                  particular, we introduce two natural variations, a persistent
                  and a transient one, to capture diffusions of different types.
                  We consider the problem of influence maximization where the
                  goal is to select a few influential nodes to initiate a
                  diffusion with maximum spread. We show that, surprisingly,
                  when considering evolving networks the diffusion function is
                  no longer submodular for the transient models, and not even
                  monotone for the transient Independent Cascade model. We also
                  show that, depending on the model, delaying the activation of
                  the initiators may improve diffusion. Our experiments, using
                  three real datasets, demonstrate the effect of network
                  evolution on the diffusion process, and highlight the
                  importance of timing in the selection process.},
  booktitle       = {Proceedings of the 2015 ACM on Conference on Online Social
                  Networks},
  pages           = {125–135},
  numpages        = 11,
  keywords        = {evolving social networks, diffusion maximization},
  location        = {Palo Alto, California, USA},
  series          = {COSN '15}
}

@inproceedings{sensor-placement,
  author          = {Leskovec, Jure and Krause, Andreas and Guestrin, Carlos and
                  Faloutsos, Christos and VanBriesen, Jeanne and Glance,
                  Natalie},
  title           = {Cost-effective outbreak detection in networks},
  year            = 2007,
  isbn            = 9781595936097,
  publisher       = {Association for Computing Machinery},
  address         = {New York, NY, USA},
  url             = {https://doi.org/10.1145/1281192.1281239},
  doi             = {10.1145/1281192.1281239},
  abstract        = {Given a water distribution network, where should we place
                  sensors toquickly detect contaminants? Or, which blogs should
                  we read to avoid missing important stories?.These seemingly
                  different problems share common structure: Outbreak detection
                  can be modeled as selecting nodes (sensor locations, blogs) in
                  a network, in order to detect the spreading of a virus or
                  information asquickly as possible. We present a general
                  methodology for near optimal sensor placement in these and
                  related problems. We demonstrate that many realistic outbreak
                  detection objectives (e.g., detection likelihood, population
                  affected) exhibit the property of "submodularity". We exploit
                  submodularity to develop an efficient algorithm that scales to
                  large problems, achieving near optimal placements, while being
                  700 times faster than a simple greedy algorithm. We also
                  derive online bounds on the quality of the placements obtained
                  by any algorithm. Our algorithms and bounds also handle cases
                  where nodes (sensor locations, blogs) have different costs.We
                  evaluate our approach on several large real-world
                  problems,including a model of a water distribution network
                  from the EPA, andreal blog data. The obtained sensor
                  placements are provably near optimal, providing a constant
                  fraction of the optimal solution. We show that the approach
                  scales, achieving speedups and savings in storage of several
                  orders of magnitude. We also show how the approach leads to
                  deeper insights in both applications, answering multicriteria
                  trade-off, cost-sensitivity and generalization questions.},
  booktitle       = {Proceedings of the 13th ACM SIGKDD International Conference
                  on Knowledge Discovery and Data Mining},
  pages           = {420–429},
  numpages        = 10,
  keywords        = {virus propagation, submodular functions, sensor placement,
                  information cascades, graphs},
  location        = {San Jose, California, USA},
  series          = {KDD '07}
}

@inproceedings{infer-from-cascades,
  author          = {Pouget-Abadie, Jean and Horel, Thibaut},
  title           = {Inferring Graphs from Cascades: A Sparse Recovery
                  Framework},
  year            = 2015,
  isbn            = 9781450334730,
  publisher       = {Association for Computing Machinery},
  address         = {New York, NY, USA},
  url             = {https://doi.org/10.1145/2740908.2744107},
  doi             = {10.1145/2740908.2744107},
  abstract        = {In the Graph Inference problem, one seeks to recover the
                  edges of an unknown graph from the observations of cascades
                  propagating over this graph. We approach this problem from the
                  sparse recovery perspective. We introduce a general model of
                  cascades, including the voter model and the independent
                  cascade model, for which we provide the first algorithm which
                  recovers the graph's edges with high probability and O(s log
                  m) measurements where s is the maximum degree of the graph and
                  $m$ is the number of nodes. Furthermore, we show that our
                  algorithm also recovers the edge weights (the parameters of
                  the diffusion process) and is robust in the context of
                  approximate sparsity. Finally we validate our approach
                  empirically on synthetic graphs.},
  booktitle       = {Proceedings of the 24th International Conference on World
                  Wide Web},
  pages           = {625–626},
  numpages        = 2,
  keywords        = {graph inference, influence cascades, sparse recovery},
  location        = {Florence, Italy},
  series          = {WWW '15 Companion}
}

@inproceedings{daneshmand2014estimating,
  title           = {Estimating diffusion network structures: Recovery
                  conditions, sample complexity \& soft-thresholding algorithm},
  author          = {Daneshmand, Hadi and Gomez-Rodriguez, Manuel and Song, Le
                  and Schoelkopf, Bernhard},
  booktitle       = {International conference on machine learning},
  pages           = {793--801},
  year            = 2014,
  organization    = {PMLR}
}

@article{lokhov2016reconstructing,
  title           = {Reconstructing parameters of spreading models from partial
                  observations},
  author          = {Lokhov, Andrey},
  journal         = {Advances in Neural Information Processing Systems},
  volume          = 29,
  year            = 2016
}

@inproceedings{kempe2000connectivity,
  title           = {Connectivity and inference problems for temporal networks},
  author          = {Kempe, David and Kleinberg, Jon and Kumar, Amit},
  booktitle       = {Proceedings of the thirty-second annual ACM symposium on
                  Theory of computing},
  pages           = {504--513},
  year            = 2000
}

@article{erdos1960evolution,
  title           = {On the evolution of random graphs},
  author          = {Erdos, Paul and R{\'e}nyi, Alfr{\'e}d and others},
  journal         = {Publ. math. inst. hung. acad. sci},
  volume          = 5,
  number          = 1,
  pages           = {17--60},
  year            = 1960
}

@article{Angel_Ferber_Sudakov_Tassion_2020,
  title           = {Long Monotone Trails in Random Edge-Labellings of Random
                  Graphs},
  volume          = 29,
  DOI             = {10.1017/S096354831900018X},
  number          = 1,
  journal         = {Combinatorics, Probability and Computing},
  author          = {Angel, Omer and Ferber, Asaf and Sudakov, Benny and
                  Tassion, Vincent},
  year            = 2020,
  pages           = {22–30}
}

@article{paluch_fast_2018,
	title = {Fast and accurate detection of spread source in large complex networks},
	volume = {8},
	issn = {2045-2322},
	url = {https://doi.org/10.1038/s41598-018-20546-3},
	doi = {10.1038/s41598-018-20546-3},
	abstract = {Spread over complex networks is a ubiquitous process with increasingly wide applications. Locating spread sources is often important, e.g. finding the patient one in epidemics, or source of rumor spreading in social network. Pinto, Thiran and Vetterli introduced an algorithm (PTVA) to solve the important case of this problem in which a limited set of nodes act as observers and report times at which the spread reached them. PTVA uses all observers to find a solution. Here we propose a new approach in which observers with low quality information (i.e. with large spread encounter times) are ignored and potential sources are selected based on the likelihood gradient from high quality observers. The original complexity of PTVA is O(Nα), where α ∈ (3,4) depends on the network topology and number of observers (N denotes the number of nodes in the network). Our Gradient Maximum Likelihood Algorithm (GMLA) reduces this complexity to O (N2log (N)). Extensive numerical tests performed on synthetic networks and real Gnutella network with limitation that id’s of spreaders are unknown to observers demonstrate that for scale-free networks with such limitation GMLA yields higher quality localization results than PTVA does.},
	number = {1},
	journal = {Scientific Reports},
	author = {Paluch, Robert and Lu, Xiaoyan and Suchecki, Krzysztof and Szymański, Bolesław K. and Hołyst, Janusz A.},
	month = feb,
	year = {2018},
	pages = {2508},
}

@article{fake-news-survey,
author = {Sharma, Karishma and Qian, Feng and Jiang, He and Ruchansky, Natali and Zhang, Ming and Liu, Yan},
title = {Combating Fake News: A Survey on Identification and Mitigation Techniques},
year = {2019},
issue_date = {May 2019},
publisher = {Association for Computing Machinery},
address = {New York, NY, USA},
volume = {10},
number = {3},
issn = {2157-6904},
url = {https://doi.org/10.1145/3305260},
doi = {10.1145/3305260},
abstract = {The proliferation of fake news on social media has opened up new directions of research for timely identification and containment of fake news and mitigation of its widespread impact on public opinion. While much of the earlier research was focused on identification of fake news based on its contents or by exploiting users’ engagements with the news on social media, there has been a rising interest in proactive intervention strategies to counter the spread of misinformation and its impact on society. In this survey, we describe the modern-day problem of fake news and, in particular, highlight the technical challenges associated with it. We discuss existing methods and techniques applicable to both identification and mitigation, with a focus on the significant advances in each method and their advantages and limitations. In addition, research has often been limited by the quality of existing datasets and their specific application contexts. To alleviate this problem, we comprehensively compile and summarize characteristic features of available datasets. Furthermore, we outline new directions of research to facilitate future development of effective and interdisciplinary solutions.},
journal = {ACM Trans. Intell. Syst. Technol.},
month = {apr},
articleno = {21},
numpages = {42},
keywords = {AI, fake news detection, misinformation, rumor detection}
}

@ARTICLE{covid-sir,
  author={Chen, Yi-Cheng and Lu, Ping-En and Chang, Cheng-Shang and Liu, Tzu-Hsuan},
  journal={IEEE Transactions on Network Science and Engineering},
  title={A Time-Dependent SIR Model for COVID-19 With Undetectable Infected Persons},
  year={2020},
  volume={7},
  number={4},
  pages={3279-3294},
  keywords={Diseases;Numerical models;Mathematical model;Predictive models;Sociology;Statistics;Government;COVID-19;SARS-CoV-2;Coronavirus;time-dependent SIR model;undetectable infection;herd immunity;superspreader;independent cascade;social distancing.},
  doi={10.1109/TNSE.2020.3024723}}

@inproceedings{marketing-im-billion,
author = {Nguyen, Hung T. and Thai, My T. and Dinh, Thang N.},
title = {Stop-and-Stare: Optimal Sampling Algorithms for Viral Marketing in Billion-scale Networks},
year = {2016},
isbn = {9781450335317},
publisher = {Association for Computing Machinery},
address = {New York, NY, USA},
url = {https://doi.org/10.1145/2882903.2915207},
doi = {10.1145/2882903.2915207},
abstract = {Influence Maximization (IM), that seeks a small set of key users who spread the influence widely into the network, is a core problem in multiple domains. It finds applications in viral marketing, epidemic control, and assessing cascading failures within complex systems. Despite the huge amount of effort, IM in billion-scale networks such as Facebook, Twitter, and World Wide Web has not been satisfactorily solved. Even the state-of-the-art methods such as TIM+ and IMM may take days on those networks. In this paper, we propose SSA and D-SSA, two novel sampling frameworks for IM-based viral marketing problems. SSA and D-SSA are up to 1200 times faster than the SIGMOD'15 best method, IMM, while providing the same (1-1/e-ε) approximation guarantee. Underlying our frameworks is an innovative Stop-and-Stare strategy in which they stop at exponential check points to verify (stare) if there is adequate statistical evidence on the solution quality. Theoretically, we prove that SSA and D-SSA are the first approximation algorithms that use (asymptotically) minimum numbers of samples, meeting strict theoretical thresholds characterized for IM. The absolute superiority of SSA and D-SSA are confirmed through extensive experiments on real network data for IM and another topic-aware viral marketing problem, named TVM.},
booktitle = {Proceedings of the 2016 International Conference on Management of Data},
pages = {695–710},
numpages = {16},
keywords = {influence maximization, sampling, stop-and-stare},
location = {San Francisco, California, USA},
series = {SIGMOD '16}
}

@inproceedings{10.1145/2896377.2901462,
author = {Gabielkov, Maksym and Ramachandran, Arthi and Chaintreau, Augustin and Legout, Arnaud},
title = {Social Clicks: What and Who Gets Read on Twitter?},
year = {2016},
isbn = {9781450342667},
publisher = {Association for Computing Machinery},
address = {New York, NY, USA},
doi = {10.1145/2896377.2901462},
abstract = {Online news domains increasingly rely on social media to drive traffic to their websites. Yet we know surprisingly little about how a social media conversation mentioning an online article actually generates clicks. Sharing behaviors, in contrast, have been fully or partially available and scrutinized over the years. While this has led to multiple assumptions on the diffusion of information, each assumption was designed or validated while ignoring actual clicks. We present a large scale, unbiased study of social clicks---that is also the first data of its kind---gathering a month of web visits to online resources that are located in 5 leading news domains and that are mentioned in the third largest social media by web referral (Twitter). Our dataset amounts to 2.8 million shares, together responsible for 75 billion potential views on this social media, and 9.6 million actual clicks to 59,088 unique resources. We design a reproducible methodology and carefully correct its biases. As we prove, properties of clicks impact multiple aspects of information diffusion, all previously unknown:(i) Secondary resources, that are not promoted through headlines and are responsible for the long tail of content popularity, generate more clicks both in absolute and relative terms; (ii) Social media attention is actually long-lived, in contrast with temporal evolution estimated from shares or receptions; (iii) The actual influence of an intermediary or a resource is poorly predicted by their share count, but we show how that prediction can be made more precise.},
booktitle = {Proceedings of the 2016 ACM SIGMETRICS International Conference on Measurement and Modeling of Computer Science},
pages = {179–192},
numpages = {14},
keywords = {twitter, social networks, social clicks, news media},
location = {Antibes Juan-les-Pins, France},
series = {SIGMETRICS '16}
}

@article{computer-immunology,
author = {Forrest, Stephanie and Beauchemin, Catherine},
title = {Computer immunology},
journal = {Immunological Reviews},
volume = {216},
number = {1},
pages = {176-197},
keywords = {artificial immune system, agent-based models, in silico modeling, computational immunology},
doi = {https://doi.org/10.1111/j.1600-065X.2007.00499.x},
abstract = {Summary:  This review describes a body of work on computational immune systems that behave analogously to the natural immune system. These artificial immune systems (AIS) simulate the behavior of the natural immune system and in some cases have been used to solve practical engineering problems such as computer security. AIS have several strengths that can complement wet lab immunology. It is easier to conduct simulation experiments and to vary experimental conditions, for example, to rule out hypotheses; it is easier to isolate a single mechanism to test hypotheses about how it functions; agent-based models of the immune system can integrate data from several different experiments into a single in silico experimental system.},
year = {2007}
}

@InProceedings{pmlr-v216-berenbrink23a,
  title = 	 {Inference of a rumor’s source in the independent cascade model},
  author =       {Berenbrink, Petra and Hahn-Klimroth, Max and Kaaser, Dominik and Krieg, Lena and Rau, Malin},
  booktitle = 	 {Proceedings of the Thirty-Ninth Conference on Uncertainty in Artificial Intelligence},
  pages = 	 {152--162},
  year = 	 {2023},
  volume = 	 {216},
  series = 	 {Proceedings of Machine Learning Research},
  month = 	 {31 Jul--04 Aug},
  publisher =    {PMLR},
  pdf = 	 {https://proceedings.mlr.press/v216/berenbrink23a/berenbrink23a.pdf},
  abstract = 	 {We consider the so-called Independent Cascade Model for rumor spreading or epidemic processes popularized by Kempe et al. (2003). In this model, a node of a network is the source of a rumor – it is informed. In discrete time steps, each informed node “infects” each of its uninformed neighbors with probability p. While many facets of this process are studied in the literature, less is known about the inference problem: given a number of infected nodes in a network, can we learn the source of the rumor? In the context of epidemiology this problem is often referred to as patient zero problem. It belongs to a broader class of problems where the goal is to infer parameters of the underlying spreading model. In this work we present a maximum likelihood estimator for the rumor’s source, given a snapshot of the process in terms of a set of active nodes X after t steps. Our results show that, for acyclic graphs,  the likelihood estimator undergoes a phase transition as a function of $t$. We provide a rigorous analysis for two prominent classes of acyclic network, namely d-regular trees and Galton-Watson trees, and verify empirically that our heuristics work well in various general networks.}
}

@inbook{cygan2015parameterized,
  title={Parameterized algorithms},
  author={Cygan, Marek and Fomin, Fedor V and Kowalik, {\L}ukasz and Lokshtanov, Daniel and Marx, D{\'a}niel and Pilipczuk, Marcin and Pilipczuk, Micha{\l} and Saurabh, Saket},
  volume={5},
  number={4},
  year={2015},
  pages={193},
  publisher={Springer}
}

@inproceedings{estimate-diffusion-networks,
author = {Rong, Yu and Zhu, Qiankun and Cheng, Hong},
title = {A Model-Free Approach to Infer the Diffusion Network from Event Cascade},
year = {2016},
isbn = {9781450340731},
publisher = {Association for Computing Machinery},
address = {New York, NY, USA},
url = {https://doi.org/10.1145/2983323.2983718},
doi = {10.1145/2983323.2983718},
abstract = {Information diffusion through various types of networks, such as social networks and media networks, is a very common phenomenon on the Internet nowadays. In many scenarios, we can track only the time when the information reaches a node. However, the source infecting this node is usually unobserved. Inferring the underlying diffusion network based on cascade data (observed sequence of infected nodes with timestamp) without additional information is an essential and challenging task in information diffusion. Many studies have focused on constructing complex models to infer the underlying diffusion network in a parametric way. However, the diffusion process in the real world is very complex and hard to be captured by a parametric model. Even worse, inferring the parameters of a complex model is impractical under a large data volume.Different from previous works focusing on building models, we propose to interpret the diffusion process from the cascade data directly in a non-parametric way, and design a novel and efficient algorithm named Non-Parametric Distributional Clustering (NPDC). Our algorithm infers the diffusion network according to the statistical difference of the infection time intervals between nodes connected with diffusion edges versus those with no diffusion edges. NPDC is a model-free approach since we do not define any transmission models between nodes in advance. We conduct experiments on synthetic data sets and two large real-world data sets with millions of cascades. Our algorithm achieves substantially higher accuracy of network inference and is orders of magnitude faster compared with the state-of-the-art solutions.},
booktitle = {Proceedings of the 25th ACM International on Conference on Information and Knowledge Management},
pages = {1653–1662},
numpages = {10},
keywords = {clustering, information diffusion, network inference, non-parametric statistics},
location = {Indianapolis, Indiana, USA},
series = {CIKM '16}
}

@inproceedings{park2016information,
  title={Information-theoretic lower bounds for recovery of diffusion network structures},
  author={Park, Keehwan and Honorio, Jean},
  booktitle={2016 IEEE International Symposium on Information Theory (ISIT)},
  pages={1346--1350},
  year={2016},
  organization={IEEE}
}

@article{netrapalli2012learning,
  title={Learning the graph of epidemic cascades},
  author={Netrapalli, Praneeth and Sanghavi, Sujay},
  journal={ACM SIGMETRICS Performance Evaluation Review},
  volume={40},
  number={1},
  pages={211--222},
  year={2012},
  publisher={ACM New York, NY, USA}
}

@inproceedings{chistikov2024learning,
  title={Learning a social network by influencing opinions},
  author={Chistikov, Dmitry and Estrada, Luisa and Turrini, Paolo and Paterson, Mike},
  booktitle={Proceedings of the 23rd International Conference on Autonomous Agents and Multiagent Systems (AAMAS 2024)},
  year={2024},
  organization={AAMAS}
}

@article{michail2016introduction,
  title={An introduction to temporal graphs: An algorithmic perspective},
  author={Michail, Othon},
  journal={Internet Mathematics},
  volume={12},
  number={4},
  pages={239--280},
  year={2016},
  publisher={Taylor \& Francis}
}

@article{akrida2019temporal,
  title={Temporal flows in temporal networks},
  author={Akrida, Eleni C and Czyzowicz, Jurek and G{\k{a}}sieniec, Leszek and Kuszner, {\L}ukasz and Spirakis, Paul G},
  journal={Journal of Computer and System Sciences},
  volume={103},
  pages={46--60},
  year={2019},
  publisher={Elsevier}
}

@article{casteigts2024simple,
  title={Simple, strict, proper, happy: A study of reachability in temporal graphs},
  author={Casteigts, Arnaud and Corsini, Timoth{\'e}e and Sarkar, Writika},
  journal={Theoretical Computer Science},
  volume={991},
  pages={114434},
  year={2024},
  publisher={Elsevier}
}

@Article{casteigts_et_al:DagRep.11.3.16,
  author =	{Casteigts, Arnaud and Meeks, Kitty and Mertzios, George B. and Niedermeier, Rolf},
  title =	{{Temporal Graphs: Structure, Algorithms, Applications (Dagstuhl Seminar 21171)}},
  pages =	{16--46},
  journal =	{Dagstuhl Reports},
  ISSN =	{2192-5283},
  year =	{2021},
  volume =	{11},
  number =	{3},
  editor =	{Casteigts, Arnaud and Meeks, Kitty and Mertzios, George B. and Niedermeier, Rolf},
  publisher =	{Schloss Dagstuhl -- Leibniz-Zentrum f{\"u}r Informatik},
  address =	{Dagstuhl, Germany},
  doi =		{10.4230/DagRep.11.3.16},
  annote =	{Keywords: algorithm engineering, complex network analysis, distributed computing, models and classes, parameterized complexity analysis}
}

@article{wu2014path,
  title={Path problems in temporal graphs},
  author={Wu, Huanhuan and Cheng, James and Huang, Silu and Ke, Yiping and Lu, Yi and Xu, Yanyan},
  journal={Proceedings of the VLDB Endowment},
  volume={7},
  number={9},
  pages={721--732},
  year={2014},
  publisher={VLDB Endowment}
}

@inproceedings{Danda,
	abstract = {Temporal graphs equip their directed edges with a departure time and a duration, which allows to model a surprisingly high number of real-world problems. Recently, Wu et al. have shown that a fastest path in a temporal graph G from a given vertex s to a vertex z can be computed in near-linear time, where a fastest path is one that minimizes the arrival time at z minus the departure time at s.},
	address = {Cham},
	author = {Danda, Umesh Sandeep and Ramakrishna, G. and Schmidt, Jens M. and Srikanth, M.},
	booktitle = {WALCOM: Algorithms and Computation},
	editor = {Uehara, Ryuhei and Hong, Seok-Hee and Nandy, Subhas C.},
	isbn = {978-3-030-68211-8},
	pages = {40--51},
	publisher = {Springer International Publishing},
	title = {On Short Fastest Paths in Temporal Graphs},
	year = {2021}
}

@article{graph-streams,
author = {McGregor, Andrew},
title = {Graph stream algorithms: a survey},
year = {2014},
issue_date = {March 2014},
publisher = {Association for Computing Machinery},
address = {New York, NY, USA},
volume = {43},
number = {1},
issn = {0163-5808},
url = {https://doi.org/10.1145/2627692.2627694},
doi = {10.1145/2627692.2627694},
abstract = {Over the last decade, there has been considerable interest in designing algorithms for processing massive graphs in the data stream model. The original motivation was two-fold: a) in many applications, the dynamic graphs that arise are too large to be stored in the main memory of a single machine and b) considering graph problems yields new insights into the complexity of stream computation. However, the techniques developed in this area are now finding applications in other areas including data structures for dynamic graphs, approximation algorithms, and distributed and parallel computation. We survey the state-of-the-art results; identify general techniques; and highlight some simple algorithms that illustrate basic ideas.},
journal = {SIGMOD Rec.},
month = may,
pages = {9–20},
numpages = {12}
}

@article{ferone2017shortest,
  title={Shortest paths on dynamic graphs: a survey},
  author={Ferone, Daniele and Festa, Paola and Napoletano, Antonio and Pastore, Tommaso},
  journal={Pesquisa Operacional},
  volume={37},
  number={3},
  pages={487--508},
  year={2017},
  publisher={SciELO Brasil}
}

@article{dynamicNetworksSurvey,
author = {Kuhn, Fabian and Oshman, Rotem},
title = {Dynamic networks: models and algorithms},
year = {2011},
issue_date = {March 2011},
publisher = {Association for Computing Machinery},
address = {New York, NY, USA},
volume = {42},
number = {1},
issn = {0163-5700},
url = {https://doi.org/10.1145/1959045.1959064},
doi = {10.1145/1959045.1959064},
journal = {SIGACT News},
month = mar,
pages = {82–96},
numpages = {15}
}

@article{casteigts2012time,
  title={Time-varying graphs and dynamic networks},
  author={Casteigts, Arnaud and Flocchini, Paola and Quattrociocchi, Walter and Santoro, Nicola},
  journal={International Journal of Parallel, Emergent and Distributed Systems},
  volume={27},
  number={5},
  pages={387--408},
  year={2012},
  publisher={Taylor \& Francis}
}

@article{Casteigts2021FindingTP,
  title={Finding Temporal Paths Under Waiting Time Constraints},
  author={Arnaud Casteigts and Anne-Sophie Himmel and Hendrik Molter and Philipp Zschoche},
  journal={Algorithmica},
  year={2021},
  volume={83},
  pages={2754 - 2802},
  url={https://api.semanticscholar.org/CorpusID:227275612}
}

@article{Zschoche2017TheCO,
  title={The Complexity of Finding Small Separators in Temporal Graphs},
  author={Philipp Zschoche and Till Fluschnik and Hendrik Molter and Rolf Niedermeier},
  journal={J. Comput. Syst. Sci.},
  year={2017},
  volume={107},
  pages={72-92},
  url={https://api.semanticscholar.org/CorpusID:50783124}
}

@article{Enright2018DeletingET,
  title={Deleting edges to restrict the size of an epidemic in temporal networks},
  author={Jessica A. Enright and Kitty Meeks and George B. Mertzios and Viktor Zamaraev},
  journal={J. Comput. Syst. Sci.},
  year={2018},
  volume={119},
  pages={60-77},
  url={https://api.semanticscholar.org/CorpusID:233187209}
}

@InProceedings{angrick_et_al:LIPIcs.ESA.2024.11,
  author =	{Angrick, Sebastian and Bals, Ben and Friedrich, Tobias and Gawendowicz, Hans and Hastrich, Niko and Klodt, Nicolas and Lenzner, Pascal and Schmidt, Jonas and Skretas, George and Wells, Armin},
  title =	{{How to Reduce Temporal Cliques to Find Sparse Spanners}},
  booktitle =	{32nd Annual European Symposium on Algorithms (ESA 2024)},
  pages =	{11:1--11:15},
  ISBN =	{978-3-95977-338-6},
  ISSN =	{1868-8969},
  year =	{2024},
  volume =	{308},
  address =	{Dagstuhl, Germany},
  doi =		{10.4230/LIPIcs.ESA.2024.11},
  annote =	{Keywords: Temporal Graphs, temporal Clique, temporal Spanner, Reachability, Graph Connectivity, Graph Sparsification}
}

@inproceedings{axiotis_size_approx_spanner,
  author = {Kyriakos Axiotis and Dimitris Fotakis},
  title = {{On the Size and the Approximability of Minimum Temporally Connected
           Subgraphs}},
  booktitle = {International Colloquium on Automata, Languages, and
               Programming (ICALP 2016)},
  pages = {149:1--149:14},
  year = {2016},
  doi = {10.4230/LIPIcs.ICALP.2016.149},
}

@article{casteigts_fireworks_jour,
  title = {Temporal cliques admit sparse spanners},
  journal = {Journal of Computer and System Sciences},
  volume = {121},
  pages = {1-17},
  year = {2021},
  issn = {0022-0000},
  doi = {https://doi.org/10.1016/j.jcss.2021.04.004},
  author = {Arnaud Casteigts and Joseph G. Peters and Jason Schoeters},
}

@inproceedings{Fchsle2022TemporalCC,
  title={Temporal Connectivity: Coping with Foreseen and Unforeseen Delays},
  author={Eugen F{\"u}chsle and Hendrik Molter and Rolf Niedermeier and Malte Renken},
  booktitle={Symposium on Algorithmic Foundations of Dynamic Networks},
  year={2022},
  url={https://api.semanticscholar.org/CorpusID:245906230}
}

@article{Gupta2021SpreadAD,
  title={Spread and defend infection in graphs},
  author={Arya Tanmay Gupta},
  journal={ArXiv},
  year={2021},
  volume={abs/2101.10814},
  url={https://api.semanticscholar.org/CorpusID:231709427}
}

@inproceedings{DBLP:conf/fun/HandEM22,
  author       = {Samuel D. Hand and
                  Jessica A. Enright and
                  Kitty Meeks},
  editor       = {Pierre Fraigniaud and
                  Yushi Uno},
  title        = {Making Life More Confusing for Firefighters},
  booktitle    = {11th International Conference on Fun with Algorithms, {FUN} 2022,
                  May 30 to June 3, 2022, Island of Favignana, Sicily, Italy},
  series       = {LIPIcs},
  volume       = {226},
  pages        = {15:1--15:15},
  publisher    = {Schloss Dagstuhl - Leibniz-Zentrum f{\"{u}}r Informatik},
  year         = {2022},
  url          = {https://doi.org/10.4230/LIPIcs.FUN.2022.15},
  doi          = {10.4230/LIPICS.FUN.2022.15},
  timestamp    = {Wed, 21 Aug 2024 22:46:00 +0200},
  biburl       = {https://dblp.org/rec/conf/fun/HandEM22.bib},
  bibsource    = {dblp computer science bibliography, https://dblp.org}
}

@inproceedings{Erlebach2015OnTG,
  title={On Temporal Graph Exploration},
  author={Thomas Wilhelm Erlebach and Michael Hoffmann and Frank Kammer},
  booktitle={International Colloquium on Automata, Languages and Programming},
  year={2015},
  url={https://api.semanticscholar.org/CorpusID:11117559}
}

@article{Akrida2018TheTE,
  title={The temporal explorer who returns to the base},
  author={Eleni C. Akrida and George B. Mertzios and Paul G. Spirakis},
  journal={J. Comput. Syst. Sci.},
  year={2018},
  volume={120},
  pages={179-193},
  url={https://api.semanticscholar.org/CorpusID:44048192}
}

@article{Bumpus2021EdgeEO,
  title={Edge Exploration of Temporal Graphs},
  author={Benjamin Merlin Bumpus and Kitty Meeks},
  journal={Algorithmica},
  year={2021},
  volume={85},
  pages={688-716},
  url={https://api.semanticscholar.org/CorpusID:232168906}
}

@article{KUDRYASHOV2021466,
title = {Analytical features of the SIR model and their applications to COVID-19},
journal = {Applied Mathematical Modelling},
volume = {90},
pages = {466-473},
year = {2021},
issn = {0307-904X},
doi = {https://doi.org/10.1016/j.apm.2020.08.057},
url = {https://www.sciencedirect.com/science/article/pii/S0307904X20304984},
author = {Nikolay A. Kudryashov and Mikhail A. Chmykhov and Michael Vigdorowitsch},
keywords = {SIR-model, Nonlinear differential equation, First integral, Infection, Solution, Coronavirus},
abstract = {A classic two-parameter epidemiological SIR-model of the coronavirus propagation is considered. The first integrals of the system of non-linear equations are obtained. The Painlevé test shows that the system of equations is not integrable in the general case. However, the general solution is obtained in quadrature as an inverse time-function. Using the first integrals of the system of equations, analytical dependencies for the number of infected patients I(t) and that of recovered patients R(t) on the number of susceptible to infection S(t) are obtained. A particular attention is paid to interrelation of I(t) and R(t) both depending on α/β, where α is the contact rate in the community and β is the intensity of recovery/decease of patients. It is demonstrated that the data on particular morbidity waves in Hubei (China), Italy, Austria, South Korea, Moscow (Russia) as well some Australian territories are satisfactorily described by the expressions obtained for I(R). The variability of parameter N having been traditionally considered as a static population size is discussed.}
}

@article{COOPER2020110057,
title = {A SIR model assumption for the spread of COVID-19 in different communities},
journal = {Chaos, Solitons \& Fractals},
volume = {139},
pages = {110057},
year = {2020},
issn = {0960-0779},
doi = {https://doi.org/10.1016/j.chaos.2020.110057},
url = {https://www.sciencedirect.com/science/article/pii/S0960077920304549},
author = {Ian Cooper and Argha Mondal and Chris G. Antonopoulos},
keywords = {COVID-19, pandemic, infectious disease, virus spreading, SIR model, forecasting},
abstract = {In this paper, we study the effectiveness of the modelling approach on the pandemic due to the spreading of the novel COVID-19 disease and develop a susceptible-infected-removed (SIR) model that provides a theoretical framework to investigate its spread within a community. Here, the model is based upon the well-known susceptible-infected-removed (SIR) model with the difference that a total population is not defined or kept constant per se and the number of susceptible individuals does not decline monotonically. To the contrary, as we show herein, it can be increased in surge periods! In particular, we investigate the time evolution of different populations and monitor diverse significant parameters for the spread of the disease in various communities, represented by China, South Korea, India, Australia, USA, Italy and the state of Texas in the USA. The SIR model can provide us with insights and predictions of the spread of the virus in communities that the recorded data alone cannot. Our work shows the importance of modelling the spread of COVID-19 by the SIR model that we propose here, as it can help to assess the impact of the disease by offering valuable predictions. Our analysis takes into account data from January to June, 2020, the period that contains the data before and during the implementation of strict and control measures. We propose predictions on various parameters related to the spread of COVID-19 and on the number of susceptible, infected and removed populations until September 2020. By comparing the recorded data with the data from our modelling approaches, we deduce that the spread of COVID-19 can be under control in all communities considered, if proper restrictions and strong policies are implemented to control the infection rates early from the spread of the disease.}
}

@article{montagnon_stochastic_2019,
	title = {A stochastic {SIR} model on a graph with epidemiological and population dynamics occurring over the same time scale},
	volume = {79},
	issn = {1432-1416},
	url = {https://doi.org/10.1007/s00285-019-01349-0},
	doi = {10.1007/s00285-019-01349-0},
	abstract = {We define and study an open stochastic SIR (Susceptible–Infected–Removed) model on a graph in order to describe the spread of an epidemic on a cattle trade network with epidemiological and demographic dynamics occurring over the same time scale. Population transition intensities are assumed to be density-dependent with a constant component, the amplitude of which determines the overall scale of the population process. Standard branching approximation results for the epidemic process are first given, along with a numerical computation method for the probability of a major epidemic outbreak. This procedure is illustrated using real data on trade-related cattle movements from a densely populated livestock farming region in western France (Finistère) and epidemiological parameters corresponding to an infectious epizootic disease. Then we exhibit an exponential lower bound for the extinction time and the total size of the epidemic in the stable endemic case as a scaling parameter goes to infinity using results inspired by the Freidlin–Wentzell theory of large deviations from a dynamical system.},
	number = {1},
	journal = {Journal of Mathematical Biology},
	author = {Montagnon, Pierre},
	month = jul,
	year = {2019},
	pages = {31--62},
}

@article{JI20145067,
title = {Threshold behaviour of a stochastic SIR model},
journal = {Applied Mathematical Modelling},
volume = {38},
number = {21},
pages = {5067-5079},
year = {2014},
issn = {0307-904X},
doi = {https://doi.org/10.1016/j.apm.2014.03.037},
url = {https://www.sciencedirect.com/science/article/pii/S0307904X14001401},
author = {Chunyan Ji and Daqing Jiang},
keywords = {Stochastic, Epidemic models, The basic reproduction number, Extinction, Persistence},
abstract = {In this paper, we investigate the threshold behaviour of a susceptible-infected-recovered (SIR) epidemic model with stochastic perturbation. When the noise is small, we show that the threshold determines the extinction and persistence of the epidemic. Compared with the corresponding deterministic system, this value is affected by white noise, which is less than the basic reproduction number of the deterministic system. On the other hand, we obtain that the large noise will also suppress the epidemic to prevail, which never happens in the deterministic system. These results are illustrated by computer simulations.}
}

@article{XUE2017434,
title = {Law of large numbers for the SIR model with random vertex weights on Erdős–Rényi graph},
journal = {Physica A: Statistical Mechanics and its Applications},
volume = {486},
pages = {434-445},
year = {2017},
issn = {0378-4371},
doi = {https://doi.org/10.1016/j.physa.2017.04.096},
url = {https://www.sciencedirect.com/science/article/pii/S0378437117304004},
author = {Xiaofeng Xue},
keywords = {Law of large numbers, SIR model, Vertex weight, Erdős–Rényi graph},
abstract = {In this paper we are concerned with the SIR model with random vertex weights on Erdős–Rényi graph G(n,p). The Erdős–Rényi graph G(n,p) is generated from the complete graph Cn with n vertices through independently deleting each edge with probability (1−p). We assign i. i. d. copies of a positive r. v. ρ on each vertex as the vertex weights. For the SIR model, each vertex is in one of the three states ‘susceptible’, ‘infective’ and ‘removed’. An infective vertex infects a given susceptible neighbor at rate proportional to the production of the weights of these two vertices. An infective vertex becomes removed at a constant rate. A removed vertex will never be infected again. We assume that at t=0 there is no removed vertex and the number of infective vertices follows a Bernoulli distribution B(n,θ). Our main result is a law of large numbers of the model. We give two deterministic functions HS(ψt),HV(ψt) for t≥0 and show that for any t≥0, HS(ψt) is the limit proportion of susceptible vertices and HV(ψt) is the limit of the mean capability of an infective vertex to infect a given susceptible neighbor at moment t as n grows to infinity.}
}

@article{wang_predicting_2016,
	title = {Predicting the epidemic threshold of the susceptible-infected-recovered model},
	volume = {6},
	issn = {2045-2322},
	url = {https://doi.org/10.1038/srep24676},
	doi = {10.1038/srep24676},
	abstract = {Researchers have developed several theoretical methods for predicting epidemic thresholds, including the mean-field like (MFL) method, the quenched mean-field (QMF) method and the dynamical message passing (DMP) method. When these methods are applied to predict epidemic threshold they often produce differing results and their relative levels of accuracy are still unknown. We systematically analyze these two issues—relationships among differing results and levels of accuracy—by studying the susceptible-infected-recovered (SIR) model on uncorrelated configuration networks and a group of 56 real-world networks. In uncorrelated configuration networks the MFL and DMP methods yield identical predictions that are larger and more accurate than the prediction generated by the QMF method. As for the 56 real-world networks, the epidemic threshold obtained by the DMP method is more likely to reach the accurate epidemic threshold because it incorporates full network topology information and some dynamical correlations. We find that in most of the networks with positive degree-degree correlations, an eigenvector localized on the high k-core nodes, or a high level of clustering, the epidemic threshold predicted by the MFL method, which uses the degree distribution as the only input information, performs better than the other two methods.},
	number = {1},
	journal = {Scientific Reports},
	author = {Wang, Wei and Liu, Quan-Hui and Zhong, Lin-Feng and Tang, Ming and Gao, Hui and Stanley, H. Eugene},
	month = apr,
	year = {2016},
	pages = {24676},
}

@inproceedings{venkataraman2005new,
  title={New streaming algorithms for fast detection of superspreaders.},
  author={Venkataraman, Shobha and Song, Dawn Xiaodong and Gibbons, Phillip B and Blum, Avrim},
  booktitle={NDSS},
  volume={5},
  pages={149--166},
  year={2005}
}

@article{maji2021identifying,
  title={Identifying and ranking super spreaders in real world complex networks without influence overlap},
  author={Maji, Giridhar and Dutta, Animesh and Malta, Mariana Curado and Sen, Soumya},
  journal={Expert Systems with Applications},
  volume={179},
  pages={115061},
  year={2021},
  publisher={Elsevier}
}

@inproceedings{Amin2014LearningFC,
  title={Learning from Contagion (Without Timestamps)},
  author={Kareem Amin and Hoda Heidari and Michael Kearns},
  booktitle={International Conference on Machine Learning},
  year={2014},
  url={https://api.semanticscholar.org/CorpusID:14079724}
}

@article{pinto2012locating,
  title={Locating the source of diffusion in large-scale networks},
  author={Pinto, Pedro C and Thiran, Patrick and Vetterli, Martin},
  journal={Physical review letters},
  volume={109},
  number={6},
  pages={068702},
  year={2012},
  publisher={APS}
}
\fi

    % Following are the files and commands for the bibliography and the author’s publications.
    \pagestyle{plain}

    \renewcommand*{\bibfont}{\small}
    \printbibheading
    \addcontentsline{toc}{chapter}{Bibliography}
    \printbibliography[heading = none]

    \addchap{Declaration of Authorship}
    I hereby declare that this thesis is my own unaided work. All direct or indirect sources used are acknowledged as references.\\[6 ex]

\begin{flushleft}
    Potsdam, \today
    \hspace*{2 em}
    \raisebox{-0.9\baselineskip}
    {
        \begin{tabular}{p{5 cm}}
            \hline
            \centering\footnotesize\printAuthor
        \end{tabular}
    }
\end{flushleft}

\end{document}